\documentclass[submission,copyright,creativecommons]{eptcs}
\usepackage{breakurl}             
\usepackage{underscore}           
\usepackage{etex}
 \usepackage{appendix}
\usepackage{stackrel}
\usepackage{rotating}
 \usepackage[T1]{fontenc}
\usepackage{graphicx}
\usepackage{tabularx}
\usepackage{amsfonts}
\usepackage{cmll}
\usepackage{amsthm}
\usepackage{amsmath}
\usepackage{proof}
\usepackage{mathdots}
\usepackage{wrapfig}
\usepackage{xcolor}
\usepackage{amssymb}
\usepackage{color}
\usepackage{pigpen}
\usepackage{adjustbox}
\usepackage{tikz}
\usepackage{mathrsfs}
\usepackage[all]{xy}
\usepackage{lscape}
\usepackage{stmaryrd}
\usepackage{mathdots}
\usepackage{bbm}

\usepackage{bussproofs}
\usepackage{subcaption}

\EnableBpAbbreviations

\usetikzlibrary{cd}

\newtheorem{example}{Example}
\newtheorem{definition}{Definition}

\newtheorem{remark}{Remark}
\newtheorem{theorem}{Theorem}

\newtheorem{lemma}{Lemma}

\newtheorem{proposition}{Proposition}

\numberwithin{equation}{section}

\makeatletter

\makeatother

\newtheorem{corollary}[theorem]{Corollary}

\newcommand{\B}[1]{\mathbf{#1}}
\newcommand{\TT}[1]{\mathtt{#1}}
\newcommand{\D}[1]{\mathscr{#1}}

\newcommand{\C}[1]{\mathcal{#1}}
\newcommand{\BB}[1]{\mathbb{#1}}

\newcommand{\OV}[1]{\overline{#1}}

\newcommand{\To}{\Rightarrow}

\newcommand{\po}{\ar@{}[dr]|{\text{\pigpenfont R}}}
\newcommand{\pb}{\ar@{}[dr]|{\text{\pigpenfont J}}}

\definecolor{color0}{HTML}{4682B4}


\title{Proof Nets, Coends and the Yoneda Isomorphism}
\author{Paolo Pistone
\institute{Wilhelm-Schickard-Institut \\
Eberhard Karls Universit\"at T\"ubingen}
\email{paolo.pistone@uni-tuebingen.de}}

\begin{document}
\maketitle

\begin{abstract}
Proof nets provide permutation-independent representations of proofs and are used to investigate coherence problems for  monoidal categories. We investigate a coherence problem concerning Second Order Multiplicative Linear Logic ($\mathsf{MLL2}$), that is, the one of characterizing the equivalence over proofs generated by the interpretation of quantifiers by means of ends and coends.

We provide a compact representation of proof nets for a fragment of $\mathsf{MLL2}$ related to the Yoneda isomorphism. By adapting the ``rewiring approach'' used in coherence results for $^{*}$-autonomous categories, we define an equivalence relation over proof nets called ``rewitnessing''. We prove that this relation characterizes, in this fragment, the equivalence generated by coends.

\end{abstract}

\section{Introduction}

Proof nets are usually investigated as canonical representations of proofs. For the proof-theorist, the adjective ``canonical'' indicates a representation of proofs insensitive to admissible permutations of rules; for the category-theorist, it indicates a faithful representation of arrows in free monoidal categories (e.g.\ $^{*}$-autonomous categories), by which coherence results can be obtained.

This twofold approach has been developed extensively in the case of Multiplicative Linear Logic (see for instance \cite{Blute1993,Blute1996}). The use of $\mathsf{MLL}$ proof nets to investigate coherence problems relies on the correspondence between proof nets and a particular class of dinatural transformations (see \cite{Blute1993}). As dinatural transformations provide a well-known interpretation of parametric polymorphism (see \cite{Bainbridge1990,Girard1992}), it is natural to consider the extension of this correspondence to second order Multiplicative Linear Logic $\mathsf{MLL2}$. This means investigating the ``coherence problem'' generated by the interpretation of quantifiers as ends/coends, that is, to look for a faithful proof net representation of coends over a $^{*}$-autonomous category.

The main difficulty of this extension is that, as is well-known, dinaturality does not scale to second order (e.g. System $\mathsf F$, see \cite{Delatail2009}): the dinatural interpretation of proofs generates an equivalence over proofs which strictly extends the equivalence generated by $\beta$ and $\eta$ conversions. 
In particular, coends induce ``generalized permutations'' of rules (\cite{StudiaLogica}) to which neither System $\mathsf F$ proofs nor standard proof nets for $\mathsf{MLL2}$ are insensitive. For instance, the interpretation of quantifiers as ends/coends (whose definition is recalled in appendix \ref{appB}) equates the distinct System $\mathsf F$ derivations in fig. \ref{coend1} as well as the distinct proof nets in fig. \ref{nocoend}. From these examples it can be seen that such generalized permutations do not preserve the witnesses of existential quantification (or, equivalently, of the elimination of universal quantification). 

Several well-known issues in the System $\mathsf F$ representation of categorial structures can be related to this phenomenon. For instance, the failure of universality for the ``Russell-Prawitz'' translation of connectives (e.g.\ the failure of the isomorphism $A\otimes B\simeq \forall X((A\multimap B\multimap X)\multimap X)$), and the failure of initiality for the System $\mathsf F$ representation of initial algebras (i.e.\ the failure of the isomorphism $\mu X.T(X)\simeq \forall X((T(X)\To X)\To X)$). In such cases, the failure is solved by considering proofs modulo the equivalence induced by dinaturality (see \cite{Plotkin1993, Hasegawa2009}). 
All these can be seen as instances of a more general problem, namely the fact that the \emph{Yoneda isomorphism}
$ Nat(\BB C(a,x),F)\simeq F(a)$ corresponds, in the language of $\mathsf{MLL2}$, to a series of logical equivalences of the form
 $\forall X((A\multimap X)\multimap F[X]))\simeq F[A/X]$ which fail to be isomorphisms of types. In this paper we investigate the possibility to provide a faithful representation of the Yoneda isomorphism, and more generally of ends and coends, by means of $\mathsf{MLL2}$ proof nets.

\begin{figure}
\begin{subfigure}{0.48\textwidth}
\adjustbox{scale=0.7,center}{$
\begin{matrix}
\AXC{$\forall X (X\to X)$}
\UIC{$A\to A$}
\AXC{$[A]^{n}$}
\BIC{$A$}
\noLine
\UIC{$f$}
\noLine
\UIC{$B$}
\RL{$n$}
\UIC{$A\to B$}
\DP
& \ \not\simeq \ &
\AXC{$\forall X (X\to X)$}
\UIC{$B\to B$}
\AXC{$[A]^{n}$}
\noLine
\UIC{$f$}
\noLine
\UIC{$B$}
\BIC{$B$}
\RL{$n$}
\UIC{$A\to B$}
\DP \\
 \ & & \  
\end{matrix}
$}
\caption{Failure of dinaturality in System $\mathsf F$}
\label{coend1}
\end{subfigure}
\begin{subfigure}{0.54\textwidth}
\adjustbox{scale=0.7, center}{$
\begin{tikzpicture}[baseline=-2ex]
\node(a) at (0,0) {$A$};
\node(a') at (1,0) {$A^{\bot}$};
\node(t) at (0.5,-0.75) {$\otimes$};
\node(e) at (0.5,-1.5) {$\exists$};
\node(b) at (2,-0.75) {$A^{\bot}$};
\node(b') at (3,-0.75) {$B$};
\node(p) at (2.5,-1.5) {$\parr$};

\draw[thick] (a) to (t);
\draw[thick] (a') to (t);
\draw[thick] (e) to (t);
\draw[thick] (b) to (p);
\draw[thick] (b') to (p);

\node(f) at (2,0.7) [draw,thick,minimum width=0.9cm,minimum height=0.25cm, rounded corners=3pt] {$f$};
\draw[thick] (a) to [bend left=85] (b);
\draw[thick] (a') to [bend left=45] (f);
\draw[thick] (f) to [bend left=45] (b');

\end{tikzpicture}
\qquad\not\simeq\qquad
\begin{tikzpicture}[baseline=-2ex]
\node(a) at (0,0) {$B$};
\node(a') at (1,0) {$B^{\bot}$};
\node(t) at (0.5,-0.75) {$\otimes$};
\node(e) at (0.5,-1.5) {$\exists$};
\node(b) at (2,-0.75) {$A^{\bot}$};
\node(b') at (3,-0.75) {$B$};
\node(p) at (2.5,-1.5) {$\parr$};

\draw[thick] (a) to (t);
\draw[thick] (a') to (t);
\draw[thick] (e) to (t);
\draw[thick] (b) to (p);
\draw[thick] (b') to (p);

\node(f) at (1,1) [draw,thick,minimum width=0.9cm,minimum height=0.25cm, rounded corners=3pt] {$f$};
\draw[thick] (a') to [bend left=85] (b');
\draw[thick] (a) to [bend left=45] (f);
\draw[thick] (f) to [bend left=45] (b);

\end{tikzpicture}
$}
\caption{Failure of dinaturality for proof nets}
\label{nocoend}
\end{subfigure}
\caption{Failure of dinaturality in System $\mathsf F$ and $\mathsf{MLL2}$}
\end{figure}
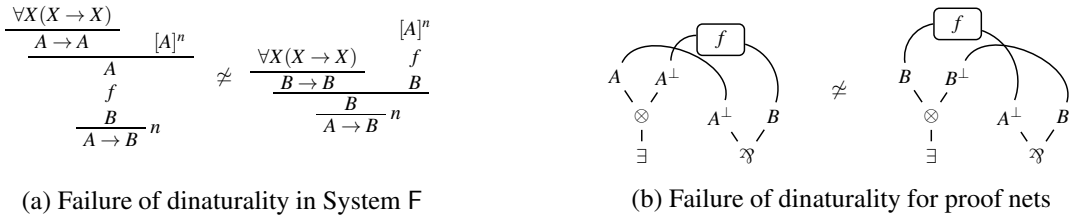


%


As a consequence of the isomorphism $\forall X(X\multimap X)\simeq \B 1$, which is a particular instance of the Yoneda isomorphism just recalled, the proof net representation of quantifiers as ends and coends must include a faithful representation of multiplicative units. From this we can deduce some \emph{a priori} limitations to our enterprise: it is well-known that no canonical representation of $\mathsf{MLL}$ with multiplicative units can have both a tractable correctness criterion and a tractable translation from sequent calculus (\cite{Heij2014}). 
However, in usual approaches to multiplicative units proof nets are considered modulo an equivalence relation called \emph{rewiring} (\cite{Trimble,Blute1996,Hughes2012}), which provides a partial solution to this problem. The ``rewiring approach'' (\cite{Hughes2012}) allows to circumvent the complexity of checking arrows equivalence in the free $^{*}$-autonomous category by isolating the complex part into a geometrically intuitive equivalence relation.

We define a compact representation of proof nets (called $\exists$-linkings) for the fragment of $\mathsf{MLL2}$ which adapts the rewiring technique to second order quantification. We consider the system $\mathsf{MLL2}_{\C Y}$, in which quantification $\forall XA$ is restricted to ``Yoneda formulas'', i.e. formulas of the form $\forall X((\bigotimes_{i}^{n}C_{i}\multimap X)\multimap D[X])$. This fragment contains the multiplicative ``Russell-Prawitz'' formulas as well as the translation of multiplicative units. 
In our approach rewiring is replaced by \emph{rewitnessing}, an equivalence relation which allows to rename the witnesses of existential quantifiers. This approach is related to rewiring in the sense that, when restricted to the second order translation of units, $\exists$-linkings correspond exactly to the ``lax linkings'' in \cite{Hughes2012}.

Our main result (theorem \ref{equiv}) is that the equivalence over proofs generated by coends coincides exactly with the rewitnessing equivalence over $\exists$-linkings. More precisely, we define an equivalence $\simeq_{\varepsilon}$ over standard $\mathsf{MLL2}$ proof nets, where two proof nets are equivalent when their interpretations in any dinatural model coincide, and we show that, within the fragment $\mathsf{MLL2}_{\C Y}$, $\pi \simeq_{\varepsilon}\pi'$ holds iff the associated $\exists$-linkings $\ell_{\pi}$ and $\ell_{\pi'}$ are equivalent up to rewitnessing. To prove this, we construct an isomorphism between the category generated by $\mathsf{MLL2}$ proof nets modulo the equivalence induced by dinaturality and the category generated by $\exists$-linking modulo rewitnessing. The proof that this is an isomorphism will essentially rely on the ``true'' Yoneda isomorphism.
These results imply that $\exists$-linkings form a $^{*}$-autonomous category in which $\forall X(X\multimap X)$ is the tensor unit and provide a faithful representation of coends.

In the category of $\exists$-linkings the Yoneda isomorphism is a true isomorphism and the ``Russell-Prawitz'' isomorphisms like $A\otimes B\simeq \forall X((A\multimap B\multimap X)\multimap X)$ hold. The representation of initial algebras falls outside the scope of the fragment $\mathsf{MLL}_{\C Y}$, due to the more complex shape of the formulas involved. However, following the ideas in \cite{Uustalu2011}, a generalization of the approach here presented might yield similar results for the representation of initial algebras.

\paragraph{Related work}
Dinaturality is a well-investigated property of System $\mathsf F$ and is usually related to parametric polymorphism (see \cite{Bainbridge1990, Plotkin1993}). The connections between dinaturality, coherence and proof nets are well-investigated in the case of $\mathsf{MLL}$, with or without units (\cite{Blute1991, Blute1993, Blute1996, Lamarche2004, Hughes2012, Heij2016, Mellies2012, Hughes2017}). 
An extensive literature exists on coends in monoidal categories (see \cite{Loregian} for a survey). String diagram representations of some coends can be found in the literature on Hopf algebras and their application to conformal field theory (\cite{Kerler2001, Fuchs2012}). Such coends are all of the restricted form considered in this paper and their representation seems comparable to the one here proposed. 
A different approach to quantifiers as ends/coends over a symmetric monoidal closed category appears in \cite{Mellies2016}, through a bifibrational reformulation of the Lawvere's presheaf hyperdoctrine in the 2-category of distributors. 

The universality problem for the ``Russell-Prawitz'' translation is related to the \emph{instantiation overflow} property (\cite{Ferreira2013}), by which one can transform the System $\mathsf F$ proofs obtained by this translation into proofs in $\mathsf F_{at}$ or \emph{atomic} System $\mathsf F$, which have the desired properties (see \cite{Ferreira2009}). In \cite{PistoneInsta} is shown that the atomized proofs are equivalent to the original ones modulo dinaturality.  $\exists$-linkings provide a very simple approach to instantiation overflow, to be investigated in the future, as the transformation from $\mathsf F$ to $\mathsf F_{at}$ corresponds to rewitnessing.

The representation of proof nets here adopted is inspired from results on $\mathsf{MLL}$ with units (\cite{Trimble,Blute1996,Hughes2012}) and on $\mathsf{MLL1}$ (\cite{Hughes2018}). Proof nets for first-order and second order quantifers were first conceived by means of boxes (\cite{linear}). Later, Girard proposed two distinct boxes-free formalisms (in \cite{quanti1987, quanti1990} for $\mathsf{MLL1}$ but extendable to $\mathsf{MLL2}$, see \cite{tortoraphd}), the second of which is referred here as ``Girard nets''. Different refinements of proof nets for $\mathsf{MLL1}$ and $\mathsf{MLL2}$ have been proposed (\cite{MKinley, Hughes2018} for $\mathsf{MLL1}$ and \cite{Strass2009} for $\mathsf{MLL2}$) to investigate variable dependency issues related to Herbrand theorem and unification, which are not considered here. 

\section{Girard nets and their interpretation in dinatural models}

We let $\C L^{2}$ be the language generated by a countable set of variables $X,Y,Z,\dots \in\TT{Var}$ and their negations $X^{\bot},Y^{\bot},Z^{\bot}, \dots$ and the connectives $\otimes, \parr, \forall, \exists$. Negation is extended in an obvious way into an equivalence relation over formulas. 
By sequents $\Gamma,\Delta, \dots$ we indicate finite multisets of formulas. A sequent $\Gamma$ is \emph{clean} when no variable occurs both free and bound in $\Gamma$ and any variable in $\Gamma$ is bound by at most one $\forall $ or $\exists$ connective. 

By $\mathsf{MLL2}$ we indicate the standard sequent calculus over $\C L^{2}$.
\cite{quanti1990} describes proof nets for first-order $\mathsf{MLL}$. Both the description of proof structures and the correctness criterion can be straightforwardly turned into a definition of proof structures and proof nets for $\mathsf{MLL2}$ (see for instance \cite{tortoraphd}). We indicate the latter as \emph{Girard proof structures} and \emph{Girard nets} (shortly, $G$-proof structures and $G$-nets\footnote{In \cite{quanti1990} the definition of proof structures is based on two conditions: (1) that any $\forall $ link has a distinct eigenvariable and (2) that the conclusions of a proof structures have no free variable (in particular, new constants $\OV x$ are introduced to eliminate free variables). Moreover, in the definition of the correctness criterion any $\forall$-link of eigenvariable $X$ can \emph{jump} on any formula in which $X$ occurs free. 
In \cite{Hughes2018} conditions (1) and (2) are replaced by the equivalent condition that the conclusions of the proof structure plus the witnesses of existential links must form a \emph{clean} sequent and the correctness criterion is modified by demanding that a $\forall$-link of eigenvariable $X$ can \emph{jump} on any $\exists$-link whose witness formula contains free occurrences of $X$. Here we will consider this formulation.\label{jumps}}). 
 We let $\BB G$ indicate the \emph{category of $G$-nets}, whose objects are the types of $\mathsf{MLL2}$ and where $\BB G(A,B)$ is the set of cut-free $G$-nets of conclusions $A^{\bot},B$ (with composition given by cut-elimination).

%
%


Some useful definitions and properties of $^{*}$-autonomous categories and coends can be found in appendix \ref{appB}.
It is well-known (see \cite{Lamarche2006}) that, if we let $\BB P$ be the category of $\mathsf{MLL}$ proof nets and $\BB C$ be any (strict) $^{*}$-autonomous category, then any map $\varphi: \TT{Var}\to Ob_{\BB C}$ generates a functor $\Phi: \BB P \to \BB C$. 
We will now extend this result to $\mathsf{MLL2}$ by considering \emph{dinatural models}, that is, models in which $\mathsf{MLL2}$ proofs are interpreted as {dinatural transformations} \cite{Bainbridge1990}. 
We show how any $G$-net can be interpreted in a dinatural model over a $^{*}$-autonomous category $\BB C$, and we deduce that any map $\varphi: \TT{Var} \to Ob_{\BB C}$ generates a functor $\Phi: \BB G\to \BB C$.


%

It is well-known that dinatural transformations do not compose. The standard approach to interpret second order proofs (see \cite{Bainbridge1990}) is thus to restrict to a class of composable dinatural transformations. In order to interpret quantifiers one considers then \emph{relativized} ends/coends, i.e.\ wedges/co-wedges (see appendix \ref{appB}) which are universal among the class of dinatural transformations in the model.

\begin{definition}[dinatural model]\label{dinamodel}
Let $\BB C$ to be a (strict) $^{*}$-autonomous category $\BB C$. A \emph{dinatural model over $\BB C$} is a category $\C F$ such that 
\begin{itemize}
\item the objects of $\C F$ are multi-variant functors over $\BB C$, including projections of any arity and 
the constant functor $\B 1^{\BB C}$, and closed with respect to 
$\otimes$ and $^{*}$;
\item for all objects $F,G$, $\C F(F,G)$ is a set of dinatural transformations from $F$ to $G$, so that
$\C F$ is $^{*}$-autonomous with unit $ \B 1^{\BB C}$, monoidal product $\otimes$ and involution $^{*}$;
\item the objects of $\C F$ contain all ends and coends relativized to arrows in $\C F$.
\end{itemize}

\end{definition}

The definition above can be recast in the standard fibrational setting of second order models (see \cite{Seely1990}) by using properties of ends and coends.
Two dinatural models are suggested in \cite{Blute1993} and \cite{lauchli}. Moreover, a \emph{free} dinatural model is obtained by quotienting the syntactic model of $\mathsf{MLL2}$ under the congruence generated by all equations expressing the fact that quantifiers correspond to wedges and co-wedges.

In the rest of this section we suppose given a dinatural model $\C F$ over a (strict) $^{*}$-autonomous category $\BB C$ .
Any formula $A\in \C L^{2}$ whose free variables are within $X_{1},\dots, X_{n}$ can be interpreted as a functor $A^{\BB C, \C F}: (\BB C^{op}\times \BB C)^{n}\to \BB C$ in $\C F$ by letting
$$
\begin{matrix}
X_{i}^{\BB C, \C F}(\vec a, \vec b):= b_{i} & X_{i}^{\BB C, \C F}(\vec f, \vec g):= g_{i} \\
(A\otimes B)^{\BB C, \C F}:= A^{\BB C, \C F}\otimes B^{\BB C, \C F} &
(\forall YA)^{\BB C, \C F} := \int_{y}^{\C F}A^{\BB C, \C F}(y,y) &
(A^{\bot})^{\BB C, \C F}:= (A^{\BB C, \C F})^{*}
\end{matrix}
$$
where $\int_{y}^{\C F}F$ indicates the end relativized to $\C F$. In the following lines, since reference to $\C F$ is clear, we will  write $A^{\BB C, \C F}$ as $A^{\BB C}$ and $\int_{y}^{\C F}F$ as $\int_{y}F$ for simplicity.
For a clean sequent $\Gamma=A_{1},\dots, A_{n}$, whose free variables are within $X_{1},\dots, X_{n}$, we let $\Gamma^{\BB C}:= A_{1}^{\BB C}\parr \dots \parr A_{n}^{\BB C}$ (where $x\parr y:= \BB C(x^{\bot},y)$) if $n\geq 1$ and $\Gamma^{\BB C}=\B 1_{\BB C}$ if $n=0$. 

\begin{lemma}[substitution lemma]\label{subst}
$(A[B/X])^{\BB C}(x,x)= A^{\BB C}(B^{\BB C}(x,x), B^{\BB C}(x,x))$.
\end{lemma}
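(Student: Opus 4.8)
The statement is $(A[B/X])^{\BB C}(x,x) = A^{\BB C}(B^{\BB C}(x,x), B^{\BB C}(x,x))$, which is the semantic counterpart of syntactic substitution. The natural approach is a straightforward induction on the structure of the formula $A$, carefully tracking how the interpretation functor handles the distinguished variable $X$ versus the other free variables. I will treat $x$ as shorthand for a tuple $(\vec a, \vec b)$ ranging over $(\BB C^{op}\times \BB C)^n$ and write the diagonal action throughout; a cleaner formulation would actually prove the bifunctorial statement $(A[B/X])^{\BB C}(\vec f, \vec g) = A^{\BB C}(B^{\BB C}(\vec f,\vec g), B^{\BB C}(\vec f,\vec g), \vec f, \vec g)$ for morphisms and then specialize, since the end case (see below) forces one to reason about the functorial action and not merely objects.

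\textbf{The plan is to} proceed by induction on $A$. In the base cases: if $A = X$, then $A[B/X] = B$ and $X^{\BB C}$ is the projection onto the $X$-coordinate, so $(A[B/X])^{\BB C}(x,x) = B^{\BB C}(x,x) = X^{\BB C}(B^{\BB C}(x,x), B^{\BB C}(x,x))$, using the definition $X_i^{\BB C}(\vec a,\vec b) = b_i$; if $A = Y \neq X$ (or $A = Y^\bot$ with $Y \neq X$), then $A[B/X] = A$ and the substitution on the right is vacuous on that coordinate, so both sides agree. For the tensor case $A = A_1 \otimes A_2$, substitution commutes with $\otimes$ on the syntactic side and $(-)^{\BB C}$ sends $\otimes$ to $\otimes$ in $\C F$, so the claim follows by applying the induction hypothesis to $A_1$ and $A_2$ componentwise. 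The negation case $A = A_1^\bot$ is analogous, using that $(-)^{\BB C}$ sends $(-)^\bot$ to the involution $(-)^*$ and that $(-)^*$ commutes with substitution.

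\textbf{The hard part will be} the quantifier case $A = \forall Y A_1$. Here one must be careful about variable capture: by cleanness we may assume $Y$ does not occur in $B$ and $Y \neq X$, so $(\forall Y A_1)[B/X] = \forall Y (A_1[B/X])$. Unfolding the interpretation, the left side is $\int_y (A_1[B/X])^{\BB C}(y,y,x,x)$ where now $A_1$ has the extra free variable $Y$; the right side is $(\forall Y A_1)^{\BB C}(B^{\BB C}(x,x),B^{\BB C}(x,x)) = \int_y A_1^{\BB C}(y,y, B^{\BB C}(x,x), B^{\BB C}(x,x))$. By the induction hypothesis applied to $A_1$ (treated as a functor in all its free variables, with $X$ distinguished), the two integrand functors $y \mapsto (A_1[B/X])^{\BB C}(y,y,x,x)$ and $y \mapsto A_1^{\BB C}(y,y,B^{\BB C}(x,x),B^{\BB C}(x,x))$ are literally the same functor, hence their relativized ends coincide. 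The one subtlety to check is that the class of dinatural transformations defining $\C F$ is stable under this identification, i.e.\ that the relativized end on the right is computed with respect to the same wedges — this is immediate since the two functors are equal, not merely isomorphic, provided the induction hypothesis is stated at the level of functors (objects and morphisms). This is precisely why I would phrase the induction hypothesis bifunctorially from the outset: the end case needs the action on morphisms of $A_1^{\BB C}$, and proving only the object-level equality would leave a gap. Once the inductive statement is set up this way, every case is a routine unfolding of definitions.
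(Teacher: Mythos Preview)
Your proposal is correct and follows essentially the same approach as the paper: induction on $A$, with the quantifier case $A=\forall Y A'$ singled out as the only delicate one and handled by assuming $Y$ is fresh for $B$ so that the integrand functors agree by the induction hypothesis and hence their (relativized) ends coincide. Your explicit insistence on phrasing the induction hypothesis bifunctorially---so that the equality of ends follows from a genuine equality of functors rather than a mere pointwise equality on objects---is a worthwhile clarification that the paper's terse proof leaves implicit.
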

\begin{proof}
Induction on $A$. The only delicate case is $A=\forall YA'$, and, as we can suppose that $B^{\BB C}$ does not depend on $y$, $(A[B/X])^{\BB C}(x,x)= 
\int_{y} ((A'[B/X])^{\BB C}((y,x),(y,x)))\stackrel{[i.h.]}{=}
\int_{y}(A')^{\BB C}((y, B^{\BB C}), (y, B^{\BB C})) =
(\int_{y}(A')^{\BB C}((y,x),(y,x)))(B^{\BB C}, B^{\BB C})=
A^{\BB C}(B^{\BB C}, B^{\BB C})$.

\end{proof}



Let $\pi$ be a cut-free $G$-net of conclusions $\Gamma$ and let all formulas occurring in $\pi$ be within $X_{1},\dots, X_{n}$. We now show that $\pi$ can be interpreted as a dinatural transformation $\pi^{\BB C, \C F}:  \B 1_{\BB C}\to \Gamma^{\BB C, \C F}$\footnote{As explained in appendix \ref{appB}, we omit for readability reference to variables $x_{1},\dots, x_{n}$.}.
As in the case of functors, since reference to $\C F$ is clear, we will simply write $\pi^{\BB C, \C F}$ as $\pi^{\BB C}$.  
Similarly to \cite{Lamarche2006} (Th. 2.3.1. p. 32), we can define $\pi^{\BB C}$ by induction on a sequentialization of $\pi$. We adopt a sequentialization theorem for $G$-nets inspired from \cite{Hughes2018} and described in appendix \ref{AppC}. 
\begin{itemize}
\item if $\pi$ is an axiom link of conclusions $X^{\bot},X$, then $\pi^{\BB C}:= \hat{\B 1}_{A^{\BB C}}$.
%
%
%
%
\item if $\Gamma=\Delta,A\parr B$ and $\pi$ is obtained from a $\pi'$ of conclusions $\Delta, A,B$ by adding a $\parr$-link, then $\pi^{\BB C}:= (\pi')^{\BB C}$.
\item if $\Gamma= \Delta_{1},\Delta_{2},A\otimes B$ and $\pi$ is obtained from $\pi_{1}$ of conclusions $\Delta_{1},A$ and $\pi_{2}$ of conclusions $\Delta_{2}, B$, then $\pi^{\BB C}:= t_{\vec x}\circ \big ((\pi_{1})^{\BB C}\otimes (\pi_{2})^{\BB C}\big )$, where $t_{\vec x}: (\Delta_{1}^{\BB C}\parr A^{\BB C})\otimes (\Delta_{2}^{\BB C}\parr B^{\BB C})\to \Delta_{1}^{\BB C}\parr \Delta_{2}^{\BB C}\parr (A\otimes B)^{\BB C}$ is $\iota_{A^{\BB C}, \Delta_{1}^{\BB C}, (\Delta_{2}\parr B)^{\BB C }}  \circ (\iota_{A^{\BB C}, \Delta_{2}^{\BB C}, B^{\BB C}  } \parr B^{\BB C}  )$, given the natural transformation $\iota_{a,b,c}:(a\parr b)\otimes c\to (a\otimes c)\parr b$.
\item if $\Gamma=\Delta,\forall YA$ and $\pi$ is obtained from $\pi'$ of conclusions $\Delta,A$, then from
$(\pi')^{\BB C}_{x}: \B 1_{\BB C}\to \Delta^{\BB C}\parr A^{\BB C}$ we obtain (by applying the natural isomorphism $\BB C(a\otimes b^{\bot},c)\simeq \BB C(a, b\parr c)$) a dinatural transformation
$\theta_{x}: (\Delta^{\BB C})^{\bot}\to A^{\BB C}$ \footnote{More precisely, $\theta_{x}$ is $\theta_{x_{1},\dots, x_{n},x}$ and comes  from $(\pi')^{\BB C}_{x_{1},\dots, x_{n},x}$, where $(\Delta^{\BB C})^{\bot}$ does not depend on $x$.}. $\pi^{\BB C}$ is now obtained by the universality of (relativized) ends, as shown by the diagram below:

\adjustbox{scale=0.75,center}{
\begin{tikzcd}
    (\Delta^{\BB C})^{\bot} \ar[bend left=15]{rrd}{\theta_{a}} \ar[bend right=15]{ddr}[below]{\theta_{b}} \ar[dashed]{rd}{\pi^{\BB C}}&  &  \\
     &   \int_{y}A^{\BB C}(y,y) \ar{r}{ \delta_{a}^{A^{\BB C}}} \ar{d}{ \delta_{b}^{A^{\BB C}}} &   A^{\BB C}(a,a) \ar{d}{A^{\BB C}(a,f)} \\ 
    &  A^{\BB C}(b,b) \ar{r}[below]{ A^{\BB C}(f,b)} &  A^{\BB C}(a,b)
\end{tikzcd}}

\item if $\Gamma=\Delta,\exists YA$ and $\pi$ is obtained from $\pi'$ of conclusions $\Delta, A[B/X]$, then $\pi^{\BB C}$ is obtained from $(\pi')^{\BB C}$ by the chain of arrows below (by exploiting lemma \ref{subst}):

\adjustbox{scale=0.8, center}{
\begin{tikzcd}
\B 1_{\BB C}  \ar{r}{(\pi')^{\BB C}} &
\Delta^{\BB C}\parr A^{\BB C}(B^{\BB C},B^{\BB C}) \ar{r}{ \omega_{B^{\BB C}}^{\Delta^{\BB C}\parr A^{\BB C}} } &
\int^{x}(\Delta^{\BB C}\parr A^{\BB C}(x,x)) \ar{r}{\nu} & \Delta^{\BB C}\parr \int^{x}A^{\BB C}(x,x)
\end{tikzcd}}
where $\nu$ is given in equation \ref{coends2} in appendix \ref{appB}. 
\end{itemize}

\begin{remark}\label{rem:coend}
It is well-known that $\mathsf{MLL}$ proof nets can be interpreted as (composable) dinatural transformations over any $^{*}$-autonomous category $\BB C$ \cite{Blute1993}, without requiring a dinatural model over $\BB C$ to exist. This fact does not seem to scale to $\mathsf{MLL2}$, since the last step of the definition above exploits the composition of two dinatural transformations.

\end{remark}

We show now that the definition of $\pi^{\BB C}$ does not depend on the sequentialization chosen. 
We must consider all possible permutations of rules in a sequentialization of $\pi^{\BB C}$. We call a $\exists$ link \emph{simple} if it has no incoming jump. For readability we will often confuse formulas $A$ and proof nets $\pi$ with their interpretations $A^{\BB C}$ and $\pi^{\BB C}$.

\begin{itemize}

\item permutations between $\parr,\forall$ and simple $\exists$:
	\begin{description}
	\item[($\parr/\parr$)] We can argue as in \cite{Lamarche2006}.
	\item[($\parr/\forall$)] $\pi_{1},\pi_{2}$, of conclusions $\Gamma, A\parr B, \forall XC$ come from $\pi'$ of conclusions $\Gamma, A,B, C$. The claim follows from the fact that the introduction of $\parr$ does not change the interpretation.
	\item[($\forall/\forall$)] $\pi_{1},\pi_{2}$, of conclusions $\Gamma, \forall XA, \forall YB$ come from $\pi'$ of conclusions $\Gamma, A,B$. The claim follows from $\int_{x}A^{\BB C}(x,x) \parr \int_{y}B^{\BB C}(y,y)\  \stackrel{Eq. \ \ref{commut}}{\simeq}\ 	\int_{x}\int_{y}(A^{\BB C}(x,x)\parr B^{\BB C}(y,y))\  \stackrel{Eq. \ \ref{fubini}}{\simeq}\ 
	\int_{y}\int_{x}(A^{\BB C}(x,x)\parr B^{\BB C}(y,y))\ \stackrel{Eq.\  \ref{commut}}{\simeq}\ \int_{x}A^{\BB C}(x,x) \parr \int_{y}B^{\BB C}(y,y)$.

	\item[($\parr/\exists$)] Similar to case $(\parr/\forall)$.
	\item[($\forall/\exists$)] $\pi_{1},\pi_{2}$ of conclusions $ \forall XA, \exists YB$ (we omit contexts $\Gamma$ for simplicity) come from $\pi'$ of conclusions $ A, B[C/Y]$, where $C$ has no free occurrence of $X$. We let $c=C^{\BB C}$, $\theta$ indicate the translation of the $G$-net of conclusions $  \forall XA, B[C/Y]$ and $\sigma_{x}$ indicate the translation of the $G$-net of conclusions $ A,\exists YB$, so that $\pi_{1}= (\int_{x}A\parr\omega_{c}^{B})\circ \theta$ and $\pi_{2}$ is the universality arrow in the dinaturality diagram for $\sigma_{x}$. Then $\pi_{1}=\pi_{2}$ follows from the universality of $\pi_{2}$, as shown by the diagram below:

	\adjustbox{scale=0.7,center}{	
$	
\begin{tikzcd}	
\B 1_{\BB C}   \ar[bend left=20]{rrrrdd}{\sigma_{a}} \ar[bend left=10]{rrrd}{\pi'_{a}}  \ar[bend right=50]{ddddrr}[near end]{\sigma_{b}}  \ar[bend right=10]{rddd}{\pi'_{b}}  \ar[dashrightarrow, bend right=10]{rd}{\theta}  \ar[dashrightarrow, bend left=10]{rrdd}{\pi_{2}}  &       &   &   &     \\
  &  \int_{x}A\parr B(c,c)  \ar{rr}{\delta^{A}_{a}\parr B(c,c)}   \ar{dd}{\delta^{A}_{b}\parr B(c,c)} \ar{rd}[below]{\int_{x}A\parr \omega^{B}_{c}}  &   &   A(a,a)\parr B(c,c)  \ar{dd}[near end]{A(a,f)\parr B(c,c)}  \ar{rd}{A\parr \omega^{B}_{c}}  &  	 \\
&  &   \int_{x}(A\parr \int^{y}B)\simeq \int_{x}A\parr \int^{y}B   \ar{rr}[near start]{\delta_{a}^{A}\parr \int^{y}B}  \ar{dd}[near end]{\delta_{b}^{A}\parr \int^{y}B}   &   &  A(a,a)\parr \int^{y}B \ar{dd}{A(a,f)\parr \int^{y}B} \\
 & A(b,b)\parr B(c,c)  \ar{rr}[near end]{A(f,b)\parr B(c,c)} \ar{rd}{A\parr \omega_{c}^{B}}  &   &   A(a,b)\parr B(c,c)  \ar{rd}{A\parr \omega_{c}^{B}}&  \\  
  &   &   A(b,b)\parr \int^{y}B  \ar{rr}[below]{A(f,b)\parr \int^{y}B}  &  &   A(a,b)\parr \int^{y}B
\end{tikzcd}	
$}	
	
	\item[($\exists/\exists$)] Similar to case $(\forall/\forall)$.

	\end{description}

\item permutations between a splitting $\otimes$ and $\parr,\forall$ or simple $\exists$:

\begin{description}
	\item[($\otimes/\parr$)] We can argue as in \cite{Lamarche2006}.
	\item[($\otimes/\forall$)] $\pi_{1},\pi_{2}$, of conclusions $ A\otimes C, \forall XB$ (we omit contexts $\Gamma,\Delta$ for simplicity) are obtained from $\sigma$, of conclusions $ A, B$ and $\tau$, of conclusion $ C$, so that 
	$\pi_{1}= \iota_{A,\int_{x}B,C}\circ (\int_{x}\sigma\otimes \tau)$, where $\int_{x}\sigma$ is the interpretation of the $G$-net obtained from $\sigma$ by adding a $\forall$-link and 
	$\pi_{2}$ is the universality arrow in the universality diagram for $\iota_{A, B(x), C}\circ (\sigma_{x}\otimes \tau)$.  Then $\pi_{1}=\pi_{2}$ follows from the universality of $\pi_{2}$, as shown by the diagram below.	
	
\end{description}

\adjustbox{scale=0.7, center}{	
$
\begin{tikzcd}
\B 1_{\BB C} \ar[dashrightarrow, bend left=10]{rrdd}{\pi_{2}}  \ar[bend left=10]{rrrd}{\sigma_{a}\otimes \tau} \ar[bend right=20]{dddr}[below]{\sigma_{b}\otimes \tau} \ar{rd}[below]{\int_{x}\sigma\otimes \tau}&   &   &  &  \\
&  (A\parr \int_{x}B)\otimes C   \ar{rr}{(A\parr \delta_{a}^{B})\otimes C}   \ar{dd}{(A\parr \delta_{b}^{B})\otimes C}  \ar{rd}[below]{\iota_{A,\int_{x}B, C}}  &    &   (A\parr B(a,a))\otimes C   \ar{rd}{\iota_{A,B(a,a),C}}  & \\ 
 &  &   (A\otimes C)\parr \int_{x} B \simeq \int_{x}((A\otimes C)\parr B)    \ar{rr}[below]{(A\otimes C)\parr \delta_{a}^{B}}  \ar{dd}{(A\otimes C)\parr \delta_{b}^{B}}    &   &     (A\otimes C)\parr B(a,a)  \ar{dd}{(A\otimes C)\parr B(a,f)} \\
 &
  (A\parr B(b,b))\otimes C   \ar{rd}[below]{\iota_{A,B(b,b),C}}
   & & &
 \\
&   &  (A\otimes C)\parr B(b,b) \ar{rr}[below]{(A\otimes C)\parr B(f,b)}  & &  (A\otimes C)\parr B(a,b)
\end{tikzcd}
$}

\begin{description}

	\item[($\otimes/\exists$)]
	$\pi_{1},\pi_{2}$, of conclusions $A\otimes D, \exists XB$ (again, we omit contexts $\Gamma, \Delta$ for simplicity) are obtained from $\sigma$, of conclusions $A, B[C/X]$ and $\tau$, of conclusions $D$, so that 
	$\pi_{1}= \iota_{A,\int^{x}B,D}\circ (\int^{x}\sigma\otimes \tau)$, where $c=C^{\BB C}$, $\int^{x}\sigma=(A\parr \omega_{c}^{B})\circ\sigma$ is the interpretation of the $G$-net obtained from $\sigma$ by adding a $\exists$-link and
	$\pi_{2}=( (A\otimes D)\parr \omega_{c}^{B})\circ \iota_{A,B(c,c),D} \circ (\sigma\otimes \tau)$.  Then $\pi_{1}=\pi_{2}$ follows from the naturality of $\iota$, as shown in the diagram below. 

\adjustbox{scale=0.8,center}{	
$
\begin{tikzcd}
& (A\parr B[C/X])\otimes D \ar{dd}[left]{(A\parr \omega_{c}^{B})\otimes D}  \ar{rr}{\iota_{A,B[C/X],D}}  & &  (A\otimes D)\parr B[C/X] \ar{dd}{(A\otimes D)\parr \omega_{c}^{B}} \\
\B 1_{\BB C} \ar{ru}{\sigma\otimes \tau} \ar{rd}[below]{(\int^{x}\sigma)\otimes \tau}&  & & \\
&  (A\parr \int^{x}B)\otimes D \ar{rr}[below]{\iota_{A,\int^{x}B, D}} & &   (A\otimes D)\parr \int^{x}B
\end{tikzcd}
$}	
	
	\end{description}
	
\item permutations between splitting $\otimes$: we can argue as in \cite{Lamarche2006}.
\end{itemize}

The definition above can be extended to the case of a $G$-net with cuts: if $\pi$ has conclusions $\Gamma$ and cut-formulas $B_{1},\dots, B_{n}$, then we can transform $\pi$ into a $G$-net $\pi_{cut}$ of conclusions $\Gamma,[B_{1}\otimes B_{1}^{\bot},\dots, B_{n}\otimes B_{n}^{\bot}]$. Then we can define $\pi^{\BB C}$ as
$(id_{\Gamma^{\BB C}} \parr \hat\bot_{B_{1}^{\BB C}}\parr \dots \parr \hat\bot_{B_{n}^{\BB C}})\circ \pi_{cut}^{\BB C}$. 
The following proposition shows that if the G-net $\pi$ reduces to the cut-free $G$-net $\pi_{0}$, then $\pi^{\BB C}=\pi_{0}^{\BB C}$. Hence it shows that 
the denotation $\pi^{\BB C}$ is invariant with respect to reduction.

\begin{proposition}\label{cuts}
Let $\pi$ be a $G$-net with cuts of conclusions $\Gamma$ and $\pi_{0}$ be the $G$-net obtained from $\pi$ by eliminating all cuts. Then $\pi^{\BB C}=\pi_{0}^{\BB C}$. 
\end{proposition}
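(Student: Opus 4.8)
The plan is to reduce the claim to an invariance statement for single cut-reduction steps. Cut-elimination for $G$-nets (obtained, like the rest of the syntax, by adapting \cite{quanti1990}; see \cite{tortoraphd}) terminates, so $\pi_0$ is reached from $\pi$ by a finite sequence of elementary reductions $\pi = \pi^{(0)}\rightsquigarrow\pi^{(1)}\rightsquigarrow\cdots\rightsquigarrow\pi^{(k)}=\pi_0$, and it suffices to prove $(\pi^{(i)})^{\BB C}=(\pi^{(i+1)})^{\BB C}$ for every $i$. Recall that the denotation of a net with cuts is read off the auxiliary cut-free net $\pi_{cut}$, obtained by turning each cut link on $B_j$ into a $\otimes$-link with new conclusion $B_j\otimes B_j^{\bot}$, and then post-composing $\pi_{cut}^{\BB C}$ with the evaluations $\hat\bot_{B_j^{\BB C}}$. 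A reduction step on a cut of $\pi$ induces a corresponding rewriting of $\pi_{cut}$ on the relevant $\otimes$-link and its dangling conclusion, so the problem becomes: check that, after this translation, each reduction rule preserves the interpretation.

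First I would dispatch the multiplicative rules. The axiom step (a cut against an axiom link of conclusions $X^{\bot},X$) amounts to the ``snake'' identity for the unit $\hat{\B 1}$ and the evaluation of the $^*$-autonomous structure, exactly as in the $\mathsf{MLL}$ case (\cite{Lamarche2006,Blute1993}). The $\otimes/\parr$ step (a cut between $A\otimes B$ and $A^{\bot}\parr B^{\bot}$, splitting into cuts on $A,A^{\bot}$ and on $B,B^{\bot}$) follows from the naturality of the transformation $\iota$ used to define $t_{\vec x}$ together with the coherence of $^*$-autonomous categories, again as in \cite{Lamarche2006}. The commutative cut cases (a cut meeting a $\otimes$, $\parr$, $\forall$ or simple $\exists$ of the context) are handled by the associativity and naturality facts already exploited in the analysis of rule permutations above: a commutative step merely re-brackets a composite of $\iota$'s, $\delta$'s, $\omega$'s and $\nu$'s, which is invariant.

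The heart of the proof is the key $(\forall/\exists)$ step: a cut between a $\forall$-link, with conclusion $\forall YA$ and premise $A$, and an $\exists$-link, with conclusion $\exists YA^{\bot}$, premise $A^{\bot}[B/Y]$ and witness $B$; the reduction substitutes $B$ for the eigenvariable $Y$ in the sub-net above the $\forall$-link and produces a cut on $A[B/Y]$ and $(A[B/Y])^{\bot}$ (with the attendant bookkeeping on the jumps that landed on that $\exists$-link). Semantically, the $\forall$-link contributes the universality arrow $(\Delta^{\BB C})^{\bot}\to\int_{y}A^{\BB C}(y,y)$ induced by its sub-net's wedge $\theta_{x}$, while the $\exists$-link post-composes with $\nu$ and the coend injection at $B^{\BB C}$; cutting them (pairing the end against the dualized coend through $\hat\bot$) composes that injection with the end projection $\delta^{A^{\BB C}}_{B^{\BB C}}$, and the defining (co)wedge equations for the relativized end/coend of $A^{\BB C}$ make this composite equal to $\theta_{B^{\BB C}}$, the component of the sub-net's wedge at the object $B^{\BB C}$. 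By the substitution lemma (Lemma \ref{subst}) this is precisely the denotation of the net obtained by substituting $B$ for $Y$; up to the $\hat\bot$ encoding of the remaining cut, it equals $(\pi')^{\BB C}$. The step where care is genuinely needed is this one: one must keep track of the universal properties of the \emph{relativized} ends and coends so that all the transformations involved do compose and the injection/projection cancellation is licensed — this is exactly the role of the dinatural model hypothesis (cf.\ Remark \ref{rem:coend}) and of the isomorphism $\nu$ of equation \ref{coends2}. Finally, the jump-commutation portion of this and of the commutative reductions is semantically invisible, for the same reason that jumps played no role in the permutation analysis: jumps only constrain correctness and carry no denotational data. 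Combining the multiplicative, quantifier and commutative cases yields $(\pi^{(i)})^{\BB C}=(\pi^{(i+1)})^{\BB C}$ for every $i$, hence $\pi^{\BB C}=\pi_0^{\BB C}$.
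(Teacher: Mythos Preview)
Your outline follows the same inductive skeleton as the paper's proof (reduce to one-step invariance, defer the propositional cases to \cite{Lamarche2006}, isolate $\forall/\exists$ as the genuine new case), but two points of comparison are worth making.

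First, the paper avoids your commutative-cut discussion entirely by choosing a reduction sequence that \emph{follows a sequentialization}: at each stage the eliminated cut is a splitting tensor, so only the principal cases ever arise. This is a small economy but a real one; your treatment of the commutative steps is plausible but would need to be spelled out, and it is not needed.

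Second, your account of the $\forall/\exists$ step is imprecise on the crucial point. You write that ``cutting them composes that injection with the end projection $\delta^{A^{\BB C}}_{B^{\BB C}}$'' and that ``the defining (co)wedge equations make this composite equal to $\theta_{B^{\BB C}}$''. There is no literal composition of injection and projection here; what must commute is a square whose two legs are $\hat\bot_{A(b,b)}\circ(\delta_{b}^{A}\otimes A^{\bot}(b,b))$ and $\hat\bot_{\int_{x}A}\circ(\int_{x}A\otimes\omega_{b}^{A^{\bot}})$. The paper closes this square by two facts you do not name: the \emph{dinaturality of $\hat\bot_{x}$} (applied to the arrow $\delta_{b}^{A}$) and the duality $\omega_{b}^{A^{\bot}}=(\delta_{b}^{A})^{\bot}$. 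Neither the universal property of the (co)end nor the substitution lemma does this job; the universality of the end is used only to identify $\delta_b\circ\pi_1$ with $(\pi'_1)_b$. Incidentally, $\nu$ in equation~\ref{coends2} is only a natural transformation, not an isomorphism, so do not lean on its invertibility.
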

\begin{proof}
We consider a reduction sequence of $\pi$ which follows a sequentialization, hence such that any time a cut is eliminated, this cut corresponds to a splitting tensor of $\pi$. As this reduction sequence is finite and terminates on $\pi_{0}$ (by strong normalization and confluence), we can argue by induction on its length. The cases of $\mathsf{MLL}$ cuts can be treated by arguing as in the proof of Lemma 2.3.4, p. 36, of \cite{Lamarche2006}.
We consider then the case of a cut $\forall/\exists$. Let $\pi$ be a $G$-net of conclusions $\Gamma, [\forall XA\otimes \exists XA^{\bot}]$ and let $\pi'$ be the $G$-net of conclusions $\Gamma, [A[B/X]\otimes A^{\bot}[B/X]]$ obtained by applying one reduction step to $\pi'$. We must show that $\sigma_{1}=(\Gamma^{\BB C}\parr \hat\bot_{\int_{x}A^{\BB C}(x,x)})  \circ\pi^{\BB C}$ is equal to $\sigma_{2}= (\Gamma^{\BB C}\parr \hat\bot_{A^{\BB C}(b,b)})\circ (\pi')^{\BB C}$, where $b=B^{\BB C}$. Since the $\otimes$-link is splitting, $\Gamma=\Gamma_{1},\Gamma_{2}$ and $\pi$ (resp. $\pi'$) splits into $\pi_{1}$ of conclusions $\Gamma_{1},\forall XA$ (resp $\pi'_{1}$ of conclusions $\Gamma_{1}, A[B/X]$) and $\pi_{2}$ of conclusions $\Gamma_{2},\exists XA^{\bot}$ (resp. $\pi'_{2}$ of conclusions $\Gamma_{2},A^{\bot}[B/X]$). The claim follows then from the induction hypothesis and the commutation of the diagram below, which is a consequence of the dinaturality of $\hat\bot_{x}$ and of the fact that $\omega_{b}^{A^{\bot}}=(\delta_{b}^{A})^{\bot}$ (as before, for readability we confuse formulas $A$ and proof nets $\pi$ with their interpretations $A^{\BB C}$ and $\pi^{\BB C}$).

\adjustbox{scale=.8, center}{
$
\begin{tikzcd}
&   &    &  A(b,b)\otimes A^{\bot}(b,b)  \ar{rd}{\hat\bot_{A(b,b)}}  &  \\
\Gamma_{1}^{\bot}\otimes \Gamma_{2}^{\bot}  
\ar[bend left=10]{rrru}{(\pi'_{1})_{b}\otimes \pi'_{2}} 
\ar{rr}[below]{\pi_{1}\otimes \pi'_{2}}
\ar[bend right=10]{rrrd}[below]{\pi_{1}\otimes \pi_{2}} &  &  \int_{x}A(x,x)\otimes A^{\bot}(b,b)  \ar{ru}{\delta_{b}^{A}\otimes A^{\bot}} \ar{rd}{\int_{x}A\otimes (\delta_{b}^{A})^{\bot}}   & & \bot_{\BB C} \\
 &  &  &  \int_{x}A(x,x)\otimes \int^{y}A^{\bot}(y,y)  \ar{ru}[below]{\hat\bot_{\int_{x}A}}  & 
\end{tikzcd}
$}

\end{proof}



 Any map $\phi:\TT{Var}\to Ob_{\BB C}$ extends into a map $\varphi: \C L^{2}\to Ob_{\BB C}$ by letting $ (A\otimes B)^{\varphi}=A^{\varphi}\otimes  B^{\varphi}$, $ (\forall XA)^{\varphi}= \int_{x}^{\C F}A^{\varphi}(x,x)$ and $( A^{\bot})^{\varphi}=( A^{\varphi})^{\bot}$.
The following can be verified by induction on formulas:
\begin{lemma}\label{lemmuno}
For each map $\phi:\TT{Var}\to Ob_{\BB C}$ and each sequent $\Gamma$,
$\Gamma^{\BB C}(X_{1}^{\phi},\dots, X_{n}^{\phi})=\Gamma^{\phi}$.
\end{lemma}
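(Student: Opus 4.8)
The plan is to reduce the claim to the level of formulas and then run a straightforward induction. Since $\Gamma^{\BB C}=A_1^{\BB C}\parr\cdots\parr A_n^{\BB C}$ (and $\B 1_{\BB C}$ for the empty sequent), and reading the extension $(\cdot)^\phi$ on sequents in the same way, i.e.\ $\Gamma^\phi=A_1^\phi\parr\cdots\parr A_n^\phi$, and since $\parr$ is a bifunctor on $\BB C$ (so that the $\parr$ of a family of functors evaluated at a tuple of objects is the $\parr$ of the evaluations), it is enough to prove, for every formula $A$ whose free variables are among $X_1,\dots,X_n$, the equality $A^{\BB C}(X_1^\phi,\dots,X_n^\phi)=A^\phi$, where the left-hand side denotes the \emph{diagonal} evaluation of the functor $A^{\BB C}\colon(\BB C^{op}\times\BB C)^n\to\BB C$ at the pairs $(X_i^\phi,X_i^\phi)$ (as in the notation $A^{\BB C}(x,x)$ used throughout), and where $A^\phi=A^\varphi=\varphi(X_i)$ on variables.

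I would then proceed by induction on $A$. The cases $A=X_i$ and $A=X_i^\bot$ are immediate from $X_i^{\BB C}(\vec a,\vec b)=b_i$ together with the base clauses $X_i^\phi=\phi(X_i)$ and $(X_i^\bot)^\phi=\phi(X_i)^\bot$. For $A=B\otimes C$ one combines the induction hypotheses for $B$ and $C$ with the parallel clauses $(B\otimes C)^{\BB C}=B^{\BB C}\otimes C^{\BB C}$ and $(B\otimes C)^\phi=B^\phi\otimes C^\phi$, using functoriality of $\otimes$. For $A=B^\bot$ one uses $(B^\bot)^{\BB C}=(B^{\BB C})^{*}$ and $(B^\bot)^\phi=(B^\phi)^\bot$: the involution $(\cdot)^{*}$ of $\C F$ exchanges the contravariant and covariant argument slots of a functor, but on the diagonal this exchange is invisible, so $(B^{\BB C})^{*}$ evaluated diagonally equals $\bigl(B^{\BB C}$ evaluated diagonally$\bigr)^\bot$, and the claim follows from the induction hypothesis. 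The cases $A=B\parr C$ and $A=\exists YB$ need no separate treatment, since both $(\cdot)^{\BB C}$ and $(\cdot)^\phi$ interpret $\parr$ and $\exists$ through negation, via $B\parr C=(B^\bot\otimes C^\bot)^\bot$ and $\exists YB=(\forall YB^\bot)^\bot$; these cases thus follow from the ones for $\otimes$, $\bot$ and $\forall$.

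For $A=\forall YB$ the clause $(\forall YB)^\phi=\int_y^{\C F}B^\phi(y,y)$ must be read, following the abuse of notation in the definition of $(\cdot)^\varphi$, as the end relativized to $\C F$ of the functor obtained from $B^{\BB C}$ by fixing the argument slots of $X_1,\dots,X_n$ to $(X_i^\phi,X_i^\phi)$ and keeping the slot of $Y$ active. Hence one wants the induction hypothesis for $B$ in a mildly strengthened, \emph{functorial} form: that this partially evaluated functor is exactly $B^\phi$ regarded as a functor in $Y$. Granting this, $(\forall YB)^{\BB C}(X_1^\phi,\dots,X_n^\phi)=\int_y^{\C F}B^{\BB C}(X_1^\phi,\dots,X_n^\phi,y,y)=\int_y^{\C F}B^\phi(y,y)=(\forall YB)^\phi$, the relativized ends existing because $\C F$ is a dinatural model and all steps being genuine equalities by strictness of $\BB C$.

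The only delicate point — the main obstacle, though a bookkeeping one rather than a conceptual one — is precisely this last step: the inductive statement must be set up so that it controls $A^{\BB C}$ not merely as an object of $\BB C$ after total evaluation, but as a functor once the bound variable is left free, which is what the recursive clause for $(\forall XA)^\varphi$ silently requires (one way is to carry along the statement with a single distinguished unevaluated variable). Once the statement is phrased with this in mind, each inductive case is a one-line unfolding of the two parallel recursive definitions, and the passage from formulas to sequents is the functoriality of $\parr$ noted at the outset.
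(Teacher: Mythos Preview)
Your proposal is correct and follows exactly the approach the paper indicates: the paper gives no proof beyond the remark that the claim ``can be verified by induction on formulas'', and your argument is precisely that induction, carried out in full. Your observation that the $\forall$-case needs the induction hypothesis in functorial form (with the bound variable left active) is the only nontrivial point, and you handle it correctly; the paper leaves this implicit.
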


By letting $\Phi(\pi):= \pi^{\BB C}(X_{1}^{\phi},\dots, X_{n}^{\phi})$, for $\varphi:\TT{Var}\to Ob_{\BB C}$, we finally get:

\begin{theorem}[functor $\Phi:\BB G\to \BB C$]\label{inter}
For all $\varphi:\TT{Var}\to Ob_{\BB C}$ there exists a functor $ \Phi: \BB G\to \BB C$ such that, for all $A\in \BB L^{2}$, $\Phi(A)= A^{\varphi}$.

\end{theorem}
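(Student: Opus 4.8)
The plan is to verify that the assignment $\Phi$ defined just before the statement is a functor with the announced action on objects; the genuinely second-order content — the independence of $\pi^{\BB C}$ from the chosen sequentialization and Proposition~\ref{cuts} — is already in place, so what remains is the passage from these facts to functoriality. On objects $\Phi(A):=A^{\varphi}$, and since every relativized end occurring in the inductive definition of $A^{\varphi}$ is, by Definition~\ref{dinamodel}, an object of $\BB C$, we have $A^{\varphi}\in Ob_{\BB C}$ and the equation $\Phi(A)=A^{\varphi}$ holds by construction. A morphism of $\BB G(A,B)$ is a cut-free $G$-net $\pi$ with conclusions $A^{\bot},B$; its interpretation $\pi^{\BB C}$ is a dinatural transformation $\B 1_{\BB C}\To (A^{\bot}\parr B)^{\BB C}$, so by Lemma~\ref{lemmuno} its component at $(X_{1}^{\varphi},\dots,X_{n}^{\varphi})$ is an arrow $\B 1_{\BB C}\to (A^{\bot}\parr B)^{\varphi}$ of $\BB C$. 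Using the (strict) $^{*}$-autonomous structure, $(A^{\bot}\parr B)^{\varphi}$ is the internal hom $\BB C(A^{\varphi},B^{\varphi})$, and transposing this global element gives $\Phi(\pi)=\pi^{\BB C}(X_{1}^{\varphi},\dots,X_{n}^{\varphi})\in\BB C(\Phi(A),\Phi(B))$. This is well defined exactly because $\pi^{\BB C}$ does not depend on the sequentialization; since morphisms of $\BB G$ are cut-free $G$-nets with no identifications imposed, nothing further is needed here.

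Next I would prove $\Phi(1_{A})=\mathrm{id}_{\Phi(A)}$, where $1_{A}\in\BB G(A,A)$ is the $\eta$-long identity net on $A$ (the cut-free normal form of the axiom on $A$). This goes by induction on $A$: the base case $A=X$ is an axiom link, interpreted by definition as $\hat{\B 1}_{X^{\BB C}}$, the name of $\mathrm{id}_{X^{\BB C}}$, whose transpose at $\varphi$ is $\mathrm{id}_{X^{\varphi}}$; the cases $A=B\otimes C$ and $A=B\parr C$ follow, exactly as in \cite{Lamarche2006}, from the coherence of $\iota$ and the $^{*}$-autonomous structure; and the cases $A=\forall XB$, $A=\exists XB$ follow by unfolding the $\forall$- and $\exists$-clauses of the definition of $\pi^{\BB C}$ on the $\eta$-expansion of $A$ and using the universal properties of the relativized end $\int_{x}B^{\BB C}(x,x)$ together with the identity $\omega_{b}^{B^{\bot}}=(\delta_{b}^{B})^{\bot}$ already invoked in the proof of Proposition~\ref{cuts}.

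Finally I would check $\Phi(\rho\circ\pi)=\Phi(\rho)\circ\Phi(\pi)$ for $\pi\in\BB G(A,B)$ and $\rho\in\BB G(B,C)$. Composition in $\BB G$ is cut-elimination: writing $N$ for the $G$-net of conclusions $A^{\bot},C$ obtained by adding a cut between the two occurrences of $B$ in $\pi$ and $\rho$, the composite $\rho\circ\pi$ is the cut-free normal form of $N$, so by Proposition~\ref{cuts} $(\rho\circ\pi)^{\BB C}=N^{\BB C}$. By the definition of the interpretation of a net with cuts, $N^{\BB C}=\big((A^{\bot}\parr C)^{\BB C}\parr\hat{\bot}_{B^{\BB C}}\big)\circ N_{cut}^{\BB C}$, and $N_{cut}$ is precisely the instance of the $\otimes$-clause combining $\pi$ (conclusions $A^{\bot},B$) and $\rho$ (conclusions $B^{\bot},C$) into conclusions $A^{\bot},C,[B\otimes B^{\bot}]$; unfolding that clause and using the naturality of $\iota$, the dinaturality of $\hat\bot$, and the fact that $\hat\bot_{B^{\BB C}}\colon B^{\BB C}\otimes(B^{\BB C})^{*}\to\bot_{\BB C}$ is the canonical evaluation, one finds that $N^{\BB C}(X_{1}^{\varphi},\dots,X_{n}^{\varphi})$ transposes to $\Phi(\rho)\circ\Phi(\pi)$ — the familiar ``composition is evaluation'' computation in a $^{*}$-autonomous category (cf.\ Lemma~2.3.4 of \cite{Lamarche2006}). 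The only step specific to $\mathsf{MLL2}$ is the behaviour of this computation at a $\forall/\exists$ cut, which is exactly the commuting diagram established in the proof of Proposition~\ref{cuts}.

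I do not expect a serious obstacle here: all the delicate second-order phenomena have already been absorbed into the sequentialization-independence of $\pi^{\BB C}$ and into Proposition~\ref{cuts}, and the rest is the routine derivation of functoriality from soundness together with invariance under cut-elimination. The most error-prone point is the quantifier case of the identity computation, where one must chain several instances of the universal properties of relativized ends and coends carefully; that is where I would concentrate the effort.
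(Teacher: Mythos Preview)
Your proposal is correct and follows the same approach as the paper: derive functoriality from the sequentialization-independence of $\pi^{\BB C}$, Lemma~\ref{lemmuno}, and Proposition~\ref{cuts}. The paper in fact treats the theorem as an immediate consequence of these ingredients and gives no further argument, so your write-up is considerably more detailed than the original; the identity and composition verifications you spell out are precisely what the paper leaves implicit.
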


To account for multiplicative units we must introduce \emph{extended $G$-proof structures}, i.e. $G$-proof structures including two links with no premiss and unique conclusions $\B 1$ and  $\bot$, respectively, and with lax thinning edges (in the sense of \cite{Hughes2012}) connecting any occurrence of $\bot$ with a node. Extended $G$-nets are defined with the usual criterion. Cut-elimination extends straightforwardly to extended $G$-nets. Extended $G$-nets can be sequentialized into the sequent calculus for $\mathsf{MLL2}$ with units. 

The interpretation $\pi^{\BB C}$ extends in a straightforward way to extended $G$-nets. When no quantifier appears in an extended $G$-net $\pi$, then this net corresponds to a \emph{lax linking} in the sense of \cite{Hughes2012}, p.22. We will exploit the result contained in \cite{Hughes2012} that the category $\mathsf{Lax}$ of lax linkings modulo rewiring (see section \ref{linki}) is the free $^{*}$-autonomous category.

We can now define the equivalence relation generated by the interpretation of $G$-nets:
\begin{definition}[equivalence $\simeq_{\varepsilon}$]
We let $\simeq_{\varepsilon}$ be the equivalence relation over $G$-nets given by $\pi \simeq_{\varepsilon}\pi'$ iff
$\pi^{\BB C, \C F}=(\pi')^{\BB C, \C F}$, for any dinatural model $\C F$ over a $^{*}$-autonomous category $\BB C$.
We let $\BB G_{\varepsilon}$ be the category of cut-free $G$-nets considered modulo $\simeq_{\varepsilon}$.

\end{definition}

From proposition \ref{cuts} it follows that $\simeq_{\varepsilon}$ includes $\beta\eta$-equivalence (hence it is a congruence). The following example shows that $\simeq_{\varepsilon}$ strictly extends $\beta\eta$-equivalence. In the next section we will consider a more general example related to the Yoneda isomorphism.

\begin{example}
The category $\BB G$ is not $^{*}$-autonomous (while $\BB G_{\varepsilon}$ is). In particular, $\forall X(X^{\bot}\parr X)$ is not a tensor unit in $\BB G$: by composing any $G$-net  in $ \BB G(Y\otimes \forall X(X^{\bot}\parr X), Y)$ with the unique $G$-net in $\BB G(Y, Y\otimes \forall X(X^{\bot}\parr X))$ one cannot get $id_{Y\otimes \forall X(X^{\bot}\parr X)}$. 

%

%
%
%
%
%
%
%
%
%
%
%
%
%
%
%
%
%
%

\end{example}

\section{The Yoneda translation}

We introduce a way to translate proof nets in (a fragment of) $\mathsf{MLL2}$ into proof nets in $\mathsf{MLL}$ which is related to the \emph{Yoneda isomorphism}.
%
The latter is usually stated as a natural bijection $h:\mathsf{Nat}_{\BB C}({\BB C}(a,x),F(x))\simeq F(a)$, where $F:\BB C\to Set$ and $a\in Ob_{\BB C}$. The maps $h$ and $h^{-1}$ are defined by
\begin{equation}
\begin{matrix}
h(\theta_{x})= \theta_{a}(id_{a}) & \ &  (\theta_{x}\in \mathsf{Nat}_{\BB C}(\BB C(a,x),F(x))) \\
(h^{-1}(z))_{x}(f)=F(f)(z) & \ & (z\in F(a), f\in \BB C(a,x))
\end{matrix}
\end{equation}

In a dinatural model $\C F$, if $F,G$ are covariant functors, 
$\C F(F,G)\simeq \C F(\B 1_{\BB C},\int_{x}^{\C F}F(x)\multimap G(x))$ as a consequence of the universality of (relativized) ends and
the Yoneda isomorphism can be restated as the isomorphism below:
\begin{equation}\label{yodin}
h: \C F\left (\B 1_{\BB C},\int_{x}^{\C F}(F\multimap x)\multimap G(x)\right ) \ \simeq \ \C F(\B 1_{\BB C}, G\circ F)
\end{equation}
This isomorphism can be expressed in the language of $\mathsf{MLL2}$ by equivalences of the form
 $ \forall X((C\multimap X)\multimap D[X]) \simeq D[C/X] $, where $D[X]$ is a formula in which $X$ occurs only positively. This leads to the following definition:
 
\begin{definition}[Yoneda formula]
Given a variable $X\in \TT{Var}$ and a formula $A\in \C L^{2}$, $A$ is \emph{Yoneda in $X$} (resp. \emph{co-Yoneda in $X$}) if $A$ (resp. $A^{\bot}$) is of the form $(\bigotimes_{i}^{n}C_{i}\otimes X^{\bot})\parr D[X]$\footnote{Given a formula $A$ and a finite (possibly empty) sequence of formulas $C_{1},\dots, C_{n}$, we indicate by $\bigotimes_{i}^{n}C_{i}\otimes A$ (resp. $\bigparr_{i}^{n}C_{i}\parr A$) the formula $C_{1}\otimes \dots \otimes C_{n}\otimes A$ (resp. $C_{1}\parr \dots \parr C_{n}\parr A$).}, where $X$ does not occur in any of the $C_{i}$ and $D[X]$ has a unique, positive, occurrence of $X$.
\end{definition} 

We let $\C L^{2}_{\C Y}\subset \C L^{2}$ be the language obtained by restricting $\forall$ quantification (resp. $\exists$ quantification) to Yoneda (resp. co-Yoneda) formulas. In other words $\forall XA\in \C L^{2}_{\C Y}$ (resp. $\exists XA\in \C L^{2}_{\C Y}$) only if $A\in \C L^{2}_{\C Y}$ and $A$ is Yoneda in $X$ (resp. co-Yoneda in $X$).
We indicate by $\mathsf{MLL2}_{\C Y}$ the restriction of $G$-nets to $\C L^{2}_{\C Y}$.

The Yoneda isomorphism induces a translation from $\mathsf{MLL2}_{\C Y}$ formulas into propositional formulas:
the \emph{Yoneda translation} $A_{\C Y}$ of a formula $A\in \C L^{2}_{\C Y}$ is the multiplicative formula obtained by replacing systematically $\forall X((\bigotimes_{i}^{n}C_{i}\otimes X^{\bot})\parr D[X])$ by $D[\bigotimes_{i}^{n}C_{i}\otimes \B 1]$ and 
$\exists X((\bigparr_{i}^{n}C_{i}\parr X)\otimes D[X^{\bot}])$ by $D[\bigparr_{i}^{n}C_{i}\parr \bot]$. 
The formulas $\forall X(X^{\bot}\parr X)$ and $\exists X(X\otimes X^{\bot})$ translate the multiplicative units $\B 1, \bot$. 
We let $\C L_{\B 1, \bot}\subset \C L^{2}_{\C Y}$ be the language obtained by restricting $\forall XA$ to $A=X^{\bot}\parr X$ and $\exists XA$ to $A=X\otimes X^{\bot}$. We let $\mathsf{MLL2}_{\B 1, \bot}$ be the restriction of $G$-nets to $\C L_{\B 1, \bot}$.

Let us fix a dinatural model $\C F$ over a $\BB C$. For any formula $A$ Yoneda in $X$, the isomorphism \ref{yodin} takes the form $h_{A}: (\forall XA)^{\BB C,\C F}\to A_{\C Y}^{\BB C, \C F}$\footnote{It is easily seen that the Yoneda isomorphism can be restated for relativized coends in a dinatural model.}. $h_{A}$ can be represented by means of the extended $G$-nets $Yo_{1}^{A}\in \BB G(\forall XA, A_{\C Y})$ and $Yo_{2}^{A}\in \BB G(A_{\C Y}, \forall XA)$ illustrated in figure \ref{yoneda} (where the blue arrows correspond to lax thinning edges). By inspecting the behavior of these $G$-nets with respect to cut-elimination one easily sees that they correspond to $h_{A}$ in the following sense:

\begin{lemma}[Yoneda isomorphism for $G$-nets]\label{ioneok}
Let $A$ be Yoneda in $X$,
\begin{itemize}
\item[1.] For all $G$-net $\pi$ of conclusion $\forall XA$, 
$(Yo_{1}^{A}\circ\pi)^{\BB C,\C F}= h_{A}(\pi^{\BB C, \C F})$.
\item[2.] For all $G$-net $\pi$ of conclusion $\exists XA^{\bot}$, 
$(Yo_{2}^{A}\circ\pi)^{\BB C, \C F}= h_{A}^{-1}(\pi^{\BB C, \C F})$. 
\end{itemize}
\end{lemma}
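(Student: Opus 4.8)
The plan is to prove both items by unfolding the definition of cut-elimination applied to the composite $Yo_i^A \circ \pi$ and comparing the resulting dinatural transformation with the explicit formulas for $h_A$ and $h_A^{-1}$ recalled in equation \ref{yodin} (and in the Yoneda bijection $h$, $h^{-1}$ above). Concretely, for item 1, I would take a $G$-net $\pi$ of conclusion $\forall XA$, where $A = (\bigotimes_i^n C_i \otimes X^\bot)\parr D[X]$, and note that $\pi^{\BB C,\C F}$ is, by the $\forall$-clause of the interpretation, the universality arrow into $\int_y A^{\BB C}(y,y) = \int_y ((\bigotimes_i C_i \otimes y^\bot)\parr D[y])$ induced by the family $\theta_x : (\bigotimes_i C_i^{\BB C})^\bot \to D[X]^{\BB C}$ coming from $\pi$. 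Composing with $Yo_1^A$, which (reading off figure \ref{yoneda}) feeds the identity-like witness $\bigotimes_i C_i \otimes \B 1$ into the $X^\bot$-slot and collapses the lax thinning edges, the cut-elimination produces exactly $\delta^{A^{\BB C}}_{\bigotimes_i C_i^{\BB C}\otimes \B 1}$ post-composed with $\pi^{\BB C}$, evaluated so as to yield $\theta_{\bigotimes_i C_i \otimes \B 1}(\mathrm{id})$ up to the natural iso $\B 1 \parr D[\,\cdot\,] \simeq D[\,\cdot\,]$. That is precisely $h(\theta_x) = \theta_a(\mathrm{id}_a)$ transported along $a = \bigotimes_i C_i^{\BB C}\otimes \B 1$, i.e. $h_A(\pi^{\BB C,\C F})$. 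I would organise this as: (i) spell out $Yo_1^A$ as a $G$-net; (ii) compute the cut-free form of $Yo_1^A \circ \pi$ using the $\mathsf{MLL2}$ cut-elimination steps (in particular the $\forall/\exists$ step already analysed in proposition \ref{cuts}); (iii) apply proposition \ref{cuts} to identify $(Yo_1^A\circ\pi)^{\BB C}$ with the interpretation of that cut-free net; (iv) match the latter against the formula for $h_A$ using the wedge/dinaturality diagram for $\pi^{\BB C}$.

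For item 2, the argument is dual. Given $\pi$ of conclusion $\exists X A^\bot$, its interpretation is built via the co-wedge map $\omega^{\,\cdot\,}_{\,\cdot\,}$ and the arrow $\nu$ of equation \ref{coends2}; composing with $Yo_2^A$ — which is the "unit-introduction plus $\forall$-link" net realising $h_A^{-1}$ — and eliminating the resulting $\exists/\forall$ cut recovers the formula $(h^{-1}(z))_x(f) = F(f)(z)$, now in its dinatural/coend incarnation. Again I would reduce $Yo_2^A \circ \pi$ to cut-free form, invoke proposition \ref{cuts}, and check that the interpretation of the normal form is the co-wedge-theoretic expression for $h_A^{-1}(\pi^{\BB C,\C F})$. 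The fact that $h_A$ and $h_A^{-1}$ are genuinely inverse (which the lemma implicitly packages, since both directions are claimed) follows either from the categorical Yoneda isomorphism \ref{yodin} directly, or by composing the two $G$-nets and running cut-elimination to the identity, using the $\eta$-expansion/$\beta$-reduction bookkeeping already available.

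The main obstacle I expect is bookkeeping the cut-elimination on the non-quantifier part: the Yoneda nets $Yo_i^A$ carry the "context" formulas $\bigotimes_i C_i$ threaded through $D[\,\cdot\,]$ together with lax thinning edges for the unit, and one must verify that eliminating the cut correctly substitutes $\bigotimes_i C_i \otimes \B 1$ (resp. $\bigparr_i C_i \parr \bot$) for $X$ inside $D$ and that the lax-linking manipulations match the behaviour of the multiplicative units in the free $^*$-autonomous category (the $\mathsf{Lax}$ category of \cite{Hughes2012}). In other words, the delicate point is not the quantifier step — that is essentially the $\forall/\exists$ permutation/cut already handled — but checking that the unit and the positive-occurrence-of-$X$ machinery behave coherently, i.e. that the "one positive occurrence" hypothesis on $D[X]$ is exactly what makes the functorial action $D[f]$ in $h^{-1}(z)_x(f) = F(f)(z)$ correspond to the wiring of $Yo_2^A$. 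I would isolate this as a small separate lemma about the normal form of $Yo_i^A$ composed with an arbitrary net, so that the matching against $h_A^{\pm 1}$ becomes a direct diagram chase using the universality of the relativized end/coend.
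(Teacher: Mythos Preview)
Your proposal is correct and follows the same approach as the paper, which simply states that the lemma is obtained ``by inspecting the behavior of these $G$-nets with respect to cut-elimination'' and gives no further detail. Your plan---reduce $Yo_i^A\circ\pi$ to cut-free form, invoke proposition \ref{cuts}, and match the result against the explicit Yoneda maps $h_A$, $h_A^{-1}$---is exactly that inspection spelled out, and the bookkeeping you flag concerning the unit and the single positive occurrence of $X$ in $D$ is the only genuine work involved.
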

%

Let $\BB G^{\C Y}$  (resp. $\BB G_{\varepsilon}^{\C Y}$) be the subcategory of $\BB G$ made of $G$-nets (resp. $G$-nets modulo $\simeq_{\varepsilon}$) in the fragment $\mathsf{MLL2}_{\C Y}$.
By using the extended $G$-nets $Yo_{1}^{A},Yo_{2}^{A}$, the Yoneda  translation can be extended into a functor $\B{Yon}: \BB G^{\C Y}\to\mathsf{Lax}$, where $\mathsf{Lax}$ is the category of {lax linkings} for $\mathsf{MLL}$ recalled in the previous section. The functor $\B{Yon}$ associates to a $\C L^{2}_{\C Y}$ formula $A$ its translation $A_{\C Y}$ and to a $G$-net $\pi$ of conclusions $\Gamma$ the lax linking $\B{Yon}(\pi)$ of conclusions $\Gamma_{\C Y}$ obtained by cutting any occurrence of $\forall XA$ (resp. $\exists XA^{\bot}$) in $\pi$ with $Yo_{1}^{A}$ (resp. with $Yo_{2}^{A}$).

More precisely, $\pi_{\C Y}$ is constructed as follows: since $\pi$ is sequentializable, for any $\exists$-link of conclusion $\exists XA$, there exists a sub-net $\pi_{A}$ of conclusions $\Gamma, A[B/X]$ from which $\pi$ can be obtained by first adding the $\exists$-link and then adding other links. Starting from the topmost $\exists$-links in the sequentialization of $\pi$, let us replace the associated sub-nets $\pi_{A}$ with the sub-net $\pi^{*}_{A}$ obtained by  cutting $\pi_{A}$ with $Yo_{A}^{1}$ and then reducing this cut. 
After eliminating all $\exists$-links, the same construction, with $Yo_{A}^{2}$ in place of $Yo_{A}^{1}$ allows to eliminate $\forall$-links. $\pi_{\C Y}$ is clearly independent from the sequentialization chosen. However, by reasoning by induction on the sequentialization order one can be convinced that all cuts so introduced can be eliminated. A simple verification also shows that the transformation just defined is functorial (i.e. it preserves identity and composition).

As a functor from $\BB G^{\C Y}$ to $\mathsf{Lax}$, $\B{Yon}$ is not faithful: for instance, the composition $Yo_{1}^{A}\circ Yo_{2}^{A}$ is not equal to the identity on $\forall XA$, while its translation yields the identity on $A_{\C Y}$.
This implies that the $G$-net representation of the Yoneda isomorphism is not an isomorphism in $\BB G^{\C Y}$.
This is another way to say that the equivalence $\simeq_{\varepsilon}$ strictly extends $\beta\eta$-equivalence of $G$-nets.

However, the Yoneda isomorphism becomes an isomorphism of $G$-nets as soon as we consider these modulo $\simeq_{\varepsilon}$. More generally, by applying the ``true'' Yoneda isomorphism as well as lemma \ref{ioneok}, we obtain the following:

\begin{lemma}\label{faith}
$\B{Yon}$ is faithful as a functor from $\BB G_{\varepsilon}^{\C Y}$ to $\mathsf{Lax}$.
\end{lemma}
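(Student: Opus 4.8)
The goal is to show that $\B{Yon}:\BB G_\varepsilon^{\C Y}\to\mathsf{Lax}$ is faithful, i.e.\ that for $G$-nets $\pi,\pi'$ of the same conclusions $\Gamma$ in the fragment $\mathsf{MLL2}_{\C Y}$, if $\B{Yon}(\pi)=\B{Yon}(\pi')$ in $\mathsf{Lax}$ then $\pi\simeq_\varepsilon\pi'$. The strategy is to exhibit, for each such $\pi$, a \emph{factorization} of $\pi$ modulo $\simeq_\varepsilon$ through its translation, and then to read the hypothesis $\B{Yon}(\pi)=\B{Yon}(\pi')$ back through this factorization. Concretely, fix a dinatural model $\C F$ over a $^*$-autonomous $\BB C$; since $\simeq_\varepsilon$ is defined by quantification over all such models, it suffices to argue for an arbitrary fixed $\C F$ and $\BB C$. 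For each occurrence of a quantified subformula in the conclusions of $\pi$, the $G$-nets $Yo_1^A,Yo_2^A$ and their ``inverses'' $Yo_2^A,Yo_1^A$ compose — \emph{modulo $\simeq_\varepsilon$} — to identities, because lemma~\ref{ioneok} identifies their interpretations with $h_A,h_A^{-1}$ and the true Yoneda isomorphism~\eqref{yodin} guarantees $h_A\circ h_A^{-1}=\mathrm{id}$ and $h_A^{-1}\circ h_A=\mathrm{id}$ on the relevant hom-sets. Bundling one such pair for every quantifier occurrence in $\Gamma$ (and similarly in the domain object when $\pi$ is viewed as an arrow $C\to D$), we obtain $G$-nets $E_\Gamma\in\BB G(\Gamma,\Gamma_{\C Y})$ and $E_\Gamma'\in\BB G(\Gamma_{\C Y},\Gamma)$ with $E_\Gamma'\circ E_\Gamma\simeq_\varepsilon\mathrm{id}_\Gamma$, such that $\B{Yon}(\pi)$ is (up to the identification of $E_\Gamma$ with an identity-like lax linking on the $\mathsf{MLL}$ side) exactly $E_\Gamma\circ\pi$ read in $\mathsf{Lax}$.

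\textbf{Key steps.} First I would make precise the claim that $\B{Yon}(\pi)$, as a lax linking, coincides with the image under the interpretation functor of $E_\Gamma\circ\pi$ — this is essentially the definition of $\B{Yon}$ (cutting each quantifier link against the appropriate $Yo_i^A$) together with the fact, already noted in the excerpt, that all cuts so introduced eliminate, so $E_\Gamma\circ\pi$ really is a $G$-net in $\mathsf{MLL}$, hence a lax linking. Second, I would invoke the main result of \cite{Hughes2012}: $\mathsf{Lax}$ (lax linkings modulo rewiring) is the \emph{free} $^*$-autonomous category, so two lax linkings are equal iff they denote the same arrow in every $^*$-autonomous category; in particular $\B{Yon}(\pi)=\B{Yon}(\pi')$ gives $(E_\Gamma\circ\pi)^{\BB C,\C F}=(E_\Gamma\circ\pi')^{\BB C,\C F}$ for our fixed model. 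Third — and this is the crux — I would precompose with $E_\Gamma'$: from $(E_\Gamma\circ\pi)^{\BB C}=(E_\Gamma\circ\pi')^{\BB C}$ and functoriality of the interpretation (theorem~\ref{inter}, proposition~\ref{cuts}) we get $(E_\Gamma'\circ E_\Gamma\circ\pi)^{\BB C}=(E_\Gamma'\circ E_\Gamma\circ\pi')^{\BB C}$, and then the Yoneda-inverse identity $E_\Gamma'\circ E_\Gamma\simeq_\varepsilon\mathrm{id}_\Gamma$ — which, unwinding, is precisely $h_A^{-1}\circ h_A=\mathrm{id}$ applied at each quantifier occurrence via lemma~\ref{ioneok} — lets us replace $E_\Gamma'\circ E_\Gamma\circ\pi$ by $\pi$ and likewise for $\pi'$, yielding $\pi^{\BB C,\C F}=(\pi')^{\BB C,\C F}$. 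Since $\C F$ and $\BB C$ were arbitrary, $\pi\simeq_\varepsilon\pi'$, which is faithfulness.

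\textbf{Main obstacle.} The delicate point is the third step: justifying that $E_\Gamma'\circ E_\Gamma$ acts as the identity \emph{modulo $\simeq_\varepsilon$} on an arbitrary $G$-net $\pi$ ending in a quantified formula, not merely on closed proofs of $\forall XA$ in isolation. Lemma~\ref{ioneok} is stated for $G$-nets of conclusion exactly $\forall XA$ (resp.\ $\exists XA^\bot$), so applying it ``under a context $\Gamma$'' requires either a dinaturality/naturality argument moving the context past the Yoneda maps, or a sequentialization argument peeling $\pi$ down to the $\forall/\exists$-link in question — both are available (the excerpt's construction of $\pi_{\C Y}$ already does exactly this peeling), but one must check that the relativized Yoneda isomorphism~\eqref{yodin} in $\C F$ is natural enough in the ``$\Gamma$ parameter'' for the cancellation to go through uniformly, and that nested/iterated Yoneda occurrences (a Yoneda formula $D[X]$ whose body itself contains quantifiers) are handled by the evident induction on the quantifier nesting depth. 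A secondary bookkeeping issue is matching the lax-thinning structure of $E_\Gamma$ on the nose with the lax linking $\B{Yon}(\pi)$ actually produced by the stated construction, i.e.\ checking that the two descriptions of ``cut $\pi$ against the $Yo_i^A$ and normalize'' agree; this is routine given confluence and strong normalization of $G$-net reduction.
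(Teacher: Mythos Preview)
Your proposal is correct and is essentially the same argument the paper has in mind: the text preceding the lemma says it follows ``by applying the `true' Yoneda isomorphism as well as lemma~\ref{ioneok}'', and your plan is exactly an unpacking of that hint --- identify $\B{Yon}(\pi)$ with post-composition by the $Yo_i^A$, use lemma~\ref{ioneok} to see that these act as the isomorphisms $h_A,h_A^{-1}$ on interpretations, and cancel. Your identification of the main obstacle (lifting lemma~\ref{ioneok} from single-conclusion nets to nets with context, and handling nested quantifiers by induction on depth) is accurate and is precisely the bookkeeping the paper sweeps under the phrase ``more generally''; the fact that $h_A$ is an actual isomorphism in $\BB C$ (not merely a bijection of global elements) is what makes the under-context version go through via $\Delta^{\BB C}\parr h_A$.
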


In the next section we will introduce a compact representation of $G$-nets which allows to compute the equivalence $\simeq_{\varepsilon}$ in a syntactic way.

%
%
%
%
%
%
%
%
%
%
%
%
%
%

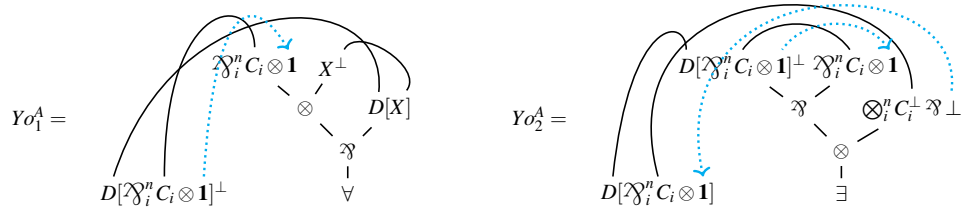
\begin{figure}
%
%
%
\adjustbox{scale=0.75,center}{
$Yo_{1}^{A}= \quad
\begin{tikzpicture}[baseline=-6ex]
\node(d1) at (0,-2.25) {$D[\bigparr_{i}^{n}C_{i}\otimes \B 1]^{\bot}$};

\node(d2) at (1.6,0) {$\bigparr_{i}^{n}C_{i}\otimes \B 1$};
\node(x1) at (3,0) {$X^{\bot}$};
\node(x2) at (4,-0.75) {$D[X]$};

\node(t1) at (2.5,-0.75) {$\otimes$};
\node(p1) at (3.25,-1.5) {$\parr$};
\node(e1) at (3.25,-2.25) {$\forall $};

\draw[thick] (d2) to (t1);
\draw[thick] (x1) to (t1);
\draw[thick] (x2) to (p1);
\draw[thick] (p1) to (t1);
\draw[thick] (p1) to (e1);
\draw[thick] (0,-1.95) .. controls (0,1.3) and  (1.6,1.3) .. (d2);
\draw[thick] (x1) to [bend left=95] (x2);
\draw[thick] (-0.9,-1.95) .. controls (0.2,1.5) and (4,1.5) .. (3.8,-0.45);
\draw[dotted, cyan, very thick, ->] (0.7,-1.95) .. controls (0.9,1.3) and  (1.6,1.3) .. (2.2,0.3);

\end{tikzpicture}
\qquad\qquad
Yo_{2}^{A}= \quad
\begin{tikzpicture}[baseline=-6ex]
\node(d1) at (0,-2.25) {$D[\bigparr_{i}^{n}C_{i}\otimes \B 1]$};

\node(d2) at (1.5,0) {$D[\bigparr_{i}^{n}C_{i}\otimes \B 1]^{\bot}$};
\node(x1) at (3.5,0) {$\bigparr_{i}^{n}C_{i}\otimes \B 1$};
\node(x2) at (4.5,-0.75) {$\bigotimes_{i}^{n}C_{i}^{\bot}\parr\bot$};

\node(t1) at (2.5,-0.75) {$\parr$};
\node(p1) at (3.25,-1.5) {$\otimes$};
\node(e1) at (3.25,-2.25) {$\exists $};

\draw[thick] (d2) to (t1);
\draw[thick] (x1) to (t1);
\draw[thick] (x2) to (p1);
\draw[thick] (p1) to (t1);
\draw[thick] (p1) to (e1);
\draw[thick] (0,-1.95) .. controls (-1,1.8) and (4.5,1.8) .. (x2);
\draw[thick] (3.4,0.3) to [bend right=55] (1.5,0.3);
\draw[very thick, cyan, dotted, <-] (4.1,0.3) to [bend right=55] (2.2,0.3);

\draw[thick] (-0.8,-1.95) .. controls (-0.5,0.9) and (0.5,0.9) .. (0.5,0.3);
\draw[very thick, cyan, dotted, <-] (0.8,-1.95) .. controls (0,1.8) and (5.5,1.8) .. (5.2,-0.45);

\end{tikzpicture}
$}
\caption{$G$-nets for the Yoneda isomorphism}
\label{yoneda}
\end{figure}

\section{Linkings for $\mathsf{MLL2}_{\C Y}$}

In this section we introduce a compact representation of proof nets for $\mathsf{MLL2}_{\C Y}$. We adopt a notion of \emph{linking} inspired from \cite{Hughes2012,Hughes2018} and a notion of \emph{rewiring} inspired from \cite{Blute1996, Heij2014,Hughes2012} (in which the role of thinning edges is given by \emph{witness edges}). In particular, the restriction to $\C L^{2}_{\B 1, \bot}$ yields a formalism which is equivalent to lax linkings for $\mathsf{MLL}$ (lemma \ref{star}).

Given a formula $A$ (resp. a sequent $\Gamma$) we let $tA=(nA,eA)$ (resp. $t\Gamma=(n\Gamma, e\Gamma)$) be its parse tree (resp. parse forest). We will often confuse the nodes of $\Gamma$ with the associated formulas.
Let $\Gamma$ be a clean sequent. An \emph{edge} $e$ is a pair of leaves of $t\Gamma$ consisting in two occurrences of  opposite polarity of the same variables.  
Any $\exists$-link in $t\Gamma$ has a distinguished eigenvariable. A variable is an \emph{existential variable} if it occurs quantified existentially. We will indicate existential variables as $\B X, \B Y, \dots$, to stress that these variables are treated as ``unknown variables''. A formula containing no free occurrences of existential variables will be called a \emph{ground formula}.
Since in all formulas of the form $\exists XA$, $A$ is co-Yoneda in $X$, existential variables come in pairs, called \emph{co-edges}. We let $\Gamma^{\exists}$ be the set of co-edges of $\Gamma$. Any co-edge $c$ is uniquely associated with an existential formula $A_{c}$. For any formula $B$ and co-edge $c$, we say that $B$ \emph{depends on $c$} when $c=(\B X,\B X^{\bot})$ and $\B X$ occurs free in $B$.

A \emph{linking} of $\Gamma$ is a set of disjoint edges whose union contains all but the existential variables of $\Gamma$. A \emph{witnessing function} over $\Gamma$ is an injective function $W:\Gamma^{\exists}\to n\Gamma$, associating any co-edge with a node of $\Gamma$. We will represent witnessing functions by using colored and dotted arrows, called \emph{witness edges}, going from the two nodes of a co-edge $c$ to the formula $W(c)$.  An \emph{$\exists$-linking} over $\Gamma$ is a pair $\ell=(E,W)$, where $E$ is a linking over $\Gamma$ and $W$ is a witnessing function over $\Gamma$. Examples of $\exists$-linkings are shown in fig. \ref{equi}.

Given a witnessing function $W$, we let the \emph{dependency graph of $W$} be the directed graph $D_{W}$ with nodes the co-edges and arrows $c\to c'$ when $W(c)$ depends on $c'$. We call a witnessing function $W$ \emph{acyclic} when the graph $D_{W}$ is directed acyclic. We call $\ell=(E,W)$ \emph{acyclic} when $W$ is acyclic. When $D_{W}$ is acyclic, the witnessing function $W$ allows to associate a ground formula (called a \emph{ground witness}) $GW(c)$ to any co-edge: if $c$ is a leaf of $D_{W}$, then $W(c)$ is a already ground formula, so $GW(c):=W(c)$; otherwise, if $D_{W}$ contains the edges $(c,c_{1}),\dots, (c,c_{n})$, $W(c)$ depends on the existential variables $\B X_{1},\dots, \B X_{n}$ associated to the co-edges $c_{1},\dots, c_{n}$, respectively, then by induction on the well-founded order induced by $D_{W}$, we can suppose the $GW(c_{i})$ well-defined and put
$GW(c):=W(c)[GW(c_{1})/\B X_{1},\dots, GW(c_{n})/\B X_{n}]$.

Acyclic $\exists$-linkings provide a compact representation of $G$-proof structures, since to an $\exists$-linking $\ell=(E,W)$ can be associated a unique $G$-proof structure $\pi(\ell)$ as follows: 
starting from co-edges which are leaves in $D_{W}$, we repeatedly apply to the graph $E\cup t\Gamma$, recursively on $D_{W}$, the \emph{co-edge expansion} operation shown in fig. \ref{recovery}, which instantiates the unknown variable of a co-edge $c$ with its ground witness $GW(c)$. 
 An $\exists$-linking $\ell$ is \emph{correct} when it is acyclic and $\pi(\ell)$ is a $G$-net.

\begin{figure}\begin{subfigure}{0.48\textwidth}
\adjustbox{scale=0.7, center}{
\begin{tikzpicture}[baseline=-6ex]
\node(a) at (0,0) {$\B X$};
\node(b) at (1,0) {$D[\B X]^{\bot}$};
\node(d) at (-1,0) {$\bigparr_{i}C_{i}^{\bot}$};

\node(p1) at (-0.5,-0.75) {$\parr$};
\node(t1) at (0.25,-1.5) {$\otimes$};
\node(e) at (0.25,-2.25) {$\exists$};

\draw[thick] (d) to (p1);
\draw[thick] (a) to (p1);
\draw[thick] (p1) to (t1);
\draw[thick] (b) to (t1);
\draw[thick] (e) to (t1);
\node (cc) at (0.5,0.8) {$c$};

%

\node(aa) at (2,1) {$GW(c)$};
\draw[thick] (1.8, 1.5) to [bend left=15] (aa);
\draw[thick] (aa) to [bend right=15] (2.2,0.5);


%
%

\draw[very thick, dotted, violet] (a) to [bend left=55] (1.2,0.3);
\draw[very thick, dotted, violet, ->, rounded corners=6pt] (0.5,0.5) to [bend left=35] (aa);

\end{tikzpicture}
$\qquad\leadsto\qquad$
\begin{tikzpicture}[baseline=-6ex]
\node(a) at (0,0) {$GW(c)$};
\node(b) at (1.6,0) {$D[GW(c)]^{\bot}$};
\node(d) at (-1.3,0) {$\bigparr_{i}C_{i}^{\bot}$};

\node(p1) at (-0.5,-0.75) {$\parr$};
\node(t1) at (0.25,-1.5) {$\otimes$};
\node(e) at (0.25,-2.25) {$\exists$};

\draw[thick] (d) to (p1);
\draw[thick] (a) to (p1);
\draw[thick] (p1) to (t1);
\draw[thick] (b) to (t1);
\draw[thick] (e) to (t1);

%

\node(aa) at (2.7,1) {$GW(c)$};

\draw[thick] (0.1, 0.5) to [bend right=15] (a); 
\draw[thick] (aa) to [bend right=15] (3,0.5);


\draw[thick] (1,0.3) to [bend left=75] (aa);

\end{tikzpicture}}
\caption{Expansion of a maximal co-edge}
\label{recovery}
\end{subfigure}
\begin{subfigure}{0.5\textwidth}
\adjustbox{scale=0.6, center}{
$\begin{matrix}
\begin{matrix}
\begin{tikzpicture}[baseline=-2ex]
\node(a) at (0,0) {$A$};
\node(b) at (0.5,0.5) {$B$};

\draw[very thick, violet, dotted] (-1.8,-0.5) to [bend left=45] (-1.2,-0.5);
\draw[very thick, violet, dotted, ->] (-1.5,-0.35) to [bend left=55] (a);

\end{tikzpicture}
 & \quad\leadsto\quad &
\begin{tikzpicture}[baseline=-2ex]

\node(a) at (0,0) {$A$};
\node(b) at (0.5,0.5) {$B$};

\draw[very thick, violet, dotted] (-1.8,-0.5) to [bend left=45] (-1.2,-0.5);
\draw[very thick, violet, dotted, ->] (-1.5,-0.35) to [bend left=45] (b);

\end{tikzpicture}
\end{matrix} \\
\ \\
\text{(1) Moving one witness edge (where $W^{-1}(B)=\emptyset$)} \\
\ \\
\begin{matrix}
 \begin{tikzpicture}[baseline=1ex]

\draw[very thick, orange, dotted] (-1.8,-0.5) to [bend left=45] (-1.2,-0.5);
\draw[very thick, orange, dotted, ->] (-1.5,-0.35) to [bend left=55] (0,0);

\draw[very thick, violet, dotted] (0,0) to [bend left=45] (0.6,0);
\draw[very thick, violet, dotted, ->] (0.3,0.15) to [bend left=55] (1.5,0.75);

\end{tikzpicture}
& \quad\leadsto\quad &
\begin{tikzpicture}[baseline=1ex]

\draw[very thick, violet, dotted] (-1.8,-0.5) to [bend left=45] (-1.2,-0.5);
\draw[very thick, violet, dotted, ->] (-1.5,-0.35) to [bend left=55] (0,0);

\draw[very thick, orange, dotted] (0,0) to [bend left=45] (0.6,0);
\draw[very thick, orange, dotted, ->] (0.3,0.15) to [bend left=55] (1.5,0.75);

\end{tikzpicture} 
\end{matrix} \\
\ \\
\text{(2) Swapping two witness edges}
\end{matrix}$
}
\caption{rewitnessing moves}
\label{rewit}
\end{subfigure} \\

\

\begin{subfigure}{0.35\textwidth}
\adjustbox{scale=0.75,center}{
\begin{tikzpicture}
\node(a) at (0,0) {$\exists X((Y^{\bot}\parr X)\otimes X^{\bot})$};
\node(b) at (4,0) {$\forall X((Y\otimes X^{\bot})\parr X)$};

\draw[thick] (-0.7,0.3) to [bend left=35] (3.4,0.3);
\draw[thick] (4.1,0.3) to [bend left=45] (5.3,0.3);
\draw[very thick, dotted, cyan] (0.1,0.3) to [bend left=35] (1.3,0.3);
\draw[very thick, dotted, cyan, ->] (0.7,0.5) to [bend left=45] (4.1,0.3);

\end{tikzpicture}}
\adjustbox{scale=0.65,center}{
\begin{tikzpicture}
\node(a) at (0,0) {$\exists X((Y^{\bot}\parr X)\otimes X^{\bot})$};
\node(b) at (4,0) {$\forall X((Y\otimes X^{\bot})\parr X)$};

\draw[thick] (-0.7,0.3) to [bend left=35] (3.4,0.3);
\draw[thick] (4.1,0.3) to [bend left=45] (5.3,0.3);
\draw[very thick, dotted, orange] (0.1,0.3) to [bend left=35] (1.3,0.3);
\draw[very thick, dotted, orange, ->] (0.7,0.5) to [bend left=45] (3.4,0.3);

\end{tikzpicture}}
\caption{$\sim$-equivalent $\exists$-linkings}
\label{equi}
\end{subfigure}
\ \ \ \ \ \ \ \  \ \ \ \ 
\begin{subfigure}{0.58\textwidth}
\adjustbox{scale=0.65, center}{
\begin{tikzpicture}
\node(a) at (0,0) {$Y^{\bot}$};
\node(b) at (1,0) {$X$};
\node(c) at (2,-0.75) {$X^{\bot}$};

\node(p1) at (0.5,-0.75) {$\parr$};
\node(t1) at (1.25,-1.5) {$\otimes$};
\node(e) at (1.25,-2.25) {$\exists$};

\draw[thick] (a) to (p1);
\draw[thick] (b) to (p1);
\draw[thick] (c) to (t1);
\draw[thick] (t1) to (p1);
\draw[thick] (t1) to (e);

\node(a') at (4,0) {$Y$};
\node(b') at (5,0) {$X^{\bot}$};
\node(c') at (6,-0.75) {$X$};

\node(t2) at (4.5,-0.75) {$\otimes$};
\node(p2) at (5.25,-1.5) {$\parr$};
\node(f) at (5.25,-2.25) {$\forall$};

\draw[thick] (a') to (t2);
\draw[thick] (b') to (t2);
\draw[thick] (c') to (p2);
\draw[thick] (p2) to (t2);
\draw[thick] (p2) to (f);

\draw[thick] (a) to [bend left=55] (a');
\draw[thick] (b) to [bend left=55] (b');
\draw[thick] (c) to [bend left=75] (c');

\end{tikzpicture}
\qquad
\begin{tikzpicture}
\node(a) at (0,0) {$Y^{\bot}$};
\node(b) at (1,0) {$Y$};
\node(c) at (2,-0.75) {$Y^{\bot}$};

\node(p1) at (0.5,-0.75) {$\parr$};
\node(t1) at (1.25,-1.5) {$\otimes$};
\node(e) at (1.25,-2.25) {$\exists$};

\draw[thick] (a) to (p1);
\draw[thick] (b) to (p1);
\draw[thick] (c) to (t1);
\draw[thick] (t1) to (p1);
\draw[thick] (t1) to (e);

\node(a') at (4,0) {$Y$};
\node(b') at (5,0) {$X^{\bot}$};
\node(c') at (6,-0.75) {$X$};

\node(t2) at (4.5,-0.75) {$\otimes$};
\node(p2) at (5.25,-1.5) {$\parr$};
\node(f) at (5.25,-2.25) {$\forall$};

\draw[thick] (a') to (t2);
\draw[thick] (b') to (t2);
\draw[thick] (c') to (p2);
\draw[thick] (p2) to (t2);
\draw[thick] (p2) to (f);

\draw[thick] (a) to [bend left=65] (b);
\draw[thick] (c) to [bend left=75] (a');
\draw[thick] (b') to [bend left=75] (c');

\end{tikzpicture}}
\caption{$\simeq_{\varepsilon}$-equivalent $G$-nets}
\label{equinet}
\end{subfigure}
\caption{$\exists$-linkings and rewitnessing.}
\end{figure}
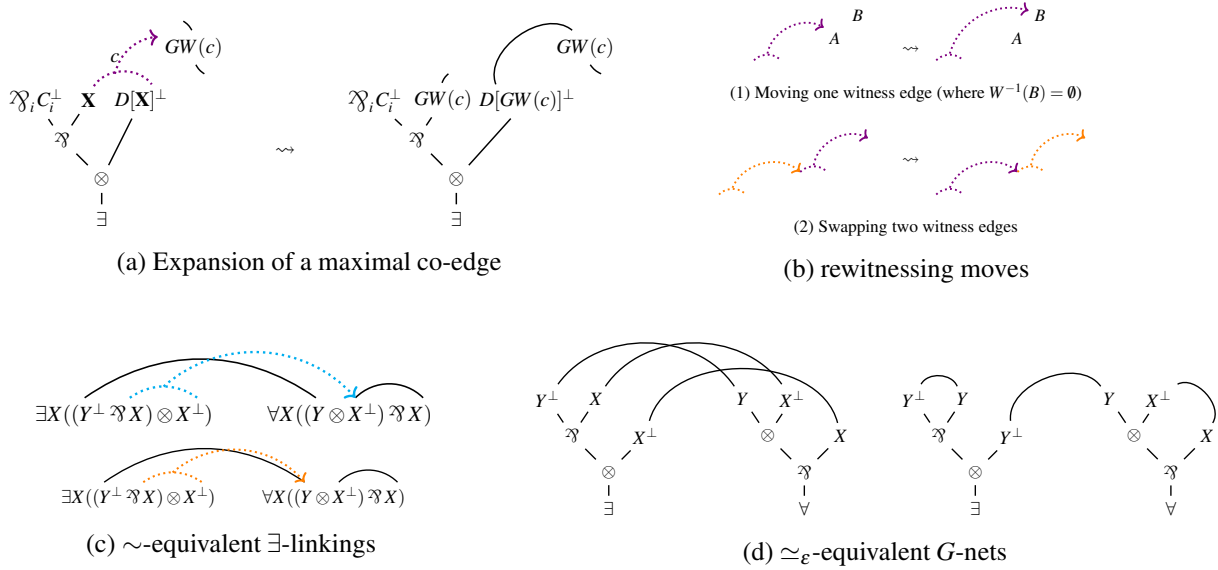

We introduce an equivalence relation over correct $\exists$-linkings, called \emph{rewitnessing}, inspired from the ``rewiring'' technique in \cite{Blute1996, Heij2014, Hughes2012}. Given a witnessing function $W$, a \emph{simple rewitnessing of $W$} is a witnessing function $W'$ obtained by either moving exactly one witness edge from one formula to another ``free'' one (i.e. to some formula $A$ such that $W^{-1}(A)=\emptyset$), or by switching two consecutive witness edges, i.e. two edges $c_{1},c_{2}$ such that $W(c_{1})\in c_{2}$, as shown in fig. \ref{rewit}. We let $\ell \sim_{1} \ell'$ if $\ell=(E,W)$, $\ell'=(E,W')$ and $W'$ is a simple rewitnessing of $W$. We let $\sim$ be the reflexive and transitive closure of $\sim_{1}$. 

In fig. \ref{equi} are shown $\sim$-equivalent $\exists$-linkings over $\exists X((Y^{\bot}\parr X)\otimes X^{\bot}), \forall X((Y\otimes X^{\bot})\parr X)$. These correspond to the two $\simeq_{\varepsilon}$-equivalent $G$-nets in fig. \ref{equinet}. In the next section we will show that rewitnessing can be used to compute the $\varepsilon$-equivalence.
When $A$ is Yoneda in $X$, we let $ID_{\forall XA}$ denote the $\exists$-linking in figure \ref{ide}. 

%


We let $\BB L^{\exists}$ be the \emph{category of $\exists$-linkings}, whose objects are the formulas of $\mathsf{MLL2}_{\C Y}$ and where $\BB L^{\exists}(A,B)$ is the set of $\sim$-equivalence classes of correct $\exists$-linkings of conclusions $A^{\bot},B$, with composition given by cut-elimination (see next section). We let $\BB L^{\B 1,\bot}$ be the restriction of $\BB L^{\exists}$ to $\mathsf{MLL2}_{\B 1, \bot}$ formulas.

Similarly to the functor $\B{Yon}:\BB G\to \mathsf{Lax}$, we can construct a functor $\C Y: \BB L^{\exists}\to Lax$ for $\exists$-linkings. The linking $\ell_{\C Y}$ is obtained in two steps: first, for any co-edge $c=(\B X, \B X^{\bot})$, replace $A_{c}$ by $(A_{c})_{\C Y}$, replace the thinning edge from $c$ to $W(c)$ by a lax thinning edge from $\bot$ to $W(c)$, and move all lax thinning edges pointing to $\B X$ or $\B X^{\bot}$ (or to $\B X\otimes \B X^{\bot}$ if $A_{c}=\bot^{\exists}$) onto $W(c)$; once all co-edges have been eliminated, replace any universal formula $\forall XA$ by $(\forall XA)_{\C Y}$ and eliminate the unique edge $(X^{\bot},X)$. The transformation just described yields then a lax linking $E_{\C Y}$ over the $\mathsf{MLL}$ sequent $\Gamma_{\C Y}$.
Observe that witness edges are replaced by lax thinning edges, see fig. \ref{compare}.


\begin{figure}
\adjustbox{scale=0.75, center}{
$
\ell=\quad
\begin{tikzpicture}[baseline=-6ex]

\node(a) at (0,0) {$A$};
\node(b) at (-2, -1.5) {$\exists$};
\node(t) at (-2,-0.75) {$\otimes$};

\node(c1) at (-2.5,0) {$\B X$};
\node(c2) at (-1.5,0) {$\B X^{\bot}$};

\draw[thick] (t) to (b);
\draw[thick] (t) to (c1);
\draw[thick] (t) to (c2);

\draw[thick] (0.3,0.7) to [bend right=15] (a);
\draw[thick] (0.3,-0.7) to [bend left=14] (a);

\draw[very thick, cyan, dotted] (c1) to [bend left=64] (c2);
\draw[very thick, cyan, dotted, ->] (-2,0.5) to [bend left=44] (a);

\end{tikzpicture}
\qquad\qquad\pi(\ell)=\quad
\begin{tikzpicture}[baseline=-6ex]

\node(a) at (0,0) {$A$};
\node(b) at (-2, -1.5) {$\exists$};
\node(t) at (-2,-0.75) {$\otimes$};

\node(c1) at (-2.5,0) {$A$};
\node(c2) at (-1.5,0) {$A^{\bot}$};

\draw[thick] (t) to (b);
\draw[thick] (t) to (c1);
\draw[thick] (t) to (c2);

\draw[thick] (0.3,0.7) to [bend right=15] (c1);
\draw[thick] (0.3,-0.7) to [bend left=14] (a);

\draw[thick] (c2) to [bend left=35] (a);

\end{tikzpicture}
\qquad\qquad
\ell_{\C Y}=\quad
\begin{tikzpicture}[baseline=-2ex]
\node(a) at (0,0) {$A$};

\node(b) at (-2, 0) {$\bot$};

\draw[very thick, cyan, dotted, ->] (b) to [bend left=44] (a);
\draw[thick] (0.3,0.7) to [bend right=15] (a);
\draw[thick] (0.3,-0.7) to [bend left=14] (a);
\draw[thick] (b) to [bend left=14] (-2.3,-0.7);

\end{tikzpicture}
$
}
\caption{Local comparison of $\ell$, $\pi(\ell)$ and $\ell_{\C Y}$ for $\bot^{\exists}=\exists X(X\otimes X^{\bot})$.}
\label{compare}
\end{figure}

%
%
%
%
%
%

By letting $\sim_{lax}$ denote the rewitnessing equivalence over lax linkings, we have:

\begin{lemma}\label{commuyoneda}
$\ell \sim \ell' \To\ell_{\C Y}\sim_{lax}\ell'_{\C Y}$. 
\end{lemma}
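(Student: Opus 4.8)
The plan is to prove the implication by reducing it to the case of a single simple rewitnessing move, since $\sim$ is by definition the reflexive-transitive closure of $\sim_{1}$ and $(-)_{\C Y}$ is a fixed operation on $\exists$-linkings; if $\ell\sim_{1}\ell'$ implies $\ell_{\C Y}\sim_{lax}\ell'_{\C Y}$, then the general statement follows by a straightforward induction on the length of a chain of $\sim_{1}$-moves. So I would first state this reduction, and then analyze the two shapes of simple rewitnessing moves from figure \ref{rewit} separately.

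For move (1) — moving one witness edge from a formula to a ``free'' formula $A$ with $W^{-1}(A)=\emptyset$ — I would trace through the definition of $(-)_{\C Y}$: a witness edge from a co-edge $c$ to $W(c)$ is transformed into a lax thinning edge from the occurrence of $\bot$ (introduced when $A_{c}$ is replaced by $(A_{c})_{\C Y}$) to $W(c)$, and the lax thinning edges previously pointing at the existential variables $\B X,\B X^{\bot}$ of $c$ are moved onto $W(c)$. Changing $W(c)$ to $W'(c)=A$ therefore corresponds, after translation, precisely to moving the corresponding lax thinning edge(s) from the old target to $A$. One must check that $A$ is still ``free'' in the translated linking, i.e. that no lax thinning edge already points to the image of $A$ under $(-)_{\C Y}$; this holds because $W^{-1}(A)=\emptyset$ and the translation only creates lax thinning edges targeting nodes in the image of some $W(c')$. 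Hence the translated move is (a composition of) simple rewiring moves of type (1) for lax linkings, so $\ell_{\C Y}\sim_{lax}\ell'_{\C Y}$. For move (2) — swapping two consecutive witness edges $c_{1},c_{2}$ with $W(c_{1})\in c_{2}$ — the key observation is that the existential variable of $c_{1}$ sits ``on'' the co-edge $c_{2}$, so after translation the lax thinning edge arising from $c_{1}$ points at a node which, before translation, was one of the existential variables $\B X_{2},\B X_{2}^{\bot}$ and has now been relocated onto $W(c_{2})$; swapping $W(c_{1})$ and $W(c_{2})$ therefore translates into swapping the two corresponding lax thinning edges, which is exactly a type (2) rewiring move for lax linkings (one should also note that the ground-witness substitution used to build $\pi(\ell)$ is irrelevant here, since $(-)_{\C Y}$ works directly with $W$, not with $GW$).

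The main obstacle I anticipate is bookkeeping rather than conceptual: one must be careful that the order in which co-edges are eliminated in the definition of $(-)_{\C Y}$ does not affect which node a given witness edge ends up attached to, and that the relocation of lax thinning edges "pointing to $\B X$ or $\B X^{\bot}$" is compatible with the swap in move (2) — in particular, after $c_{1}$ is processed its variables disappear, so when $c_{2}$ is later processed there is no longer a stray edge to relocate, and one has to confirm the net effect of the two processing steps, taken in either order, is the same up to $\sim_{lax}$. Once this coherence of $(-)_{\C Y}$ with the elimination order is pinned down (it is already asserted, for $\pi(\ell)$, that $\pi(\ell)$ is independent of the sequentialization, and the analogous fact for $(-)_{\C Y}$ is routine), the correspondence between the two kinds of rewitnessing moves and the two kinds of lax rewiring moves is immediate, and the lemma follows.
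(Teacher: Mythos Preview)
Your overall strategy—reducing to a single $\sim_1$-step and treating the two move types separately—matches the paper's approach, and your analysis of move (1) is correct (indeed, your remark that it may become a \emph{composition} of lax rewiring moves, because several thinning edges might get redirected through $c$, is arguably more careful than the paper's terse ``corresponds to a rewiring move'').

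Your analysis of move (2), however, is off. You claim the swap of $W(c_{1})$ and $W(c_{2})$ translates into a swap of lax thinning edges, a ``type (2) rewiring move for lax linkings''. In fact no move is needed at all: a type (2) rewitnessing leaves $\ell_{\C Y}$ \emph{unchanged}. If $W(c_{1})\in c_{2}$ and $W(c_{2})=A$, then in the construction of $\ell_{\C Y}$ the thinning edge issued from $\bot_{c_{1}}$ first targets a variable of $c_{2}$ and is then redirected—by the clause ``move all lax thinning edges pointing to $\B X$ or $\B X^{\bot}$ onto $W(c)$''—to $W(c_{2})=A$; thus both $\bot_{c_{1}}$ and $\bot_{c_{2}}$ end up attached to $A$. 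After the swap ($W'(c_{2})\in c_{1}$, $W'(c_{1})=A$) the same redirection mechanism again sends both to $A$. Hence $\ell_{\C Y}=\ell'_{\C Y}$ on the nose, which also dissolves the bookkeeping worry you raise about processing order. The paper's proof is exactly this observation: move (1) becomes a lax rewiring, while move (2) does not affect $\ell_{\C Y}$.
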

\begin{proof}
The claim follows from the fact that a rewitnessing move of type (1) (fig. \ref{rewit}) in $\ell$ corresponds to a rewiring move in $\ell_{\C Y}$, while a rewitnessing move of type (2) in $\ell$ does not affect $\ell_{\C Y}$. 

\end{proof}

\section{Cut-elimination for $\exists$-linkings}\label{cutelim}

We let a \emph{cut sequent} be a sequent of the form $\Gamma,[\Delta]$, where $\Gamma,\Delta$ is a clean sequent and $\Delta$ is a multiset of formulas, called \emph{cut formulas}, of the form $A\otimes A^{\bot}$ (that we depict by a configuration of the form
 \begin{tikzpicture}[baseline=-1ex]
\node(a) at (0,0) {$A$};
\node(b) at (1.2,0) {$A^{\bot}$};
\draw[thick, rounded corners=4pt] (a) to (0,-0.4) to (1.2,-0.4) to (b);
\end{tikzpicture}). 

By an $\exists$-linking over $\Gamma, [\Delta]$ we indicate an $\exists$-linking over $\Gamma,\Delta$. We call an $\exists$-linking $\ell=(E,W)$ \emph{ready} when $W^{-1}(A)=\emptyset$ for all $A$ occurring in a cut-formula. Cut-elimination relies on the following lemma, proved in appendix \ref{C}.

\begin{lemma}[``ready lemma'']\label{ready}
For any correct $\exists$-linking $\ell$ there exists a ready $\ell'$ such that $\ell'\sim \ell$.
\end{lemma}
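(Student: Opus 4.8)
The plan is to prove the ready lemma by induction on the number of "obstructions", i.e. co-edges $c$ such that $W(c)$ is (a subformula occurring inside) a cut-formula, ordered in a way compatible with the dependency graph $D_W$. First I would make precise what it means for $W(c)$ to point into a cut-formula $A\otimes A^\bot$: the node $W(c)$ is either $A$, or $A^\bot$, or some node strictly below one of them in the parse tree. The key observation is that, by correctness, $\pi(\ell)$ is a $G$-net, so in particular the switching-graph of $\pi(\ell)$ is acyclic and connected; this constrains how witness edges can sit relative to cuts and is what will let the rewitnessing moves of figure~\ref{rewit} actually be applied while preserving correctness.

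The main step is: given a correct $\ell=(E,W)$ with at least one obstruction, produce a $\sim_1$-move (or a short chain of them) that strictly decreases the number of obstructions without creating new ones. I would pick an obstruction $c$ that is \emph{maximal} in $D_W$ among obstructions — i.e. no co-edge $c'$ with $W(c')$ pointing into a cut formula has an arrow $c'\to c$ — equivalently, $GW(c)$ is a ground formula not mentioning any "still-bad" existential variable. Since $W(c)$ sits inside a cut-formula $A\otimes A^\bot$, I want to move the witness edge of $c$ off the cut-formula onto a "free" node (one not in the image of $W$), using move~(1); if no free node is immediately available I first use swapping moves~(2) along a chain of consecutive witness edges to liberate one. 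Producing the target free node is where correctness is used essentially: in a correct $\exists$-linking the witnessing function is injective into $n\Gamma$, and a counting/acyclicity argument (the dependency graph together with the fact that co-edge expansion must yield a $G$-net) guarantees that outside the cut-formulas there is enough room — intuitively each co-edge "really" points at some ground witness $GW(c)$ which lives over the non-cut part of the sequent, and the cut-formula is just $A\otimes A^\bot$ with $A$ the expansion of the relevant co-edge, so $W(c)$ can always be rerouted to a node of $\Gamma$ realizing the same ground witness. I would check, using the co-edge expansion picture of figure~\ref{recovery}, that $\pi(\ell)=\pi(\ell')$ up to the rewiring/rewitnessing equivalence — in fact the two moves in figure~\ref{rewit} were designed precisely so that $\pi$ is invariant — hence $\ell'$ is again correct, so the induction applies.

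The hard part will be the bookkeeping in the liberation step: showing that when the witness edge of the maximal obstruction $c$ cannot be moved directly to a free node, there is a finite chain of consecutive witness edges $c=c_0, c_1, \dots, c_k$ with $W(c_i)\in c_{i+1}$ ending at a $c_k$ whose image \emph{can} be moved to a free node, and that swapping along this chain (move~(2) repeatedly) followed by one application of move~(1) both preserves correctness and reduces the obstruction count. This requires arguing that such chains cannot loop back into a cut-formula — which follows from acyclicity of $D_W$ plus injectivity of $W$ — and that the pigeonhole "a free node exists" step is valid, which is where I expect to spend the most care, tracking exactly how many nodes the $2|\Gamma^\exists|$ existential leaves plus the cut-formula skeleton occupy versus how many nodes $t\Gamma$ provides. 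Once the single-step decrease is established, the lemma follows: iterate until no obstruction remains, i.e. until $W^{-1}(A)=\emptyset$ for every $A$ in a cut-formula, which is exactly readiness; and all the intermediate $\exists$-linkings are $\sim$-equivalent to $\ell$ by construction, so the final $\ell'$ satisfies $\ell'\sim\ell$.
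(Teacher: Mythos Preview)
Your plan has a genuine gap at its core. You write that ``the two moves in figure~\ref{rewit} were designed precisely so that $\pi$ is invariant'', and you intend to deduce correctness of $\ell'$ from correctness of $\ell$ via $\pi(\ell)=\pi(\ell')$. This is false: rewitnessing does \emph{not} leave $\pi(\ell)$ unchanged. Figure~\ref{equi} and figure~\ref{equinet} show two $\sim$-equivalent $\exists$-linkings whose associated $G$-nets are visibly distinct (they differ in the existential witness and hence in the whole axiom structure above the $\exists$-link). Since $\sim_1$ is by definition a relation between \emph{correct} $\exists$-linkings, each single move must be accompanied by a direct verification that the target is again a $G$-net; you cannot read this off from invariance of $\pi$. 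Your pigeonhole argument for producing a free target node is also underspecified: even granting that some node $B$ with $W^{-1}(B)=\emptyset$ exists outside the cut, nothing in your outline explains why redirecting $W(c)$ to \emph{that particular} $B$ yields a correct $\exists$-linking --- an arbitrary free node will in general break acyclicity/connectedness of $\pi(\ell')$.

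The paper's proof avoids both problems by a different mechanism. It does not try to move an arbitrary co-edge off a cut directly. Instead it first reduces to the situation where every co-edge $c$ with $W(c)$ inside a cut satisfies $A_c=\bot^{\exists}=\exists X(X\otimes X^{\bot})$. The reduction is achieved by \emph{adding} a fresh $\bot^{\exists}$ conclusion (legitimate up to the isomorphism $\Gamma\simeq\Gamma\parr\bot^{\exists}$), pointing its co-edge at $\B X^{\bot}\in c$, and then performing one swap (move~(2)) followed by one move~(1) so that the original $c$ now points at the new $\bot^{\exists}$ while the fresh $\bot^{\exists}$ co-edge points at the old target $A$; correctness of this specific pair of moves is checked by hand (figures~\ref{uff1} and~\ref{uff}). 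Once every obstruction is a $\bot^{\exists}$ co-edge, a separate lemma (the ``$\bot^{\exists}$-moves'', lemma~\ref{moves}) shows that such co-edges enjoy special mobility: they can cross a cut to the dual formula, descend to subformulas, and traverse axiom links, each time preserving correctness. Since every cut component has at least one axiom leaving it (else both $B$ and $B^{\bot}$ would be provable), these moves suffice to walk every remaining witness edge out of the cut. The key idea you are missing, then, is the use of $\bot^{\exists}$ as a mobile intermediary together with the freedom to enlarge the conclusion sequent; without it, the correctness-preservation step in your induction does not go through.
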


Indeed, by lemma \ref{ready} it suffices to apply cut-elimination to ready $\exists$-linkings. \emph{Cut reduction} is the relation over ready $\exists$-linkings defined by the rewrite rules in figure \ref{cut}, where in case \ref{cut2} either $n\geq 1$ or $D[X]\neq X$, and, in case \ref{cut2} and \ref{cut3} the existence of the lefthand edge is forced by the fact that $\Gamma,\Delta$ is clean.
Observe that the reduction $(c)$ incorporates the Yoneda translation. 

\begin{figure}
\begin{subfigure}{0.55\textwidth}
\adjustbox{scale=0.7,center}{
$
\begin{tikzpicture}[baseline=-3ex]
\node(a) at (0,0) {$X$};
\node(b) at (2,0) {$X^{\bot}$};
\node(c) at (4,0) {$X$};

\draw[thick, rounded corners=5pt] (a) to (0,0.5) to (2,0.5) to (b);
\draw[thick, rounded corners=5pt] (b) to (2,-0.5) to (4,-0.5) to (c); 

\end{tikzpicture}
\qquad \leadsto \qquad X
$}
\caption{}
\label{cut11}
\end{subfigure}
\begin{subfigure}{0.49\textwidth}
\adjustbox{scale=0.75,center}{
\begin{tikzpicture}[baseline=-3ex]
\node(b) at (2,0) {$A\otimes B$};
\node(c) at (4,0) {$A^{\bot}\parr B^{\bot}$};

\draw[thick, rounded corners=5pt] (b) to (2,-0.5) to (4,-0.5) to (c); 

\end{tikzpicture}
$\qquad \leadsto \qquad $
\begin{tikzpicture}[baseline=-3ex]
\node(b) at (2,0) {$A\ \ \ \ \  B$};
\node(c) at (4,0) {$A^{\bot}\ \ \ \  B^{\bot}$};

\draw[thick, rounded corners=5pt] (1.7,-0.3) to (1.7,-0.8) to (3.6,-0.8) to (3.6,-0.3); 
\draw[thick, rounded corners=5pt] (2.3,-0.3) to (2.3,-0.6) to (4.3,-0.6) to (4.3,-0.3); 

\end{tikzpicture}
}
\caption{}
\label{cut12}
\end{subfigure}
\begin{subfigure}{\textwidth}
\adjustbox{scale=0.65,center}{
$
\begin{tikzpicture}[baseline=-3ex]
\node(b) at (0,0) {$\forall X( (\bigotimes_{i}^{n}C_{i} \otimes X^{\bot}) \parr D[X])$};
\node(c) at (5,0) {$\exists X((\bigparr_{i}^{n}C^{\bot}_{i}\parr X)\otimes D^{\bot}[X^{\bot}] )$};

\node(d) at (7.5,0.5) {$B$};
\draw[very thick, dotted, orange] (5.3,0.3) to [bend left=45] (6.7,0.3);
\draw[very thick, dotted, orange, ->] (6,0.6) to [bend left=45] (d);

\draw[thick, rounded corners=5pt] (b) to (0,-0.7) to (5,-0.7) to (c); 

\draw[thick, rounded corners=5pt] (0.3,0.3) to (0.3,0.6) to (1.6,0.6) to (1.6,0.3);

\end{tikzpicture}
\qquad \leadsto\qquad
\begin{tikzpicture}[baseline=-2ex]
\node(a) at (-0.5,0) {$D[\bigotimes_{i}^{n}C_{i}\otimes \forall X(X^{\bot}\parr X)]$};
\node(b) at (4.5,0) {$D^{\bot}[\bigparr_{i}^{n}C_{i}^{\bot} \parr \exists X(X\otimes X^{\bot})]$};

\node(c) at (7,0.5) {$B$};

\draw[thick, rounded corners=5pt] (0.2,0.3) to (0.2,0.6) to (1,0.6) to (1,0.3);
\draw[very thick, dotted, orange] (5.3,0.3) to [bend left=45] (6.1,0.3);
\draw[very thick, dotted, orange, ->] (5.7,0.5) to [bend left=35] (c);

\draw[thick, rounded corners=5pt] (a) to (-0.5,-0.6) to (4.5,-0.6) to (b);

\end{tikzpicture}
$}
\caption{}
\label{cut2}
\end{subfigure}
\begin{subfigure}{\textwidth}
\adjustbox{scale=0.65,center}{
$
\begin{tikzpicture}[baseline]
\node(b) at (0,0) {$\forall X( X^{\bot} \parr X)$};
\node(c) at (5,0) {$\exists X( X\otimes X^{\bot} )$};

\node(d) at (7.5,0.5) {$B$};
\draw[very thick, dotted, orange] (4.5,0.3) to [bend left=45] (5.8,0.3);
\draw[very thick, dotted, orange, ->] (5.1,0.6) to [bend left=45] (d);

\draw[thick, rounded corners=5pt] (b) to (-0,-0.7) to (5,-0.7) to (c); 

\draw[thick, rounded corners=5pt] (-0.4,0.3) to (-0.4,0.6) to (0.8,0.6) to (0.8,0.3);

\end{tikzpicture}
\qquad \leadsto\qquad
\begin{tikzpicture}[baseline=2ex]

\node(c) at (3,0.5) {$B$};


\end{tikzpicture}
$}
\caption{}
\label{cut3}
\end{subfigure}
\caption{Cut elimination local steps.}
\label{cut}
\end{figure}
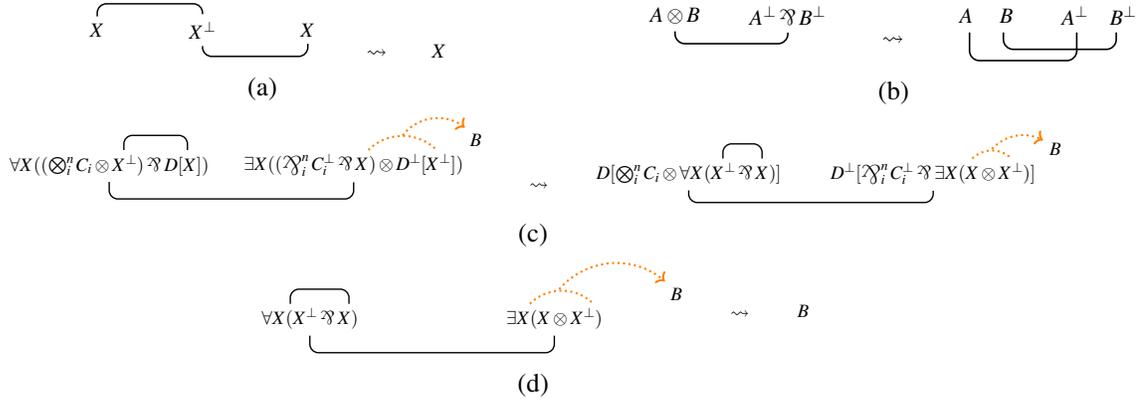

We now verify usual properties of cut-elimination.

\begin{lemma}[confluence]\label{confluence}
Cut reduction is confluent.
\end{lemma}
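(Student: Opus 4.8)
The plan is to prove confluence of cut reduction on ready $\exists$-linkings by the standard route: first establish local confluence (the Newman-style diamond on single-step peaks), then invoke termination of cut reduction to conclude full confluence. Since this is a rewriting system on a finite combinatorial object (a fixed sequent with a linking and witnessing function), termination is easy: each of the rules in figure \ref{cut} strictly decreases a natural measure — rule \ref{cut11} removes an axiom-cut, rule \ref{cut12} replaces a multiplicative cut by two structurally smaller cuts, and rules \ref{cut2} and \ref{cut3} eliminate a quantifier cut (decreasing the number of $\forall/\exists$ cut formulas, which no other rule increases). So I would first record a short lemma (or just a remark) that cut reduction is strongly normalizing, ordering cuts lexicographically by (number of quantifier cuts, total size of cut formulas, number of axiom cuts), and check each rule decreases this measure.

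For local confluence, I would argue that two distinct redexes in a ready $\exists$-linking $\ell$ are always at disjoint cut-formulas: each rewrite rule acts on a single cut $A\otimes A^{\bot}$ together with the forced edges/witness edges attached to it, and since $\Gamma,\Delta$ is clean and $\ell$ is ready, the data consumed by one redex (the axiom edges landing on the cut leaves, the witness edge in the \ref{cut2}/\ref{cut3} cases) is local to that cut and not shared with any other cut. Hence the two reductions commute on the nose: firing one does not destroy or alter the other redex, and firing them in either order yields the same $\exists$-linking, giving a trivial diamond. The one subtlety is that rule \ref{cut12} turns one cut into two new cuts, and rules \ref{cut2}/\ref{cut3} introduce new $\mathsf{MLL}$ cuts (the residual cut on $D[\cdots]\otimes D^{\bot}[\cdots]$, or the disappearance of the cut in \ref{cut3}); I must check that these newly created cuts do not overlap with the other redex's residual, which again follows from locality and cleanliness. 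A second point requiring care: cut reduction is only defined on \emph{ready} $\exists$-linkings, so after a \ref{cut12} step the result might not be ready and one must re-ready it via lemma \ref{ready}; I would either phrase confluence modulo $\sim$ (so that the readying rewitnessing moves are absorbed into the equivalence) or note that the readying step only moves witness edges off the new cut formulas and commutes with the other redex as well.

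I expect the main obstacle to be precisely this interaction between readiness and the rewrite steps: because cut reduction is a relation on ready linkings while a \ref{cut12} or \ref{cut2} step can break readiness by creating fresh cut formulas that happen to carry witness edges, the clean "disjoint redexes commute" picture needs the observation — drawn from lemma \ref{ready} — that the rewitnessing needed to restore readiness is itself confluent and does not interfere with residuals of the other redex. So concretely the proof I would write is: (i) cut reduction terminates by the lexicographic measure above; (ii) any two one-step peaks $\ell \leadsto \ell_1$, $\ell \leadsto \ell_2$ from a ready $\ell$ close at a common reduct, by case analysis on the two rules, using cleanliness of $\Gamma,\Delta$ and readiness of $\ell$ to see the redexes occupy disjoint cut formulas, and lemma \ref{ready} to re-ready when needed; (iii) conclude by Newman's lemma. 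The bookkeeping in step (ii) — enumerating the pairs $(\ref{cut11},\ref{cut11})$, $(\ref{cut11},\ref{cut12})$, \dots, $(\ref{cut3},\ref{cut3})$ and in each case exhibiting the closing reductions — is routine but is the bulk of the work; no conceptual difficulty arises beyond the readiness issue already flagged.
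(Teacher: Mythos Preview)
Your argument is correct but considerably more elaborate than the paper's, which is a single sentence: ``Immediate consequence of the locality of the reduction rules.'' The point is that each rule in figure~\ref{cut} acts on a single cut pair and touches nothing outside the parse trees of that pair (plus, in cases \ref{cut2}/\ref{cut3}, the one witness edge attached to its co-edge, which by readiness lands outside every cut). Two distinct redexes therefore live on disjoint cut pairs and commute on the nose, giving the full diamond directly --- not merely local confluence --- so Newman's lemma and the termination argument are unnecessary here (termination is proved separately as proposition~\ref{norma}).

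Your worry about readiness being destroyed is also unnecessary. ``Ready'' means no witness edge points to any node in the parse tree of a cut formula. After rule~\ref{cut12} the new cut trees are subtrees of the old one; after rule~\ref{cut2} the new cut tree consists of old cut-subtree nodes together with freshly created nodes, and the surviving witness edge still targets the external $B$; rules~\ref{cut11} and~\ref{cut3} only delete nodes. So every reduct of a ready linking is again ready, and no appeal to lemma~\ref{ready} is needed mid-reduction. Stripping out the termination step, the case analysis on rule pairs, and the readiness bookkeeping, what remains of your plan is exactly the paper's one-line observation.
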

\begin{proof}
Immediate consequence of the locality of the reduction rules.
\end{proof}

\begin{proposition}[stability]\label{stability}
Let $\ell$ be a correct and ready. If $\ell\leadsto \ell'$, then $\ell'$ is correct.
\end{proposition}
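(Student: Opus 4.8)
The plan is to prove stability by a case analysis on the cut-reduction step $\ell \leadsto \ell'$, following the four local rules of figure \ref{cut}. In each case we must check two things: that $\ell'$ is still acyclic (so that $\pi(\ell')$ is well-defined), and that $\pi(\ell')$ is a $G$-net, i.e.\ satisfies the correctness criterion for Girard nets. The natural strategy is to relate the sequentialization of $\pi(\ell)$ to that of $\pi(\ell')$: since $\ell$ is correct, $\pi(\ell)$ is a $G$-net and hence sequentializable; we want to read off a sequentialization of $\pi(\ell')$ from it. Because $\ell$ is \emph{ready}, no witness edge points into a cut formula, so the co-edge expansion operation defining $\pi(\ell)$ never instantiates an unknown variable with a formula living under a cut; this is exactly what makes the interaction between cut-reduction and the $\pi(-)$ translation well-behaved, and it should be invoked at the start.

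The case of an axiom cut (\ref{cut11}) and the multiplicative cut (\ref{cut12}) are the standard $\mathsf{MLL}$ cases and can be handled by the usual argument: on the level of $G$-nets these correspond to the elementary $\mathsf{MLL}$ cut-reduction steps, which preserve correctness (one rewires the axiom link, resp.\ splits the $\otimes/\parr$ cut into two smaller cuts), and acyclicity of $W$ is untouched since $W$ is unchanged. The genuinely new cases are (\ref{cut2}) and (\ref{cut3}), the $\forall/\exists$ cuts, where the Yoneda translation is incorporated. Here I would argue as follows. In $\pi(\ell)$ the $\forall/\exists$ cut, once the relevant co-edge is expanded, reduces by the ordinary $\mathsf{MLL2}$ $\forall/\exists$ cut-reduction step (substituting the ground witness $GW(c)$ for the eigenvariable), and this step preserves $G$-net correctness by the analogue of the cut-elimination argument already used in Proposition \ref{cuts}. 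The key point is to check that $\pi(\ell')$ coincides (up to the bookkeeping of the new units $\forall X(X^\bot \parr X)$, $\exists X(X\otimes X^\bot)$ introduced in the reduct) with the result of performing that $\mathsf{MLL2}$ reduction on $\pi(\ell)$; this is a diagram-chase on parse forests comparing the co-edge expansion before and after reduction, and it is where ``locality'' of the rules is doing the work. For acyclicity: in (\ref{cut2}) and (\ref{cut3}) the co-edge $c$ associated to the $\exists$-link being cut disappears, and its outgoing witness edge is re-attached to the node $B$ that previously hosted the incoming orange edge; since any co-edge that depended on $c$ now depends instead (through $GW$) on whatever $c$ depended on, the dependency graph $D_{W'}$ is obtained from $D_{W}$ by contracting the node $c$, which cannot create a cycle in a dag.

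The main obstacle I expect is precisely the verification in cases (\ref{cut2})–(\ref{cut3}) that the translation $\pi(-)$ commutes with cut-reduction in the appropriate sense — i.e.\ that reducing the $\forall/\exists$ cut on the linking level and then translating gives the same $G$-proof structure (up to the extra unit links) as translating first and then reducing. One has to be careful that the ground witness $GW(c)$ is computed the same way on both sides, which again uses readiness (the witness of $c$, and transitively all witnesses it depends on, live outside the cut), and that re-routing the witness edge to $B$ in the reduct is compatible with the recursive definition of $GW$. Once that commutation is established, correctness of $\pi(\ell')$ follows from correctness of $\pi(\ell)$ together with the fact that ordinary $\mathsf{MLL2}$ (and $\mathsf{MLL}$) cut-reduction preserves $G$-net correctness, and acyclicity follows from the node-contraction observation above. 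I would state the commutation as a small auxiliary lemma before the case analysis to keep the proof readable.
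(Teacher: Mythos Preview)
Your plan diverges from the paper's proof in the crucial case (\ref{cut2}), and there the divergence is a genuine gap rather than an alternative route. You propose to argue that $\pi(\ell')$ coincides ``up to the bookkeeping of the new units'' with the $G$-net obtained from $\pi(\ell)$ by the ordinary $\mathsf{MLL2}$ $\forall/\exists$ reduction (substituting the ground witness $GW(c)$). But these two $G$-proof structures are \emph{not} the same, and the difference is not a matter of bookkeeping. The $\mathsf{MLL2}$ reduction on $\pi(\ell)$ produces a cut on $A[GW(c)/X]\otimes A^{\bot}[GW(c)/X]$, i.e.\ on $((\bigotimes_i C_i\otimes GW(c)^{\bot})\parr D[GW(c)])$ and its dual, whereas the linking reduction (\ref{cut2}) produces a cut on $D[\bigotimes_i C_i\otimes \B 1^{\exists}]$ and its dual --- the Yoneda-translated formula, with a completely different tree shape (the $C_i$ have migrated inside $D$, the occurrence of the witness is gone, and a fresh $\exists X(X\otimes X^{\bot})$ has appeared). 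No amount of ``diagram-chase on parse forests'' will identify these two structures; they are different correctness graphs, and transferring correctness from one to the other is exactly the non-trivial content of the proposition.

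The paper closes this gap with a device you do not mention: it cuts $\pi(\ell)$ against the extended $G$-nets $Yo_{1}^{A}$ and $Yo_{2}^{A}$ (figure~\ref{yoneda}) on the two sides of the $\forall/\exists$ cut, obtaining an intermediate $G$-net $\pi^{A}$. Since $Yo_{1}^{A}$, $Yo_{2}^{A}$ are themselves sequentializable, $\pi^{A}$ is correct; and now $\pi(\ell')$ is obtained from $\pi^{A}$ by \emph{ordinary} $G$-net cut-reductions, which preserve correctness. In other words, the Yoneda nets are precisely the gadget that performs the passage from $A[GW(c)/X]$ to the Yoneda translation $A_{\C Y}$ while staying inside the world of correct $G$-nets. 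Your commutation lemma, as stated, is false; what is true (and what the paper uses) is a commutation that factors through $\pi^{A}$. The cases (\ref{cut11}), (\ref{cut12}), (\ref{cut3}) and your acyclicity remarks are fine, but for (\ref{cut2}) you should invoke $Yo_{1}^{A}$, $Yo_{2}^{A}$ rather than the bare $\mathsf{MLL2}$ step.
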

\begin{proof}

For any $G$-net $\pi$ and for any formula $\forall XA$ (with dual formula $\exists XA^{\bot}$) occurring in a cut, let $\pi^{A}$ be the $G$-net obtained by replacing the formula $\forall X A$ (resp. $\exists XA^{\bot}$ ) by $A_{\C Y}$ (resp. $A^{\bot}_{\C Y}$) by cutting it with $Yo_{1}^{A}$ (resp. $Yo_{2}^{A}$).  In other words, we apply the Yoneda translation locally.
$\pi^{A}$ is still a $G$-net, as $\pi$, $Yo_{1}^{A}$ and $Yo_{2}^{A}$ are all sequentializable, and the cut introduced can be applied just after the rules introducing the quantifier of $\forall XA$ (resp. $\exists XA^{\bot}$).

Now, any cut reduction rule $\ell\mapsto \ell'$ induces a transformation of $G$-nets $\pi(\ell) \mapsto^{*} \pi(\ell')$. We must show then that $\mapsto^{*}$ preserves correctness. This is trivial in cases \ref{cut11}, \ref{cut12} and \ref{cut3}.
In case \ref{cut2}, let the cut-formula be $\forall XA\otimes \exists XA^{\bot}$; then $\pi(\ell)\mapsto \pi^{*}$, where $\pi^{*}$ can be obtained from $\pi^{A}$ (which is a $G$-net as $\pi(\ell)$ is a $G$-net and $G$-net reduction preserves correctness) by performing some $G$-net reduction steps. We conclude then that $\pi^{*}$ is correct, i.e. $\ell'$ is correct.

\end{proof}

%
%
%
%
%
%
%
%
%

Strong normalization can be proved in a direct way, without reducibility candidates techniques.

\begin{proposition}[strong normalization]\label{norma}
Let $\ell$ be a correct and ready $\exists$-linking over $\Gamma, [\Delta]$. Then all cut-reductions of $\ell$ terminate over a unique correct $\exists$-linking $nf(\ell)$ over $\Gamma$, called the \emph{normal form of $\ell$}.

\end{proposition}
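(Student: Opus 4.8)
The plan is to extract a strictly decreasing measure on correct ready $\exists$-linkings that drops with every cut-reduction step, and then invoke confluence (Lemma~\ref{confluence}) to pin down uniqueness of the normal form. First I would observe that each local step in figure~\ref{cut} either deletes a cut outright (cases \ref{cut11}, \ref{cut12}, \ref{cut3}) or, in case \ref{cut2}, replaces a single cut on a pair of quantified formulas $\forall X A\otimes \exists X A^{\bot}$ by a single cut on the pair $D[\bigotimes_i^n C_i\otimes \forall X(X^\bot\parr X)]\otimes D^\bot[\dots]$, whose two sides have strictly fewer quantifier occurrences than $\forall X A$ (since the outermost $\forall/\exists$ is consumed and the residual quantifiers inside $D$ are those already present, with the newly introduced $\forall X(X^\bot\parr X)$/$\exists X(X\otimes X^\bot)$ being units that will be erased by a subsequent \ref{cut3} step). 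So the natural measure is the lexicographic pair $\big(q(\ell),\, s(\ell)\big)$, where $q(\ell)$ is the total number of $\forall$/$\exists$ connectives occurring in cut-formulas of $\ell$ and $s(\ell)$ is the total size (number of connective nodes) of the cut-formulas. One checks that \ref{cut2} strictly decreases $q$; \ref{cut12} leaves $q$ unchanged (it operates on multiplicative cuts, which carry no quantifiers once we are past the $\forall/\exists$ case) while strictly decreasing $s$; and \ref{cut11}, \ref{cut3} strictly decrease $s$ (and do not increase $q$) by removing cut-formulas entirely. Hence the measure is well-founded and strictly decreasing, so every reduction sequence terminates.

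Next I would record that termination plus confluence gives a unique normal form: by Lemma~\ref{confluence} cut reduction is confluent, and a terminating confluent rewrite relation has unique normal forms, so $nf(\ell)$ is well-defined. It remains to argue that $nf(\ell)$ is again correct and is an $\exists$-linking over the cut-free sequent $\Gamma$. Correctness is exactly the content of the stability proposition (Proposition~\ref{stability}): since $\ell$ is correct and ready, and each step preserves both readiness (inspecting the rules, no step creates a witness edge landing on a cut-formula — in \ref{cut2} the orange witness edge is simply transported to the same external target $B$) and correctness, an induction on the length of the reduction sequence shows $nf(\ell)$ is correct. That the normal form has no cuts follows because the measure $q$ together with $s$ can only be $0$ when there are no cut-formulas left: any remaining cut-formula is either of a quantified shape (so $q>0$ and \ref{cut2} applies) or purely multiplicative (so either an axiom-style cut \ref{cut11} or a $\otimes/\parr$ cut \ref{cut12} applies), or a unit cut \ref{cut3} applies; in every case $\ell$ is not in normal form. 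So a normal form necessarily has empty $\Delta$, i.e. is an $\exists$-linking over $\Gamma$.

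The main obstacle I anticipate is the bookkeeping in case \ref{cut2}: one must be careful that substituting $\bigotimes_i^n C_i\otimes \forall X(X^\bot\parr X)$ for the positive occurrence of $X$ inside $D[X]$ genuinely does not increase the quantifier count of the cut-formula — this is where the restriction that $D[X]$ has a \emph{unique, positive} occurrence of $X$, and that the $C_i$ are $X$-free ground formulas, is essential, and it is worth stating explicitly that the only \emph{new} quantifiers introduced, namely the $\forall X(X^\bot\parr X)$ and $\exists X(X\otimes X^\bot)$, form a matching $\forall/\exists$ pair that the immediately applicable rule \ref{cut3} erases, so one may prefer to fold \ref{cut2} followed by \ref{cut3} into a single composite step when defining the measure, or simply check that $q$ drops by at least the two outer quantifiers consumed minus the two units created, which is still a net decrease because the units are themselves eligible for \ref{cut3}. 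A cleaner route, which I would adopt, is to define $q(\ell)$ as the number of quantifier occurrences in cut-formulas that are \emph{not} of the unit shape $\forall X(X^\bot\parr X)$ or $\exists X(X\otimes X^\bot)$; then \ref{cut2} strictly decreases $q$ unconditionally, \ref{cut3} decreases $s$, and the argument goes through without composite steps.
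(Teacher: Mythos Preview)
Your proposal is correct and follows the same overall strategy as the paper---termination via a strictly decreasing well-founded measure on cut-formulas, with uniqueness coming from confluence (Lemma~\ref{confluence}) and preservation of correctness from Proposition~\ref{stability}---but the measure you choose is different from the one the paper uses.

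The paper defines a single scalar measure $s$ by a recursion tailored to the Yoneda shape: $s(X)=s(X^{\bot})=0$, $s(A\otimes B)=s(A\parr B)=s(A)+s(B)+1$, $s(\B 1^{\exists})=s(\bot^{\exists})=1$, and for a non-unit Yoneda formula $s(\forall X((\bigotimes_{i}C_{i}\otimes X^{\bot})\parr D[X]))=s(D[C])+3$ (and dually). Summing $s$ over all cut-formulas gives a quantity that drops under every rule; in particular rule~\ref{cut2} is handled directly because the definition of $s$ already ``unfolds'' the Yoneda formula to $D[C]$, so the introduced units cost only $+2$ while the removed outer quantifier layer cost $+3$. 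Your route instead uses the lexicographic pair $(q,s)$ with $q$ counting \emph{non-unit} quantifier occurrences in cut-formulas and $s$ the raw size; rule~\ref{cut2} then strictly decreases $q$ (two non-unit quantifiers disappear, two unit quantifiers appear), while the other rules fix $q$ and decrease $s$. Both arguments are short and sound; the paper's single measure is more compact but depends on the specific Yoneda recursion, whereas your lexicographic measure is more modular and makes explicit the role of the unit quantifiers, at the price of the extra case distinction you flagged in your final paragraph.
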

\begin{proof}
We define a measure $s(A)$ over formulas as follows: $s(X)=s(X^{\bot})=0$, $s(A\otimes B)=s(A\parr B)= s(A)+s(B)+1$, 
$s(\forall X(X^{\bot}\parr X))=s(\exists X(X\otimes X^{\bot})=1$ and, when either $n\geq 1$ or $D[X]\neq X$, $s(\forall X((\bigotimes_{i}^{n}C_{i}\otimes X^{\bot})\parr D[X]))=s(\exists X((\bigparr_{i}^{n}C_{i}^{\bot}\parr X)\otimes D[X]^{\bot}))= s(D[C])+3$, where $C$ is either $\bigotimes_{i}^{n}C_{i}$ or $\bigparr_{i}^{n}C_{i}^{\bot}$.
By letting $s(\ell)$ be the sum all $s(A)$, where $A$ is a cut-formula, any reduction step makes $s(\ell)$ decrease strictly. 
\end{proof}

By proposition \ref{norma} any correct $\exists$-linking has a unique normal form, up to rewitnessing.


\section{Characterization of $\varepsilon$-equivalence}\label{linki}


\begin{wrapfigure}{r}{0.5\textwidth}
\adjustbox{scale=0.78, center}{
$
\begin{tikzpicture}[baseline=-6ex]
\node(a) at (0,0) {$B$};
\node(b) at (1,0) {$D[B]^{\bot}$};
\node(d) at (-1,0) {$\bigparr_{i}C_{i}^{\bot}$};

\node(p1) at (-0.5,-0.75) {$\parr$};
\node(t1) at (0.25,-1.5) {$\otimes$};
\node(e) at (0.25,-2.25) {$\exists$};

\draw[thick] (d) to (p1);
\draw[thick] (a) to (p1);
\draw[thick] (p1) to (t1);
\draw[thick] (b) to (t1);
\draw[thick] (e) to (t1);

\end{tikzpicture}
\ \mapsto \
\begin{tikzpicture}[baseline=-6ex]
\node(a) at (0,0) {$B$};
\node(b) at (1,0) {$D[B]^{\bot}$};

\node(bb) at (2.4,0) {$B$};
\node(bbb) at (3.8,0) {$B^{\bot}$};

\node(d) at (-1,0) {$\bigparr_{i}C_{i}^{\bot}$};

\node(p1) at (-0.5,-0.75) {$\parr$};
\node(t1) at (0.25,-1.5) {$\otimes$};
\node(e) at (0.25,-2.25) {$\exists$};

\draw[thick] (d) to (p1);
\draw[thick] (a) to (p1);
\draw[thick] (p1) to (t1);
\draw[thick] (b) to (t1);
\draw[thick] (e) to (t1);

\draw[thick, bend left=35] (1.2,0.3) to (bb);
\draw[thick, rounded corners=4pt] (bb) to (2.4, -0.4) to (3.6,-0.4) to (3.6,-0.3);

\end{tikzpicture}
$}
\caption{From $\pi$ to $\pi^{cut}$.}
\label{addcut}
\end{wrapfigure}

We exploit the Yoneda translation to prove that the compact representation of $G$-nets by means of $\exists$-linkings characterizes the equivalence induced by ends and coends. We will indeed show that the translation $\ell\to \pi(\ell)$ yields an isomorphism of categories $ \BB L^{\exists} \simeq \BB G_{\varepsilon}^{\C Y}$.

We start by defining the translation $\ell: \pi\mapsto \ell_{\pi}$ ``adjoint'' to $\pi:\ell\mapsto \pi(\ell)$. First, for a $G$-net $\pi$, let $\pi^{cut}$ be obtained from $\pi$ by introducing a new cut for any $\exists$-link of $\pi$ as follows: if $A_{c}=\exists X((\bigotimes_{i}^{n}C_{i}\parr X)\bigotimes D[X]^{\bot})$ with premiss $(\bigotimes_{i}^{n}C_{i}\parr B)\bigotimes D[B]^{\bot}$, introduce an axiom and a cut over $B$ as illustrated in fig. \ref{addcut}. By inspecting the co-edge expansion in fig. \ref{recovery}, it can be seen that
$\pi^{cut}$ is of the form $\pi(\ell^{cut})$ for a unique $\exists$-linking with cuts $\ell^{cut}$. We let then $\ell_{\pi}$ be the normal form of $\ell^{cut} $. While $\ell= \ell_{\pi(\ell)}$ holds by construction, the converse equation $\pi=\pi(\ell_{\pi})$ does not hold in general (since cut-elimination of $\exists$-linking might require rewitnessings).
However, we will show that the weaker $\pi\simeq_{\varepsilon}\pi(\ell_{\pi})$ holds (theorem \ref{equiv}).


We can use the translations $\pi$ and $\ell$ to relate the Yoneda translations for $G$-nets and $\exists$-linkings as follows:

\noindent
\begin{minipage}{0.4\textwidth}
\begin{proposition}\label{ionez}
$a.$ \ $\B{Yon}\circ \pi \ =  \ \C Y$. \\
$b.$ \ $\C Y\circ \ell \ =  \ \B{Yon}$.
\end{proposition}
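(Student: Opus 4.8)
\textbf{Proof proposal for Proposition \ref{ionez}.}

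The plan is to prove both equalities by a direct comparison of the two transformations on objects and on arrows, reducing everything to the local picture of the co-edge expansion (fig. \ref{recovery}) and the Yoneda translation of formulas. On objects there is essentially nothing to do: $\B{Yon}$, $\C Y$, $\pi$ and $\ell$ all send a formula $A\in\C L^2_{\C Y}$ to its Yoneda translation $A_{\C Y}$ (for $\pi$ and $\ell$ this is because $\pi(\ell)$ and $\ell_\pi$ do not change conclusions up to the quantifier structure, and $\C Y$, $\B{Yon}$ apply $A\mapsto A_{\C Y}$ by definition), so both composites act as $A\mapsto A_{\C Y}$ on objects. The content is therefore entirely in the arrow part, and since both $\B{Yon}\circ\pi$, $\C Y$ (resp.\ $\C Y\circ\ell$, $\B{Yon}$) are already known to be functors, it suffices to check that they agree on a single $\exists$-linking (resp.\ $G$-net).

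For part $(a)$, I would argue as follows. Fix a correct $\exists$-linking $\ell=(E,W)$ over $\Gamma$. By definition $\pi(\ell)$ is obtained from $E\cup t\Gamma$ by iterating the co-edge expansion of fig.\ \ref{recovery}, recursively along $D_W$; then $\B{Yon}(\pi(\ell))$ is obtained by cutting each $\forall$-link with $Yo_1^A$ and each $\exists$-link with $Yo_2^A$ and normalizing. On the other side, $\ell_{\C Y}$ is obtained, again recursively along $D_W$, by replacing each $A_c$ with $(A_c)_{\C Y}$, turning each witness edge from $c$ to $W(c)$ into a lax thinning edge from $\bot$ to $W(c)$, moving the lax thinning edges off $\B X,\B X^\bot$ onto $W(c)$, and finally replacing each $\forall XA$ by $(\forall XA)_{\C Y}$ and deleting the edge $(X^\bot,X)$. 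The key observation — which I would extract as a lemma and verify by inspecting figs.\ \ref{recovery}, \ref{compare}, \ref{yoneda} and the cut-elimination step \ref{cut2}–\ref{cut3} — is that the two operations \emph{co-edge expansion followed by cutting with $Yo_2^A$ and normalizing} and \emph{the local $\C Y$-transformation at $c$} produce, up to the lax-linking identifications, the same result on the relevant subnet $\pi_{A_c}$ (and likewise for $\forall$-links via $Yo_1^A$). Indeed, reduction $(c)$ of fig.\ \ref{cut} "incorporates the Yoneda translation" precisely as the excerpt remarks; replacing $\B X$ by its ground witness and then Yoneda-translating collapses the witness pair to the unit $\bot$ exactly where a lax thinning edge is placed. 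Proceeding by induction on the well-founded order $D_W$ and using that both constructions are insensitive to the sequentialization chosen, I get $\B{Yon}(\pi(\ell))=\ell_{\C Y}=\C Y(\ell)$.

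For part $(b)$, the cleanest route is to reduce it to part $(a)$. Given a $G$-net $\pi$, the arrow $\ell_\pi$ is the normal form of the $\exists$-linking with cuts $\ell^{cut}$ determined by $\pi^{cut}$ (fig.\ \ref{addcut}), and $\pi^{cut}=\pi(\ell^{cut})$ by construction. Applying $\C Y$ and using part $(a)$ together with the fact that $\C Y$ is well-defined on $\sim$-classes (lemma \ref{commuyoneda}) and invariant under cut-elimination of $\exists$-linkings, we get $\C Y(\ell_\pi)=\C Y(nf(\ell^{cut}))=\C Y(\ell^{cut})=\B{Yon}(\pi(\ell^{cut}))=\B{Yon}(\pi^{cut})$. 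It then remains to check $\B{Yon}(\pi^{cut})=\B{Yon}(\pi)$, i.e.\ that the auxiliary axiom/cut pairs introduced over each witness formula $B$ in fig.\ \ref{addcut} disappear under $\B{Yon}$; this is immediate since $\B{Yon}$ is a functor and those cuts eliminate to the identity, being cuts of an axiom against its own premiss. Again one should note the argument is independent of the sequentialization used to define $\ell_\pi$.

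The main obstacle I anticipate is the bookkeeping in the inductive step of part $(a)$: one must be careful that the order in which co-edges are expanded (for $\pi(\ell)$) and the order in which the local $\C Y$-transformation is applied (for $\ell_{\C Y}$) are both governed by $D_W$ and yield matching ground witnesses $GW(c)$, and that the placement and migration of the \emph{lax thinning edges} — which on the $\exists$-linking side are witness edges, and on the $\mathsf{Lax}$ side must end up on $W(c)$ with the correct target after substitution — is respected by the cut-elimination of $Yo_2^A$ (resp.\ $Yo_1^A$) against the expanded subnet. Getting this correspondence exactly right, rather than merely up to rewiring, is where the real work lies; everything else is routine functoriality and the already-established independence from sequentialization.
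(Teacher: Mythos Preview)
Your proposal is correct and follows essentially the same route as the paper: part $(a)$ by local inspection of the co-edge expansion against the Yoneda cuts, and part $(b)$ by reducing to $(a)$ via $\pi^{cut}=\pi(\ell^{cut})$, lemma \ref{commuyoneda}, and the $\beta$-equivalence of $\pi$ and $\pi^{cut}$. Your closing worry about matching ``exactly right, rather than merely up to rewiring'' is unnecessary, since arrows in $\mathsf{Lax}$ are $\sim_{lax}$-equivalence classes, so equality in $\mathsf{Lax}$ \emph{is} equality up to rewiring.
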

\end{minipage}
\begin{minipage}{0.6\textwidth}
\adjustbox{scale=0.85, center}{$\begin{tikzcd}
\BB G_{\varepsilon}^{\C Y}  \ar{dr}[left]{\B{Yon}} \ar[bend right=10]{rr}[below]{\ell} & &  \BB L^{\exists} \ar[bend right=10]{ll}[above]{\pi} \ar{dl}{\C Y} \\
 & Lax & 
 \end{tikzcd}$}
\end{minipage}

\begin{proof}
$a.$ can be verified by inspecting the reduction steps involved in the transformation of $\pi(\ell)$ into a lax linking. For $b.$ we argue as follows: $\pi$ is $\beta$-equivalent to $\pi^{cut}=\pi(\ell^{cut})$, where $\ell^{cut}\sim \ell_{\pi}$. Now, from $a.$ it follows that $\B{Yon}(\pi)=\B{Yon}(\pi^{cut})=\B{Yon}(\pi(\ell^{cut}))\sim_{lax}   \ell^{cut}_{\C Y}$.
From $\ell_{\pi}\sim \ell^{cut}$ we deduce then, by lemma \ref{commuyoneda}, that $(\ell_{\pi})_{\C Y}\sim_{lax} \ell^{cut}_{\C Y}$, hence we conclude $(\ell_{\pi})_{\C Y}\sim_{lax} \B{Yon}(\pi)$.
\end{proof}

From proposition \ref{ionez} we deduce that if $\ell$ is correct, $\ell_{\C Y}$ is correct (since $\ell_{\C Y}=\B{Yon}(\pi(\ell))$). Moreover, we deduce that the functor $\C Y$ is faithful (as $\B{Yon}$ is).

The following proposition allows to state that $\ell$ is indeed a functor $\ell:\BB G_{\varepsilon}^{\C Y}\to \BB L^{\exists}$.

\begin{proposition}\label{ellpipi}
If $\pi\simeq_{\varepsilon}\pi'$, then $\ell_{\pi}\sim \ell_{\pi'}$.
\end{proposition}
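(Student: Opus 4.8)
The plan is to deduce the statement from the faithfulness of the Yoneda functor $\C Y\colon\BB L^{\exists}\to\mathsf{Lax}$ together with the commuting triangle of Proposition~\ref{ionez}, pushing all the real work onto results already established. Since $\C Y$ is faithful (as observed right after Proposition~\ref{ionez}, because $\B{Yon}$ is), in order to prove $\ell_{\pi}\sim\ell_{\pi'}$ it suffices to check that $\ell_{\pi}$ and $\ell_{\pi'}$ are parallel arrows of $\BB L^{\exists}$ and that $\C Y(\ell_{\pi})\sim_{lax}\C Y(\ell_{\pi'})$. The first point is immediate: $\pi\simeq_{\varepsilon}\pi'$ can only hold between $G$-nets with the same conclusions $\Gamma$ (their interpretations must lie in the same hom-object of a dinatural model), and then $\ell_{\pi}$ and $\ell_{\pi'}$ are $\exists$-linkings over the same sequent, hence parallel in $\BB L^{\exists}$.

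For the second point I would route through the $G$-net side. Since $\B{Yon}$ descends to a functor $\BB G_{\varepsilon}^{\C Y}\to\mathsf{Lax}$ --- which is exactly what makes the statement of Lemma~\ref{faith} meaningful --- the hypothesis $\pi\simeq_{\varepsilon}\pi'$ already gives $\B{Yon}(\pi)\sim_{lax}\B{Yon}(\pi')$. On the other hand, Proposition~\ref{ionez}$(b)$, read at the level of individual $G$-nets (so that we do not presuppose that $\ell$ is a functor, which is precisely what the present proposition is meant to establish), gives $\C Y(\ell_{\pi})=(\ell_{\pi})_{\C Y}\sim_{lax}\B{Yon}(\pi)$, and likewise $\C Y(\ell_{\pi'})\sim_{lax}\B{Yon}(\pi')$. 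Chaining these three $\sim_{lax}$-equivalences yields $\C Y(\ell_{\pi})\sim_{lax}\C Y(\ell_{\pi'})$, and faithfulness of $\C Y$ then delivers $\ell_{\pi}\sim\ell_{\pi'}$, as desired.

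I do not expect a genuine obstacle in this argument: the substantive content has already been discharged in Lemma~\ref{faith} (where invariance of the $G$-net Yoneda translation under $\simeq_{\varepsilon}$ is obtained from the ``true'' Yoneda isomorphism), in Proposition~\ref{ionez} (the commutation $\C Y\circ\ell=\B{Yon}$), and in the faithfulness of $\C Y$. The only things requiring attention are bookkeeping: invoking Proposition~\ref{ionez}$(b)$ only for single $G$-nets rather than as a functorial identity (to avoid circularity), and making sure the source and target types of $\ell_{\pi}$ and $\ell_{\pi'}$ coincide, since faithfulness of $\C Y$ only compares morphisms inside a fixed hom-set. A more hands-on alternative --- tracking how a dinatural-model equality $\pi^{\BB C}=(\pi')^{\BB C}$ propagates through the construction $\pi\mapsto\pi^{cut}=\pi(\ell^{cut})$ and the normalization $\ell^{cut}\mapsto nf(\ell^{cut})=\ell_{\pi}$ --- appears strictly harder, and I would fall back on it only if the functoriality of $\B{Yon}$ on $\BB G_{\varepsilon}^{\C Y}$ had to be re-derived from scratch.
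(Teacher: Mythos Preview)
Your argument hinges on the faithfulness of $\C Y$, and that is where it becomes circular. You cite the remark right after Proposition~\ref{ionez}, whose only justification is ``as $\B{Yon}$ is'' via $\C Y=\B{Yon}\circ\pi$: from $\C Y(\ell)\sim_{lax}\C Y(\ell')$ and faithfulness of $\B{Yon}$ one only obtains $\pi(\ell)\simeq_{\varepsilon}\pi(\ell')$; to reach $\ell\sim\ell'$ one still needs $\pi$ to be injective on $\sim$-classes. But in the proof of Theorem~\ref{equiv} injectivity of $\pi$ is obtained precisely by invoking Proposition~\ref{ellpipi} (``if $\pi(\ell)\simeq_{\varepsilon}\pi(\ell')$ then, by proposition~\ref{ellpipi}, $\ell=\ell_{\pi(\ell)}\sim\ell_{\pi(\ell')}=\ell'$''). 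Using the other triangle $\C Y\circ\ell=\B{Yon}$ does not help either: since every $\exists$-linking is of the form $\ell_{\pi}$, faithfulness of $\C Y$ reduces via that triangle to exactly the implication $\pi\simeq_{\varepsilon}\pi'\Rightarrow\ell_{\pi}\sim\ell_{\pi'}$ again. So faithfulness of $\C Y$, as available at this point in the paper, rests on the very statement you are proving. You were careful not to presuppose functoriality of $\ell$ when using Proposition~\ref{ionez}(b); the same vigilance is needed here. To salvage your route you would have to supply a direct combinatorial proof that $\ell_{\C Y}\sim_{lax}\ell'_{\C Y}\Rightarrow\ell\sim\ell'$ (lifting rewiring moves on lax linkings back to rewitnessing moves on $\exists$-linkings), which is an independent lemma that neither Lemma~\ref{faith} nor Proposition~\ref{ionez} provides.

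The paper's own argument avoids this circle by a different path: it deduces Proposition~\ref{ellpipi} from Lemma~\ref{star} and Lemma~\ref{coend}. These show that $\BB L^{\exists}$ is $^{*}$-autonomous and that existential quantification satisfies the co-wedge equation there. Since $\simeq_{\varepsilon}$ is the congruence on $G$-nets generated by the wedge/co-wedge laws (this is what the free dinatural model gives), and since $\pi\mapsto\ell_{\pi}$ validates those laws in $\BB L^{\exists}$, the map $\ell$ passes to the quotient by $\simeq_{\varepsilon}$. This model-theoretic argument needs only the first clause of Lemma~\ref{star} (the direct verification of $^{*}$-autonomy), not its second clause --- which is exactly where faithfulness of $\C Y$ is invoked.
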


Proposition \ref{ellpipi} is deduced from the two lemmas below.

\begin{lemma}\label{star}
$\BB L^{\exists}$ is $^{*}$-autonomous. 
 $\BB L^{\B 1, \bot}$ is the free $^{*}$-autonomous category.
\end{lemma}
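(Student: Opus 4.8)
The plan is to prove the two assertions of Lemma~\ref{star} in sequence, reducing the first to the second wherever possible and using the functors $\C Y$ and $\pi$ already established.

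\medskip

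\textbf{Step 1: $\BB L^{\B 1,\bot}$ is the free $^{*}$-autonomous category.} Recall that by construction the restriction $\C Y$ of the Yoneda functor to $\BB L^{\B 1,\bot}$ has target $\mathsf{Lax}$, which is the free $^{*}$-autonomous category by \cite{Hughes2012}. On the fragment $\C L_{\B 1,\bot}$ the Yoneda translation is a \emph{bijection} on objects: $\forall X(X^{\bot}\parr X)\mapsto \B 1$, $\exists X(X\otimes X^{\bot})\mapsto \bot$, and it is the identity on all propositional connectives. So it suffices to check that $\C Y\colon \BB L^{\B 1,\bot}\to\mathsf{Lax}$ is full and faithful. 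Faithfulness we already have (it is noted right after Proposition~\ref{ionez} that $\C Y$ is faithful, being $\B{Yon}$ up to the functor $\pi$; alternatively invoke Lemma~\ref{commuyoneda} together with the fact that $\ell\sim\ell'$ is equivalent to $\ell_{\C Y}\sim_{lax}\ell'_{\C Y}$ in this fragment). For fullness, I would exhibit an inverse translation sending a lax linking over $\Gamma_{\C Y}$ to an $\exists$-linking over $\Gamma$: replace each occurrence of $\bot$ by a co-edge $\exists X(X\otimes X^{\bot})$, replace each occurrence of $\B 1$ by $\forall X(X^{\bot}\parr X)$ with its unique edge $(X^{\bot},X)$, and turn each lax thinning edge from $\bot$ to a node $v$ into a witness edge from the corresponding co-edge to $v$. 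This is exactly the inverse of the local transformation pictured in fig.~\ref{compare}, and one checks directly that it is well-defined on $\sim_{lax}$-classes (a lax rewiring move becomes a type-(1) rewitnessing move, by the argument already used in Lemma~\ref{commuyoneda}) and is functorial with respect to cut-elimination (the cut-reduction rules \ref{cut11}, \ref{cut12}, \ref{cut3} mirror exactly the $\mathsf{MLL}$ and unit cut-reductions in $\mathsf{Lax}$; rule \ref{cut2} does not arise in this fragment). Hence $\C Y$ restricted to $\BB L^{\B 1,\bot}$ is an isomorphism of categories, and so $\BB L^{\B 1,\bot}$ inherits the free $^{*}$-autonomous structure, with $\forall X(X^{\bot}\parr X)$ as tensor unit.

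\medskip

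\textbf{Step 2: $\BB L^{\exists}$ is $^{*}$-autonomous.} Here I would transport the $^{*}$-autonomous structure along the syntax rather than re-proving coherence from scratch. The monoidal product of $A,B$ is $A\otimes B$, the dualizing object is $\bot^{\exists}=\exists X(X\otimes X^{\bot})$, and involution is formula negation; the structural isomorphisms (associator, unitors, symmetry, the canonical maps witnessing $^{*}$-autonomy) are represented by the evident $\exists$-linkings — in fact by $ID$-style linkings with the appropriate edges and empty witnessing functions where the formulas carry no quantifiers, and by the linkings coming from $Yo_1^A$, $Yo_2^A$ where they do. To verify that all the coherence diagrams commute, I would use that composition in $\BB L^{\exists}$ is computed by cut-elimination, which by Propositions~\ref{stability} and~\ref{norma} produces a well-defined normal form up to $\sim$; so each coherence equation reduces to checking equality of two normal-form $\exists$-linkings, and these can either be checked directly (they are small) or pushed through the faithful functor $\C Y$ into $\mathsf{Lax}$, where the corresponding diagram commutes because $\mathsf{Lax}$ is $^{*}$-autonomous. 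Concretely: if two parallel $\exists$-linkings $\ell,\ell'$ satisfy $\ell_{\C Y}\sim_{lax}\ell'_{\C Y}$ and $\ell,\ell'$ are both ground (no free existential variables in the relevant subformulas) one gets $\ell\sim\ell'$ back; for the coherence morphisms, which only involve Yoneda formulas against their translations, this suffices. The unit object is $\forall X(X^{\bot}\parr X)$, consistently with Step 1 and with the Example in Section~2.

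\medskip

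\textbf{Main obstacle.} The delicate point is fullness/faithfulness of $\C Y$ and, more generally, controlling when $\ell_{\C Y}\sim_{lax}\ell'_{\C Y}$ can be reflected back to $\ell\sim\ell'$ in the presence of genuine (non-unit) Yoneda formulas — a type-(2) rewitnessing move is invisible to $\C Y$ (Lemma~\ref{commuyoneda}), so $\C Y$ is \emph{not} injective on $\sim$-classes outside $\C L_{\B 1,\bot}$, which is why the $^{*}$-autonomous structure on $\BB L^{\exists}$ must be obtained by transport-of-structure along the syntax plus direct normal-form computations, not merely by "$\C Y$ is an equivalence". I expect the bulk of the work to be the careful bookkeeping showing that the structural $\exists$-linkings compose (under cut-elimination) to the required normal forms, i.e. that rules \ref{cut11}--\ref{cut3}, together with the ready lemma (Lemma~\ref{ready}) to make cuts ready, genuinely realize the associativity and $^{*}$-autonomy equations; the coherence theorem for $\mathsf{Lax}$ does the conceptual heavy lifting once faithfulness of $\C Y$ on the relevant subcategory is in hand, but the reduction to it is the part requiring care.
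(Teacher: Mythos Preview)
Your approach is essentially the paper's, just more fully articulated: the paper also disposes of the first claim as ``a simple verification'' of the $^{*}$-autonomous coherence conditions (with units $\forall X(X^{\bot}\parr X)$ and $\exists X(X\otimes X^{\bot})$), and obtains the second from faithfulness of $\C Y$ together with the freeness of $\mathsf{Lax}$. Your Step~1 is in fact more careful than the paper's one-line sketch, since you explicitly address fullness by constructing the inverse translation from lax linkings back to $\exists$-linkings; the paper cites only faithfulness. The paper treats the two claims in the opposite order, but that is immaterial.

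One point to correct: in your ``Main obstacle'' paragraph you infer from the invisibility of type-(2) rewitnessing moves under $\C Y$ that $\C Y$ is \emph{not} injective on $\sim$-classes outside $\C L_{\B 1,\bot}$. This is a non sequitur. Type-(2) moves are part of the definition of $\sim$; their invisibility under $\C Y$ is exactly what makes $\C Y$ well-defined on $\sim$-classes, and says nothing about distinct $\sim$-classes collapsing. The paper in fact asserts that $\C Y$ is faithful on all of $\BB L^{\exists}$ (immediately after Proposition~\ref{ionez}), and you yourself invoke this in Step~2 when you propose pushing coherence equations through $\C Y$ into $\mathsf{Lax}$. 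So your proposal is internally inconsistent as written: drop the non-injectivity claim. Your two alternatives in Step~2 --- direct inspection of the small normal forms (the paper's route) or reflection through the faithful $\C Y$ --- are both sound; the worry you raise is misplaced.
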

\begin{proof}
That $\BB L^{\exists}$, with units $\forall X(X^{\bot}\parr X)$ and $\exists X(X\otimes X^{\bot})$, verifies all coherence conditions of a $^{*}$-autonomous category is a simple verification. The second point follows from the faithfulness of $\C Y$ and the fact that $\mathsf{Lax}$ is the free $^{*}$-autonomous category (\cite{Hughes2012}).
\end{proof}

For any $A=(\bigparr_{i}C_{i}\parr X)\otimes D[X^{\bot}]$ Yoneda in $X$ and any $B\in \C L^{2}_{\C Y}$, let $\Omega_{A}^{B}$ be the correct $\exists$-linking in fig. \ref{omega}. Moreover, for all $B,C\in \C L^{2}_{\C Y}$, we let $A(E,\ell)$ and $A(\ell,F)$ be the  correct $\exists$-linking in fig. \ref{precoend}, corresponding to the covariant and contravariant functorial action of $A$ on $\ell$.
The following lemma states then that the existential quantifier behaves like a co-wedge in $\BB L^{\exists}$.
\begin{lemma}\label{coend}
For all $A$ Yoneda in $X$, $E,F\in \C L^{2}_{\C Y}$ and $\ell\in \BB L^{\exists}(E,F)$, $\Omega_{A}^{E} \circ A(\ell,E) \sim \Omega_{A}^{F}\circ A(F,\ell)$
\end{lemma}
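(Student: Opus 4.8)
The plan is to reduce the claimed equation to the fact, established earlier, that the existential quantifier is interpreted by a genuine co-wedge in any dinatural model, together with the faithfulness of the Yoneda translation. Concretely, both sides of the desired equation are correct $\exists$-linkings of the same conclusions, so by Lemma~\ref{star} (which makes $\BB L^{\exists}$ a $^{*}$-autonomous category) and, more to the point, by the faithfulness of the functor $\C Y:\BB L^{\exists}\to\mathsf{Lax}$ (obtained from Proposition~\ref{ionez} together with Lemma~\ref{faith}), it suffices to check that the two sides become $\sim_{lax}$-equal after applying $\C Y$. But under the Yoneda translation the formula $A$ Yoneda in $X$ collapses: $\exists X A$ becomes $D[\bigparr_i C_i\parr\bot]$, the co-wedge maps $\Omega_A^E$, $\Omega_A^F$ become the evident lax linkings induced by the thinning edges, and the functorial actions $A(\ell,E)$, $A(F,\ell)$ become the functorial action of the $\mathsf{MLL}$-context $D[\,-\,]$ on (the lax linking underlying) $\ell$. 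Hence the equation to be verified in $\mathsf{Lax}$ is just the statement that a lax linking, substituted into a positive $\mathsf{MLL}$-context $D[\,-\,]$, composes associatively with the canonical maps — i.e.\ functoriality of $D$ and naturality of the unit/counit maps in the free $^{*}$-autonomous category, which is immediate once the diagrams are drawn.

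First I would spell out, in figures \ref{omega} and \ref{precoend}, the exact shape of $\Omega_A^B$, $A(E,\ell)$ and $A(F,\ell)$ (these are already referenced), check that the two composites $\Omega_A^{E}\circ A(\ell,E)$ and $\Omega_A^{F}\circ A(F,\ell)$ are well-typed $\exists$-linkings of conclusions $((\bigparr_i C_i\parr F)\otimes D[F^\bot])^\bot, \exists XA$ — here one uses that $D$ has a single positive occurrence of $X$, so the contravariant action lands on $F$ and the covariant one on $E$ in the expected places — and verify correctness via Proposition~\ref{stability}/the translation $\pi(-)$, since each side is of the form $\pi(\ell')$ for a correct underlying $G$-net. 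Then I would compute $\C Y(\Omega_A^{E}\circ A(\ell,E))$ and $\C Y(\Omega_A^{F}\circ A(F,\ell))$ using Proposition~\ref{ionez}$(b)$ (which gives $\C Y\circ\ell=\B{Yon}$) and the explicit description of the Yoneda translation of $\exists$-linkings: witness edges turn into lax thinning edges pointing at the $W(c)$-node, the co-edge is erased, and $D[X^\bot]$ becomes $D[\bigotimes_i C_i^\bot\parr\bot]$. Both images are seen to be the lax linking obtained by plugging the underlying lax linking of $\ell$ into the hole of $D$ and post-composing with the same thinning data, so they are literally equal (hence $\sim_{lax}$-equivalent).

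The main obstacle, I expect, is bookkeeping rather than anything conceptual: one must be careful that the rewitnessing moves implicit in cut-elimination (Lemma~\ref{ready}, Proposition~\ref{norma}) do not disturb the comparison, i.e.\ that the two sides are being represented by \emph{ready} $\exists$-linkings before $\C Y$ is applied, and that the single positive occurrence of $X$ in $D$ genuinely forces the covariant/contravariant split so that no extra witness edges appear on the wrong side. Once these are handled, the equality in $\mathsf{Lax}$ is the naturality square for the unit of the adjunction underlying the $^{*}$-autonomous structure, read off directly, and faithfulness of $\C Y$ transports it back to $\BB L^{\exists}$, giving $\Omega_A^{E}\circ A(\ell,E)\sim\Omega_A^{F}\circ A(F,\ell)$ as required.
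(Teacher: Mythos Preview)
Your approach is circular. You invoke faithfulness of $\C Y:\BB L^{\exists}\to\mathsf{Lax}$, citing Proposition~\ref{ionez} and Lemma~\ref{faith}, but those two results alone do not give it. From $\B{Yon}\circ\pi=\C Y$ and $\B{Yon}$ faithful one only gets that $\C Y(\ell)\sim_{lax}\C Y(\ell')$ implies $\pi(\ell)\simeq_{\varepsilon}\pi(\ell')$; to pass from there to $\ell\sim\ell'$ you need $\pi$ to be faithful, equivalently that $\pi'\mapsto\ell_{\pi'}$ is well-defined on $\simeq_{\varepsilon}$-classes, which is precisely Proposition~\ref{ellpipi}. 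The paper explicitly states that Proposition~\ref{ellpipi} is \emph{deduced from} Lemmas~\ref{star} and~\ref{coend}, so you are using the target proposition to prove one of its own lemmas.

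More to the point, you are vastly overworking the statement. The paper's proof is one sentence: compute the two composites concretely and observe (fig.~\ref{coendyes}) that they are $\exists$-linkings with the \emph{same} set of edges $E$ and witnessing functions that differ only in the target of the unique witness edge coming from the co-edge of $\exists XA^{\bot}$ --- one points it at the occurrence of $E^{\bot}$, the other at the occurrence of $F$ inside $D[F]$. That is a single rewitnessing move of type~(1) from fig.~\ref{rewit}. No passage through $\mathsf{Lax}$, no coherence argument, no cut-elimination bookkeeping is needed: the whole point of the $\exists$-linking formalism is that the co-wedge equation becomes \emph{syntactically visible} as one elementary edge move, and Lemma~\ref{coend} is meant to record exactly this observation before any semantic machinery is brought in.
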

\begin{proof}
Indeed $\Omega_{A}^{E} \circ A(\ell,E)$ and $ \Omega_{A}^{F}\circ A(F,\ell)$ differ by a unique rewitnessing, see fig. \ref{coendyes}. 
\end{proof} 

%
%

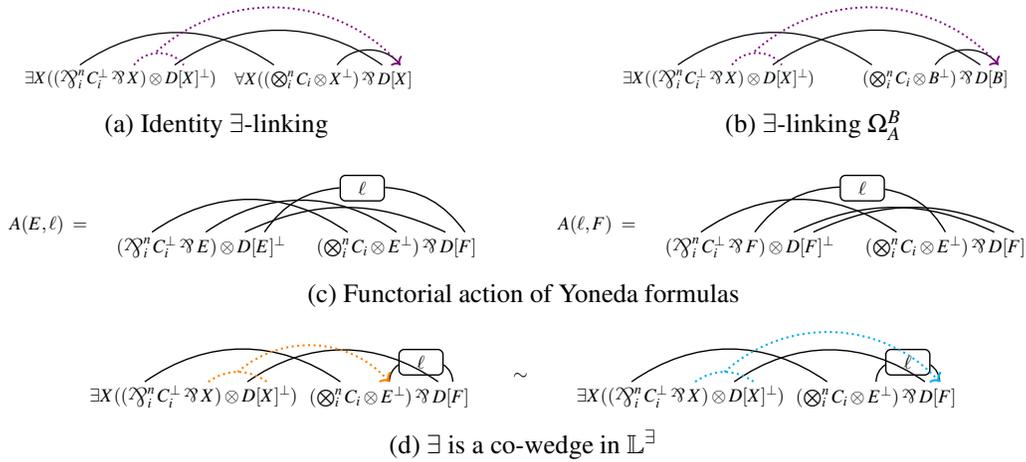
\begin{figure}
\begin{subfigure}{0.49\textwidth}
\adjustbox{scale=0.6, center}{
\begin{tikzpicture}
\node(a) at (0,0) {$\exists X((\bigparr_{i}^{n}C_{i}^{\bot}\parr X)\otimes D[X]^{\bot})$};

\node(b) at (4.5,0) {$\forall X((\bigotimes_{i}^{n}C_{i}\otimes X^{\bot})\parr D[X]$};

\draw[thick] (-0.9,0.3) to [bend left=35] (3.4,0.3);
\draw[thick] (1.2,0.3) to [bend left=35] (5.8,0.3);
\draw[thick] (4.8,0.3) to [bend left=55] (6.2,0.3);
\draw[very thick, dotted, violet] (0.3,0.3) to [bend left=55] (1.4,0.3);
\draw[very thick, dotted, violet, ->] (0.8,0.6) to [bend left=45] (6.2,0.3);

\end{tikzpicture}}
\caption{Identity $\exists$-linking}
\label{ide}
\end{subfigure}
\begin{subfigure}{0.49\textwidth}
\adjustbox{scale=0.6, center}{
\begin{tikzpicture}
\node(a) at (0,0) {$\exists X((\bigparr_{i}^{n}C_{i}^{\bot}\parr X)\otimes D[X]^{\bot})$};

\node(b) at (4.8,0) {$(\bigotimes_{i}^{n}C_{i}\otimes B^{\bot})\parr D[B]$};

\draw[thick] (-0.9,0.3) to [bend left=35] (3.4,0.3);
\draw[thick] (1.2,0.3) to [bend left=35] (5.8,0.3);
\draw[thick] (4.8,0.3) to [bend left=55] (6.2,0.3);
\draw[very thick, dotted, violet] (0.3,0.3) to [bend left=55] (1.4,0.3);
\draw[very thick, dotted, violet, ->] (0.8,0.6) to [bend left=45] (6.2,0.3);

\end{tikzpicture}}
\caption{$\exists$-linking $\Omega_{A}^{B}$}
\label{omega}
\end{subfigure}

\vspace{4mm}

\begin{subfigure}{\textwidth}
\adjustbox{scale=0.65,center}{$
A(E,\ell)\ = \quad
\begin{tikzpicture}[baseline=2ex]
\node(a) at (0,0) {$(\bigparr_{i}^{n}C_{i}^{\bot}\parr E)\otimes D[E]^{\bot}$};

\node(b) at (4,0) {$(\bigotimes_{i}^{n}C_{i}\otimes E^{\bot})\parr D[F]$};

\node(f) at (3.3,1.2) [draw,thick,minimum width=0.9cm,minimum height=0.25cm, rounded corners=3pt] {$\ell$};

\draw[thick] (-1,0.3) to [bend left=35] (3,0.3);
\draw[thick] (0.9,0.3) to [bend left=25] (5,0.3);
\draw[thick] (1.3,0.3) to [bend left=25] (f);
\draw[thick] (f) to [bend left=25] (5.4,0.3);

\draw[thick] (0.1,0.3) to [bend left=35] (4,0.3);

\end{tikzpicture}\qquad \qquad
A(\ell,F)\ = \quad
\begin{tikzpicture}[baseline=2ex]

\node(a) at (0,0) {$(\bigparr_{i}^{n}C_{i}^{\bot}\parr F)\otimes D[F]^{\bot}$};

\node(b) at (4,0) {$(\bigotimes_{i}^{n}C_{i}\otimes E^{\bot})\parr D[F]$};

\node(f) at (2.3,1.2) [draw,thick,minimum width=0.9cm,minimum height=0.25cm, rounded corners=3pt] {$\ell$};

\draw[thick] (-1,0.3) to [bend left=35] (3,0.3);
\draw[thick] (0.9,0.3) to [bend left=25] (5,0.3);
\draw[thick] (0.1,0.3) to [bend left=25] (f);
\draw[thick] (f) to [bend left=25] (4,0.3);

\draw[thick] (1.3,0.3) to [bend left=25] (5.4,0.3);


\end{tikzpicture}$
}
\caption{Functorial action of Yoneda formulas}
\label{precoend}
\end{subfigure}

\begin{subfigure}{\textwidth}
\adjustbox{scale=0.65,center}{
\begin{tikzpicture}[baseline=2ex]
\node(a) at (0,0) {$\exists X((\bigparr_{i}^{n}C_{i}^{\bot}\parr X)\otimes D[X]^{\bot})$};

\node(b) at (4,0) {$(\bigotimes_{i}^{n}C_{i}\otimes E^{\bot})\parr D[F]$};

\node(f) at (4.65,0.7) [draw,thick,minimum width=0.9cm,minimum height=0.25cm, rounded corners=3pt] {$\ell$};

\draw[thick] (-1,0.3) to [bend left=35] (3,0.3);
\draw[thick] (1.1,0.3) to [bend left=35] (5,0.3);
\draw[thick] (4,0.3) to [bend left=25] (f);
\draw[thick] (f) to [bend left=25] (5.3,0.3);

\draw[very thick, orange, dotted] (0.3,0.3) to [bend left=35] (1.5,0.3);
\draw[very thick, orange, dotted,->] (0.9,0.5) to [bend left=45] (4,0.3);

\end{tikzpicture}$\qquad\sim \qquad$
\begin{tikzpicture}[baseline=2ex]
\node(a) at (0,0) {$\exists X((\bigparr_{i}^{n}C_{i}^{\bot}\parr X)\otimes D[X]^{\bot})$};

\node(b) at (4,0) {$(\bigotimes_{i}^{n}C_{i}\otimes E^{\bot})\parr D[F]$};

\node(f) at (4.65,0.7) [draw,thick,minimum width=0.9cm,minimum height=0.25cm, rounded corners=3pt] {$\ell$};

\draw[thick] (-1,0.3) to [bend left=35] (3,0.3);
\draw[thick] (1.1,0.3) to [bend left=35] (5,0.3);
\draw[thick] (4,0.3) to [bend left=25] (f);
\draw[thick] (f) to [bend left=25] (5.3,0.3);

\draw[very thick, cyan, dotted] (0.3,0.3) to [bend left=35] (1.5,0.3);
\draw[very thick, cyan, dotted,->] (0.9,0.5) to [bend left=45] (5.3,0.3);

\end{tikzpicture}
}
%
%
%
%
%
%
%
%
\caption{$\exists$ is a co-wedge in $\BB L^{\exists}$}
\label{coendyes}
\end{subfigure}
\caption{Existential linkings and co-wedges.}
\end{figure}

\begin{example}
The ``Yoneda isomorphism''  holds in $\BB L^{\exists}$, as the composition $\ell_{Yo_{1}^{A}}\circ \ell_{Yo_{2}^{A}}$ reduces to $ID_{\forall XA}$ (up to rewitnessing).

\end{example}
By relying on the two Yoneda translations we now prove our main result.

\begin{theorem}\label{equiv}
$\pi$ and $\ell$ define an isomorphism of categories $\BB{G}_{\varepsilon}^{\C Y}\simeq \BB L^{\exists}$.
\end{theorem}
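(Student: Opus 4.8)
The plan is to show that $\pi$ and $\ell$, which act as the identity on objects of $\mathsf{MLL2}_{\C Y}$, are mutually inverse functors between $\BB G_{\varepsilon}^{\C Y}$ and $\BB L^{\exists}$. Since both are the identity on objects, so are the two composites, and it remains only to establish the two equations $\ell_{\pi(\ell)}=\ell$ (up to rewitnessing), for every correct $\exists$-linking $\ell$, and $\pi(\ell_{\pi})\simeq_{\varepsilon}\pi$, for every cut-free $G$-net $\pi$ in the fragment; these say exactly that $\ell\circ\pi$ and $\pi\circ\ell$ are the identity on morphisms.

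First I would check that the two maps are well-defined functors. For $\ell\colon\BB G_{\varepsilon}^{\C Y}\to\BB L^{\exists}$, invariance under $\simeq_{\varepsilon}$ is Proposition \ref{ellpipi}, and preservation of identities and composition follows from the compatibility of the construction $\pi\mapsto\ell_{\pi}$ with $G$-net cut-elimination, together with confluence and strong normalisation for $\exists$-linkings (Lemma \ref{confluence}, Proposition \ref{norma}). For $\pi\colon\BB L^{\exists}\to\BB G_{\varepsilon}^{\C Y}$ the only delicate point is invariance under rewitnessing: by part $a$ of Proposition \ref{ionez} we have $\B{Yon}(\pi(\ell))=\ell_{\C Y}$ for each representative $\ell$, so $\ell\sim\ell'$ gives $\B{Yon}(\pi(\ell))\sim_{lax}\B{Yon}(\pi(\ell'))$ by Lemma \ref{commuyoneda}, and faithfulness of $\B{Yon}$ on $\BB G_{\varepsilon}^{\C Y}$ (Lemma \ref{faith}) forces $\pi(\ell)\simeq_{\varepsilon}\pi(\ell')$; that $\pi$ preserves $ID_{\forall XA}$ and composition was already observed when $\ell\mapsto\pi(\ell)$ was defined, using Proposition \ref{stability} and invariance of the $G$-net interpretation under reduction (Proposition \ref{cuts}).

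The equation $\ell_{\pi(\ell)}=\ell$ holds by construction: applying the $(-)^{cut}$ construction to $\pi(\ell)$ yields the $\exists$-linking-with-cuts obtained from $\ell$ by cutting its $\exists$-premisses against fresh axioms, and normalising those cuts returns $\ell$. For $\pi(\ell_{\pi})\simeq_{\varepsilon}\pi$ I would run the diagram chase around the triangle of Proposition \ref{ionez}: part $a$ gives $\B{Yon}(\pi(\ell_{\pi}))=\C Y(\ell_{\pi})$ and part $b$ gives $\C Y(\ell_{\pi})=\B{Yon}(\pi)$, hence $\B{Yon}(\pi(\ell_{\pi}))=\B{Yon}(\pi)$ in $\mathsf{Lax}$; since $\pi(\ell_{\pi})$ and $\pi$ are parallel morphisms of $\BB G_{\varepsilon}^{\C Y}$ and $\B{Yon}$ is faithful there (Lemma \ref{faith}), we conclude $\pi(\ell_{\pi})\simeq_{\varepsilon}\pi$. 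Together with the previous equation this exhibits $\pi$ and $\ell$ as an isomorphism of categories $\BB G_{\varepsilon}^{\C Y}\simeq\BB L^{\exists}$.

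The real work lies upstream of this assembly. The main obstacle is Lemma \ref{faith}: faithfulness of $\B{Yon}$ modulo $\simeq_{\varepsilon}$ is precisely the place where the ``true'' Yoneda isomorphism \ref{yodin} and Lemma \ref{ioneok} are cashed in to collapse the generalized rule permutations identified by dinaturality. The second nontrivial input is Proposition \ref{ellpipi}, which rests on Lemma \ref{coend} (that $\exists$ behaves as a co-wedge in $\BB L^{\exists}$) and on the identification of $\BB L^{\B 1,\bot}$ with the free $^{*}$-autonomous category (Lemma \ref{star}); this is what guarantees that every $\simeq_{\varepsilon}$-equality is realised by finitely many rewitnessings. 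I expect the only genuinely fiddly, as opposed to conceptual, point to be checking that $\pi$ is functorial with respect to the two \emph{a priori} distinct cut-elimination procedures — on $G$-nets and on $\exists$-linkings — which is exactly what Propositions \ref{stability} and \ref{cuts} are for.
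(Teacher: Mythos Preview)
Your proposal is correct and follows essentially the same route as the paper: both arguments hinge on Proposition~\ref{ellpipi} for the well-definedness of $\ell$, on Lemma~\ref{commuyoneda} together with Proposition~\ref{ionez}\,$a$ and the faithfulness of $\B{Yon}$ (Lemma~\ref{faith}) for the well-definedness of $\pi$, and on the construction equation $\ell_{\pi(\ell)}=\ell$. The only organizational difference is that the paper first establishes faithfulness of $\pi$ and $\ell$ separately and then derives $\pi\simeq_{\varepsilon}\pi(\ell_{\pi})$ from the faithfulness of $\ell$ applied to $\ell_{\pi}=\ell_{\pi(\ell_{\pi})}$, whereas you go directly to the inverse equation via the triangle $\B{Yon}(\pi(\ell_{\pi}))=\C Y(\ell_{\pi})=\B{Yon}(\pi)$; since the paper's faithfulness of $\ell$ is itself proved by exactly this $\B{Yon}$/$\C Y$ chase, your argument simply unfolds one layer and is marginally more direct.
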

\begin{proof}
We will show that $\pi$ and $\ell$ are faithful functors inverse each other. 
To prove that $\pi$ is a faithful functor we must show that the assignment $\ell\mapsto \pi(\ell)$ yields an injective function $\BB L^{\exists}(A,B)\to \BB G_{\varepsilon}^{\C Y}(A,B)$. 
We claim that  $\ell\sim \ell' \Rightarrow \pi(\ell)\simeq_{\varepsilon}\pi(\ell')$: from $\ell\sim\ell'$ we deduce by lemma \ref{commuyoneda} $\ell_{\C Y}\sim_{lax}\ell'_{\C Y}$, hence, by proposition \ref{ionez} $a.$, $\B{Yon}(\pi(\ell))\sim_{lax} \B{Yon}(\pi(\ell'))$, and from the faithfulness of $\B{Yon}$ we can conclude $\pi(\ell)\simeq_{\varepsilon}\pi(\ell')$. This shows that $\pi$ is a function. Functoriality can be easily verified (by showing that $\pi$ maps identity linkings into identity $G$-nets and that it preserves composition). 
Injectivity is proved as follows: if $\pi(\ell)\simeq_{\varepsilon}\pi(\ell')$ then, by proposition \ref{ellpipi}, $\ell=\ell_{\pi(\ell)}\sim \ell_{\pi(\ell')}=\ell'$. 

To prove that $\ell$ is a faithful functor we must show that the assignment $\pi\mapsto \ell_{\pi}$ yields an injective function $ \BB G_{\varepsilon}^{\C Y}(A,B)\to \BB L^{\exists}(A,B)$. 
The functionality of $\ell$ follows from proposition \ref{ellpipi}. By construction it can be verified that the functor $\ell$ translates an identity $G$-net into an identity $\exists$-linking and that it preserves composition. Injectivity is proved as follows: if $\ell_{\pi}\sim \ell_{\pi'}$, then by lemma \ref{commuyoneda}, $(\ell_{\pi})_{\C Y} \sim_{lax}(\ell_{\pi'})_{\C Y}$, hence by proposition \ref{ionez} $b.$, $\B{Yon}(\pi)\sim_{lax}\B{Yon}(\pi')$ and from the faithfulness of $\B{Yon}$ we conclude $\pi\simeq_{\varepsilon}\pi'$.

Since $\ell=\ell_{\pi(\ell)}$, it remains to show that $\pi\simeq_{\varepsilon}\pi(\ell_{\pi})$. This follows from $\ell_{\pi}=\ell_{\pi(\ell_{\pi})}$ and the faithfulness of $\ell$.

\end{proof}

\begin{corollary}
For all $G$-nets $\pi,\pi'$ of conclusions $\Gamma$, $\pi\simeq_{\varepsilon} \pi'$ iff $\ell_{\pi}\sim \ell_{\pi'}$.
\end{corollary}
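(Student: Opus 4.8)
The plan is to recognise the corollary as nothing more than the sequent-level restatement of the faithfulness of the functor $\ell\colon\BB G_\varepsilon^{\C Y}\to\BB L^\exists$ already established in Theorem \ref{equiv}, so that the proof splits into the two implications of the biconditional, each of which has been carried out.

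First I would handle the direction from left to right: if $\pi\simeq_\varepsilon\pi'$ then $\ell_\pi\sim\ell_{\pi'}$. This is precisely Proposition \ref{ellpipi}, so there is nothing to add. For the converse, assume $\ell_\pi\sim\ell_{\pi'}$. By Lemma \ref{commuyoneda} we get $(\ell_\pi)_{\C Y}\sim_{lax}(\ell_{\pi'})_{\C Y}$; by Proposition \ref{ionez}($b$) we have $(\ell_\pi)_{\C Y}\sim_{lax}\B{Yon}(\pi)$ and $(\ell_{\pi'})_{\C Y}\sim_{lax}\B{Yon}(\pi')$, hence $\B{Yon}(\pi)\sim_{lax}\B{Yon}(\pi')$; finally Lemma \ref{faith}, the faithfulness of $\B{Yon}$ as a functor from $\BB G_\varepsilon^{\C Y}$ to $\mathsf{Lax}$, gives $\pi\simeq_\varepsilon\pi'$. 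This is exactly the injectivity argument carried out for $\ell$ inside the proof of Theorem \ref{equiv}.

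The one bookkeeping point is that in Theorem \ref{equiv} the hom-sets $\BB G_\varepsilon^{\C Y}(A,B)$ and $\BB L^\exists(A,B)$ consist of nets and $\exists$-linkings with the two conclusions $A^\bot,B$, whereas the corollary quantifies over $G$-nets of an arbitrary clean sequent $\Gamma=A_1,\dots,A_n$. I would close this gap by the standard currying: replacing $\Gamma$ by $A_1,\,A_2\parr\cdots\parr A_n$ through the addition of $\parr$-links changes neither the interpretation $\pi^{\BB C,\C F}$ (the $\parr$-rule is interpreted as the identity, as recorded in the $(\parr/\forall)$, $(\parr/\exists)$ cases of the sequentialization-independence argument) nor the underlying $\exists$-linking $\ell_\pi$ (linkings and witnessing functions depend only on the leaves of the parse forest, which are unaffected). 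Hence both $\simeq_\varepsilon$ and $\sim$ are stable under this reshaping, and the general case follows from the two-conclusion case. I do not anticipate a genuine obstacle here: all the mathematical content sits in Theorem \ref{equiv} and the lemmas feeding it --- notably the \emph{true} Yoneda isomorphism behind Lemma \ref{faith} --- and the corollary is purely a reformulation.
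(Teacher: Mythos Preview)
Your proposal is correct and matches the paper's approach: the corollary is stated without proof as an immediate consequence of Theorem~\ref{equiv}, and the two directions you spell out are precisely the functionality (Proposition~\ref{ellpipi}) and injectivity arguments for $\ell$ already contained in that theorem's proof. Your extra bookkeeping about reducing an arbitrary sequent $\Gamma$ to the two-conclusion case via $\parr$-links is a reasonable clarification the paper leaves implicit.
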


\section{Conclusions}

We provided a syntactic characterisation of the equational theory generated by ends/coends over \emph{Yoneda} formulas in
$\mathsf{MLL2}$. Our result relies on the simple structure of \emph{Yoneda} formulas (1 positive and 1 negative occurrence of quantified variables) and on the existence of a faithful translation from $\mathsf{MLL2}_{\D Y}$ to $\mathsf{MLL}$ with units.
It seems thus plausible that more sophisticated syntactic techniques are required to extend the characterisation to more expressive fragments of $\mathsf{MLL2}$.
In particular, while our result implies the decidability of the dinatural equivalence $\simeq_{\varepsilon}$ in $\mathsf{MLL2}_{\D Y}$, it is not known whether the theory $\simeq_{\varepsilon}$ is decidable over full $\mathsf{MLL2}$. 
However, keeping the Yoneda restriction, it can be expected that similar characterizations can be obtained for more expressive systems like $\mathsf{MELL2}$ (which is as expressive as System $\mathsf F$). 


Finally, it might be interesting to compare the theory $\simeq_{\varepsilon}$ with the equivalence arising from other models of $\mathsf{MLL2}$ investigated in the literature. For instance, while it is well-known that the coherent model of second order linear logic \cite{linear} is not dinatural (\cite{Fiore1996}), it can be easily seen that it satisfies the Yoneda isomorphism. Hence it can be conjectured that the model is injective (in the sense of \cite{Carvalho2012}) with respect to $\exists$-linkings for $\mathsf{MLL2}_{\C Y}$.

\bibliographystyle{eptcs}
\bibliography{YonedaFinal.bib}

\appendix

\section{$^{*}$-autonomous categories and coends}\label{appB}

We recall that a $^{*}$-autonomous category is a category $\BB C$ endowed with functors $\_\otimes\_: \BB C^{2}\to \BB C$ and $\_^{\bot}: \BB C^{op}\to \BB C$, an object $\B 1_{\BB C}$, the following natural isomorphisms:
$$
\begin{matrix}
\alpha_{a,b,c} \ :\ a\otimes (b\otimes c) \to (a\otimes b)\otimes c\\ 
\lambda_{a} \ : \ a\otimes \B 1_{\BB C}\to a\\
\rho_{a} \ : \ \B 1_{\BB C}\otimes a\to a \\
\sigma_{a,b} \ : \ a\otimes b\to b\otimes a \\
\end{matrix}
$$
and a natural bijection between $\BB C(a\otimes b,c)$ and $\BB C(a,  b^{\bot}\parr c)$, where $x\parr y= \BB C(x^{\bot},y)$, satisfying certain coherence conditions (that we omit here, see \cite{Barr1979}). In any $^{*}$-autonomous category $\BB C$ there is a natural isomorphism $A^{\bot\bot}\simeq A$. $\BB C$ is called \emph{strict} when this isomorphism is an identity.

For the definition of multivariant functors and dinatural transformations the reader can look at \cite{MacLane}. When  $F:(\BB C^{op}\otimes \BB C)^{n+1}\to \BB D$ and the values $a_{1},\dots, a_{n}\in Ob_{\BB C}$ are clear from the context, we will will often abbreviate $F( (a_{1},\dots, a_{n},a), (a_{1},\dots, a_{n},b))$ as $F(a,b)$. 

Given $\BB C$ $^{*}$-autonomous, for all $a\in Ob_{\BB C}$, there exist dinatural transformations $\hat{\B 1}_{x}:\B 1_{\BB C}\to x^{\bot}\parr x$ and $\hat{\bot}_{x}=\hat{\B 1}_{x}^{\bot}: x\otimes x^{\bot}\to \bot_{\BB C}$, where $\bot_{\BB C}:= \B 1_{\BB C}^{\bot}$. It is clear that such transformations exist in all dinatural model, according to definition \ref{dinamodel}.
%
%
%
%

Given categories $\BB C,\BB D$ and a multivariant functor $F:( \BB C^{op}\otimes \BB C)^{n+1}\to \BB D$, a \emph{wedge for $F$}\footnote{We give here a functorial definition of ends and coends which can be easily deduced from the usual definition (see \cite{MacLane}).} (dually, a \emph{co-wedge for $F$}, see \cite{MacLane}) is a pair $(C, \delta_{x_{1},\dots, x_{n},a})$ (resp. ($D, \omega_{x_{1},\dots, x_{n},a})$)\footnote{We will abbreviate $\delta_{x_{1},\dots, x_{n},a}$ and $\omega_{x_{1},\dots, x_{n},a}$ simply as $\delta_{a}$ and $\omega_{a}$, respectively.} made of a functor $C:(\BB C^{op}\otimes \BB C)^{n}\to \BB D$ and a  dinatural transformation $\delta_{a}: C\to F(a,a)$ (resp. $\omega_{a}: F(a,a)\to D $) natural in $x_{1},\dots, x_{n}$.
A wedge (resp. a co-wedge) for $F$ is an \emph{end} (resp. a \emph{coend}) when the dinatural transformation $\delta_{a}$ (resp. $\omega_{a}$) is \emph{universal}.
 This means that for any functor $G:(\BB C^{op}\otimes \BB C)^{n}\to \BB D$ and dinatural transformation $\theta_{a}: G\to F(a,a)$ (resp. $\theta_{a}:F(a,a)\to G$) there exists a unique natural transformation $h: G\to \int_{x}F(x,x)$ (resp. $k: \int^{x}F(x,x)\to G$) such that the following diagrams commute for all $f\in \BB C(a,b)$:

\adjustbox{scale=0.8,center}{
\begin{tikzcd}
G \ar[rrd,"{\theta_{a}}", bend left=15] \ar[ddr, "{\theta_{b}}", bend right=15] \ar[dashed]{rd}{h}  &    &   \\   
  &  \int_{x} F(a,a) \ar[r, "\delta_{a}"] \ar[d, "{\delta_{b}}"] & F(a,a) \ar[d, "{F(a{,}f)}"] \\
  &  F(b,b) \ar[r,"{F(f{,}b)}" below]  &  F(a,b)
  \end{tikzcd}
  $\qquad$
\begin{tikzcd}
  F(b,a) \ar[r,"{F(f{,}a)}"] \ar[d, "{F(b{,}f)}"]  & F(a,a) \ar[rdd, "{\theta_{a}}", bend left=15] \ar[d, "{\omega_{a}}"]&  \\
  F(b,b) \ar[r, "{\omega_{b}}"] \ar[rrd, "{\theta_{b}}", bend right=15]&  \int^{x} F(x,x) \ar[dashed]{rd}{k} & \\
    &    &  G
   \end{tikzcd}}
   
\noindent 
Duality yields $\int_{x}F=(\int^{x}F^{\bot})^{\bot}$, $\int^{x}F=(\int_{x}F^{\bot})^{\bot}$ and $\delta_{a}=\omega_{a}^{\bot}$, $\omega_{a}=\delta_{a}^{\bot}$.


We recall some basic facts about coends (see \cite{MacLane, Loregian}):

\begin{itemize}

\item Commutation with $\parr/\otimes$:

\begin{eqnarray}\label{commut}
\int_{x}(F\parr G(x,x))\simeq G\parr \int_{x}G(x,x) \\
\int^{x}(F\otimes G(x,x))\simeq F\otimes \int^{x}G(x,x)
\end{eqnarray}

\item ``Fubini'' theorem: 
\begin{eqnarray}\label{fubini}
\int_{x}\int_{y}F\simeq\int_{y}\int_{x}F \\
\int^{x}\int^{y}F\simeq\int^{y}\int^{x}F
\end{eqnarray}

\item Commutation of $\int_{x}/\int^{x}$ and $\parr$: given a functor $F$ and a multivariant functor $G(x,y)$, there exist natural transformations

\begin{eqnarray}\label{coends2}
\mu:\int_{x}(F\parr G(x,x))\to F\parr \int_{x}G(x,x)\\
\nu:\int^{x}(F\parr G(x,x))\to F\parr \int^{x}G(x,x)
\end{eqnarray}

\end{itemize}

In a dinatural model (def. \ref{dinamodel}) one considers \emph{relativized} ends and coends, that is, wedges/co-wedges which are universal with respect to a certain class of (composable) dinatural transformations. 
All facts above about ends and coends can be straightforwardly adapted to relativized ends and coends.

\section{Hughes sequentialization theorem}\label{AppC}

We adapt the sequentialization algorithm for unification nets in \cite{Hughes2018} to $G$-nets. This algorithm is based on the translation of a unification net into a $\mathsf{MLL}^{-}$ proof net  (where $\mathsf{MLL}^{-}$ indicates $\mathsf{MLL}$ without units), called the \emph{frame}, by a suitable encoding of jumps. The reconstruction of a sequent calculus derivation exploits then the usual splitting property of $\mathsf{MLL}^{-}$ proof nets. This construction can be straightforwardly adapted to $G$-nets, by translating a cut-free $G$-proof structures into $\mathsf{MLL}^{-}$ proof-structures as follows:\\
$(1)$ \ \emph{Encode every jump from a $\forall$ to an $\exists$ as a new link}: for each such jump between formulas $\forall XA$ and $\exists YB$, let $Z$ be a fresh variable. Replace $\exists YB$ by $Z\otimes \exists YB$ and $\forall XA$ by $Z^{\bot}\parr \forall XA$; \\
$(2)$\ \emph{Delete quantifiers}. After (1) replace every formula $\forall XA$ by $A$ and every formula $\exists XA$, with premiss $A[B/X]$, by $A[B/X]$. \\
We let $\pi_{m}$, the \emph{frame of $\pi$}, be the $\mathsf{MLL}^{-}$ proof-structure obtained. The following two lemmas are as in \cite{Hughes2018}.

\begin{lemma}
If $\pi$ is a $G$-net, $\pi_{m}$ is a proof net.
\end{lemma}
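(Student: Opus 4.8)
The plan is to compare the Danos--Regnier switchings of the $\mathsf{MLL}^{-}$ proof structure $\pi_{m}$ with the switchings of the $G$-net $\pi$ --- the latter being, as recalled in footnote~\ref{jumps}, choices of a premise for every $\parr$-link together with, for every $\forall$-link, either its premise or a single $\exists$-link whose witness contains its eigenvariable free --- and to deduce correctness of $\pi_{m}$ from correctness of $\pi$. In both cases correctness means: every switching graph is a tree (acyclic and connected).

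First I would dispose of step~$(2)$ of the frame construction. After step~$(1)$ all the jumps of every $\forall$-link have been replaced by explicit $Z$-axioms, so that $\forall$-link has become a plain unary link, and every $\exists$-link is unary as well; deleting such links is the contraction of edges that belong to every switching graph, and edge contraction turns a graph into a tree if and only if it was already one. Hence it suffices to show that the proof structure $\pi'$ obtained from $\pi$ by performing step~$(1)$ only is correct.

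Then comes the heart of the argument. Given a Danos--Regnier switching $\sigma$ of $\pi'$, I would read off a switching $\tilde\sigma$ of $\pi$: keep $\sigma$'s choices on the $\parr$-links inherited from $\pi$; and for each $\forall$-link $v$ inspect how $\sigma$ switches the chain $Z_{1}^{\bot}\parr\cdots\parr Z_{k}^{\bot}\parr\forall XA$ that step~$(1)$ put above $v$ (one $Z_{i}$ per jump of $v$): if $\sigma$ always selects the downward premise, let $v$ stay at its premise in $\tilde\sigma$; otherwise let $v$ jump, in $\tilde\sigma$, onto the $\exists$-link $w$ matched by the new axiom of the leaf $Z_{j_{0}}^{\bot}$ selected at the topmost $\parr$ of the chain. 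I would then show that the switching graph of $(\pi',\sigma)$ is obtained from that of $(\pi,\tilde\sigma)$ by operations that preserve ``being a tree'': subdividing the principal edge of each $v$ (its premise edge, or its jump edge $v\!-\!w$) into the principal path --- down the chain of $v$ to the first selected leaf $Z_{j_{0}}^{\bot}$, through its axiom and the new $\otimes$-links above $w$, into the rest of the graph --- and grafting the remaining ``detached'' segments of the chain, with the remaining $Z$-axioms, as pendant subtrees. The point to verify is that each detached segment is a path meeting the rest of the graph in exactly one vertex: the former position of the $\exists$-link of its lowest selected leaf, or the premise of $v$. Since subdividing an edge and grafting a pendant subtree at a single vertex both preserve acyclicity and connectedness, the graph of $(\pi',\sigma)$ is a tree whenever that of $(\pi,\tilde\sigma)$ is; and the latter always is, because $\pi$ is a $G$-net. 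Hence every Danos--Regnier switching of $\pi_{m}$ yields a tree, i.e. $\pi_{m}$ is a proof net.

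The step I expect to be the main obstacle is exactly this last verification when a $\forall$-link carries several jumps, so that $\sigma$ deviates from its chain at several $\parr$-nodes: one must check that, after the $Z$-axioms and $\otimes$-links are grafted on, each of the resulting detached path-segments still hangs off the remainder at a single vertex. This is where one invokes, for each involved $\exists$-link $w_{i}$, that the switching of $\pi$ in which $v$ jumps onto $w_{i}$ is acyclic and connected: this forces $w_{i}$ into the component of the premise of $v$ and so pins down the unique attachment point. The argument is the one of \cite{Hughes2018}, transported from unification nets to $G$-nets.
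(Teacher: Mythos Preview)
Your proposal is aligned with the paper, which in fact gives no proof of its own: the lemma is stated ``as in \cite{Hughes2018}'' and left at that. Your sketch is a faithful working-out of that reference's switching-comparison argument, transported from unification nets to $G$-nets exactly as you acknowledge in your closing sentence; the paper is content to leave this transport implicit.
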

\begin{lemma}\label{33}
No $\otimes$ added during the construction of $\pi_{m}$ splits.
\end{lemma}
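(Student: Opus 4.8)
The plan is to reduce the statement to a purely combinatorial property of the frame $\pi_{m}$ and then exploit the fresh axioms created by the jump-encoding. First I would fix the notion of splitting in the form most convenient here: a $\otimes$-link $w$ with premisses $p_{1},p_{2}$ \emph{splits} exactly when removing $w$ disconnects $\pi_{m}$ into two proof structures separating $p_{1}$ from $p_{2}$, equivalently when there is \emph{no} path between $p_{1}$ and $p_{2}$ in $\pi_{m}$ that avoids $w$. Hence it suffices to exhibit, for each added $\otimes$, one such avoiding path between its two premisses. Since $\pi_{m}$ is a proof net by the previous lemma, I may freely use its correctness (connectedness and acyclicity of every Danos--Regnier switching), and since a single avoiding path already establishes non-splitting, I only ever have to argue locally around one encoded jump.

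Next I would unwind the shape of an added $\otimes$. It arises, via step $(1)$, from a single jump from a $\forall$-link of eigenvariable $X$ and scope $A$ to an $\exists$-link of conclusion $\exists YB$, witness $C$ and premiss $B[C/X]$; writing $Z$ for the fresh variable chosen, step $(1)$ produces $t=Z\otimes\exists YB$ together with the companion par $p=Z^{\bot}\parr\forall XA$, and step $(2)$ turns these into $t=Z\otimes B[C/Y]$ and $p=Z^{\bot}\parr A$. Because $Z$ is fresh, the unique axiom on $Z$ links the left premiss of $t$ directly to the left premiss $Z^{\bot}$ of $p$, so $Z$, $Z^{\bot}$, the par $p$, and the whole scope-subtree $A$ all lie in one component once $t$ is deleted. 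On the other side, the very existence of the jump means that the witness $C$ contains a free occurrence of the eigenvariable $X$, so the right premiss-subtree $B[C/Y]$ of $t$ carries a free leaf labelled by $X$ (or $X^{\bot}$).

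The key step is then to close the loop through the eigenvariable. Starting from the right premiss $B[C/Y]$ of $t$, I descend to that witness occurrence of $X$, follow its axiom to the dual occurrence, and trace this occurrence back into the scope $A$; from the root of $A$ I cross the par $p$ to reach $Z^{\bot}$, and finally follow the $Z$-axiom to the left premiss $Z$ of $t$. This path never traverses $t$, so $t$ does not split. Equivalently, one can phrase the argument as: were $t$ to split, each of the two resulting components would have to be axiom-closed, yet the witness occurrence of $X$ (on the $\exists$-side) is axiom-linked to an occurrence living in the scope $A$ (dragged onto the $Z$-side through $Z^{\bot}$ and $p$), so no axiom-closed separation exists.

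The main obstacle is precisely the sentence I glossed over, namely that the witness occurrence of the eigenvariable is axiom-connected to the $\forall$-scope side rather than being stranded within the $\exists$-subtree, and I expect this to be the only delicate point. I would justify it from the eigenvariable discipline of $G$-nets (the occurrences of $X$ are created in the scope $A$ and copied into the witness, and all occurrences of an eigenvariable form a single axiom-connected region) together with the correctness of $\pi_{m}$: if the witness occurrence of $X$ were disconnected from $A$, then choosing the Danos--Regnier switching of $p$ that detaches $A$ from $p$'s conclusion would strand the scope or, in the presence of the $Z$-axiom and the $X$-axioms, close a cycle through $t$, in either case contradicting the fact that $\pi_{m}$ is a proof net. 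Finally I would observe that the whole argument is local to one fresh variable, so a $\forall$-link jumping to several $\exists$-links (several fresh variables and nested added pars) and an added $\otimes$ sitting deep inside the parse forest are handled verbatim, since in each case a single avoiding path through the corresponding fresh axiom already suffices.
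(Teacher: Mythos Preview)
The paper does not prove this lemma; it simply records it as being ``as in \cite{Hughes2018}''. Your overall strategy---exhibit a path between the two premisses of the added $\otimes$ that avoids it---is the right one, and the first half of your path (reaching the added $\parr$ $p$ from $Z$ via the fresh axiom) is fine. The gap is the second half: you route from $R=B[C/Y]$ to $A$ through an axiom on the eigenvariable $X$, but nothing guarantees that a given $X$-leaf in $B[C/Y]$ is axiom-linked into the scope $A$. It could equally well be linked to another $X$-leaf inside $B[C/Y]$, or to a leaf sitting above a \emph{different} $\exists$-link whose witness also mentions $X$. Your attempted justification (``all occurrences of an eigenvariable form a single axiom-connected region'', and the switching of $p$ that detaches $A$) does not rule these cases out: when $A$ is detached from $p$ it may remain connected to the rest of $\pi_{m}$ through non-$X$ axioms, with no $X$-axiom ever crossing from $B[C/Y]$ into $A$.

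The argument that does go through uses the \emph{other} switching of $p$. Switch $p$ so that the edge to $Z^{\bot}$ is removed (keeping the edge to the other premiss). In that switching $Z^{\bot}$ has a single neighbour, namely $Z$, and $Z$ has only $Z^{\bot}$ and $t$ as neighbours. Hence in the switching tree, removing the terminal link $t$ leaves $\{Z,Z^{\bot}\}$ as one component and \emph{everything else}---in particular both $R$ and $p$---in the other. Thus $p$ and $R$ are already connected in the switching graph with $t$ deleted, hence in $\pi_{m}$ with $t$ deleted; adding back the full-graph edge from $p$ to $Z^{\bot}$ gives a path $R\to\cdots\to p\to Z^{\bot}\to Z$ avoiding $t$, so $t$ does not split. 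Notice that this uses nothing about the eigenvariable: the obstruction is purely the fresh axiom on $Z$ together with the correctness of $\pi_{m}$, which is exactly the content of the encoding.
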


We can now use $\pi_{m}$ to find splitting tensors in $\pi$, yielding the following:
\begin{theorem}[sequentialization]
If $\pi$ is a $G$-net, then $\pi$ is the translation of some sequent calculus derivation.
\end{theorem}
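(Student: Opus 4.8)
The plan is to proceed by induction on the number of links of $\pi$, in the classical style of sequentialization proofs, with the frame $\pi_{m}$ and the $\mathsf{MLL}^{-}$ splitting lemma doing the heavy lifting in the tensor case; the two facts recalled just above — that $\pi_{m}$ is an $\mathsf{MLL}^{-}$ proof net and that no inserted $\otimes$ splits (Lemma~\ref{33}) — are exactly what makes this transfer possible.

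If $\pi$ has a single link it is an axiom link and is the translation of an axiom rule. Otherwise I look for a ``last rule''. If some conclusion of $\pi$ is a $\parr$-link, I erase it, obtaining a $G$-net $\pi'$ with one fewer link (deleting a terminal $\parr$ never creates a switching cycle), apply the induction hypothesis, and append a $\parr$-rule. If no conclusion is a $\parr$-link but some conclusion $\forall XA$ is a $\forall$-link, I erase it together with the jumps issuing from it; cleanness of the conclusion sequent guarantees that exposing $A$ (with $X$ now free) leaves a $G$-net $\pi'$, and I append a $\forall$-rule. If no conclusion is a $\parr$- or $\forall$-link but some conclusion $\exists XA$ is a \emph{simple} $\exists$-link, I erase it together with its witness edge; no jump is touched, so $\pi'$ is a $G$-net, and I append an $\exists$-rule recording the witness formula.

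The remaining case is that no conclusion of $\pi$ is a $\parr$-link, a $\forall$-link, or a simple $\exists$-link. I claim then that some conclusion is a $\otimes$-link and that a suitable one splits $\pi$. Passing to the frame: a $\otimes$-conclusion of $\pi$ stays a $\otimes$-conclusion of $\pi_{m}$, and a non-simple $\exists$-conclusion is turned by jump-encoding into a conclusion headed by an added $\otimes$; hence every conclusion of $\pi_{m}$ is an axiom leaf or a $\otimes$-link (and $\pi_{m}$ has more than one link), so the splitting lemma for $\mathsf{MLL}^{-}$ proof nets yields a terminal splitting $\otimes$-link of $\pi_{m}$. By Lemma~\ref{33} this link is not an inserted one, so it is the image of a genuine $\otimes$-link $L$ of $\pi$; moreover $L$ must be terminal in $\pi$, since any $\parr$- or $\otimes$-link, any jump-source $\forall$-link or any jump-target $\exists$-link strictly below $L$ in $\pi$ would leave a $\parr$- or $\otimes$-link below $L$'s image in $\pi_{m}$ — contradicting its terminality — while the bottom-most link below $L$, necessarily a $\forall$- or a simple $\exists$-conclusion of $\pi$, would contradict the case hypothesis. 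Finally $L$ splits $\pi$: the bipartition of $\pi_{m}$ induced by $L$ cannot cut any axiom link, in particular not the fresh-variable axioms encoding jumps, so each jump of $\pi$ falls entirely in one half; removing $L$ therefore disconnects $\pi$ into two $G$-proof structures which, being obtained from the correct $\pi$ by removing a terminal splitting tensor, are $G$-nets. The induction hypothesis sequentializes both, and a $\otimes$-rule combines the two derivations.

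I expect the main obstacle to be precisely the bookkeeping inside this last case: rigorously matching the terminal splitting $\otimes$ of $\pi_{m}$ with a terminal $\otimes$-link of $\pi$ — quantifier deletion and jump-encoding change which links are terminal — and verifying that the induced partition of $\pi_{m}$ respects the jump structure, so that the two halves are legal $G$-proof structures and not merely correct $\mathsf{MLL}^{-}$ structures. This is the content hidden in ``use $\pi_{m}$ to find splitting tensors in $\pi$'', and is where Hughes's unification-net argument has to be genuinely adapted to the quantifier setting; everything else is a routine lift of the $\mathsf{MLL}^{-}$ proof along the frame.
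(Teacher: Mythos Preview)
Your proposal is correct and follows essentially the same approach as the paper: peel off terminal $\parr$-, $\forall$-, and simple $\exists$-links, and in the remaining case use the frame $\pi_m$ together with Lemma~\ref{33} to locate a genuine splitting $\otimes$ in $\pi$. Your write-up is in fact more careful than the paper's, which simply asserts that the splitting $\otimes$ of $\pi_m$ ``corresponds to a splitting $\otimes$ in $\pi$'' without spelling out why it is terminal in $\pi$ or why the induced bipartition respects the jump structure; your argument via the bottom-most link below $L$ and the fresh-variable axioms encoding jumps is exactly the missing justification.
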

\begin{proof}
The sequentialization algorithm for a $G$-net $\pi$ is as follows: \begin{enumerate}
\item[1.] Start by eliminating negative links, i.e. $\parr, \forall$ links; in other words, for any link of conclusion $A\parr B$ (resp. $\forall XA$), let $\pi'$ be the $G$-net obtained by deleting the $\parr$ (resp. $\forall$) link. By induction hypothesis $\pi'$ is sequentializable, yielding a derivation of $\Gamma-\{A\parr B \}, A, B$ (resp. $\Gamma-\{\forall XA\}, A$), from which a derivation of $\Gamma$ can be obtained by a $\parr$-rule (resp. by a $\forall$-rule -  we are here supposing that $\Gamma, \forall XA$ is clean, so $X$ does not occur free in $\Gamma$).

\item[2.] If, after 1, there are $\exists$-links with no incoming jumps, eliminate them;  in other words, for any such link of conclusion $\exists XA$, let $\pi'$ be the $G$-net obtained by deleting the link. By induction hypothesis $\pi'$ is sequentializable, yielding a derivation of $\Gamma-\{\exists XA \}, A[B/X]$, for some formula $B$, from which a derivation of $\Gamma$ can be obtained by a $\exists$-rule.

\item[3.] After 2 all non-axiom links are either $\otimes$ or $\exists$ with incoming jumps. If there is none we are done. Otherwise $\pi_{m}$ has only $\otimes$-links, so one must be splitting, and by lemma \ref{33} it corresponds to a splitting $\otimes$ in $\pi$. 
By deleting this link we obtain two $G$-nets $\pi_{1}$, $\pi_{2}$ yielding, by induction hypothesis, two derivations of conclusions, respectively, $\Gamma_{1}, A$ and $\Gamma_{2}, B$, where $\Gamma=\Gamma_{1},\Gamma_{2},A\otimes B$. Now, a derivation of $\Gamma$ is obtained by a $\otimes$-rule.

\end{enumerate}

\end{proof}

\section{Proof of lemma \ref{ready}}\label{C}

To prove lemma \ref{ready} (the ``ready lemma'') we use the following facts, which can be easily established by looking at $\pi(\ell)$:
\begin{lemma}[$\bot^{\exists}$-moves]\label{moves}
($i.$)  \ If $A_{c}=\bot^{\exists}$ and $W(c)=B$ occurs in a cut-formula $B\otimes B^{\bot}$, then $c$ can be rewired on $B^{\bot}$.\\
($ii.$) \ If $A_{c}=\bot^{\exists}$ and $W(c)=B$, then $c$ can be rewired on any subformula of $B$.\\
($iii.$) \ If $A_{c}=\bot^{\exists}$ and $W(c)=X$ is the conclusion of an axiom link of conclusions $X,X^{\bot}$, then $c$ can be rewired on $X^{\bot}$.
\end{lemma}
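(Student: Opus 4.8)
The plan is to prove each of the three items as a local statement about the configuration of $\pi(\ell)$ around the $\bot^{\exists}$ co-edge $c=(\B X,\B X^{\bot})$, exploiting the fact (visible in fig.~\ref{compare}) that when $A_{c}=\bot^{\exists}=\exists X(X\otimes X^{\bot})$ the witness edge of $c$ behaves exactly like a lax thinning edge: the co-edge carries no content beyond a dual pair of occurrences of a single unknown variable, so after co-edge expansion the ground witness $GW(c)$ and its dual appear as the two premisses of the $\exists$-link joined by a trivial axiom link, and $W(c)$ merely records where this trivial link is attached. First I would record this structural fact and observe that, since $\bot^{\exists}$ carries no existential variable in its own right, relocating $c$'s witness only alters the outgoing arrows of $c$ in the dependency graph $D_{W}$, never introducing arrows to new co-edges.

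I would then treat $(iii)$ as the basic case. With $W(c)=X$ and an axiom edge $(X,X^{\bot})\in E$, the two candidate ground witnesses $X$ and $X^{\bot}$ instantiate the premiss of the $\exists$-link as $X\otimes X^{\bot}$ and $X^{\bot}\otimes X$; by the symmetry of $\bot^{\exists}$ in its two occurrences, inspecting $\pi(\ell)$ shows that the only effect of the move is to follow the axiom link $(X,X^{\bot})$, producing a correct $\ell'$. Item $(i)$ reduces to $(iii)$: a cut-formula $B\otimes B^{\bot}$ plays, for the purposes of the unit axiom link produced by co-edge expansion, exactly the role of an axiom link between $B$ and $B^{\bot}$, so the witness may be slid from $B$ to $B^{\bot}$ as before. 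For item $(ii)$ I would argue by induction on the subformula $B'$ of $B=W(c)$: each single descent through a $\otimes$ or $\parr$ node relocates the trivial thinning edge onto an immediate subformula, which by inspection of $\pi(\ell)$ is again a valid rewitnessing, and iterating reaches any subformula.

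The step I expect to be the main obstacle is checking that each relocation preserves \emph{correctness}, since $\sim$ is an equivalence on correct $\exists$-linkings and every intermediate configuration must remain one. Two conditions must be maintained: acyclicity of $D_{W}$, and that $\pi(\ell')$ is still a $G$-net. The first is the easy half, as moving a $\bot^{\exists}$ witness onto a cut-dual, an axiom-dual, or a subformula only removes outgoing arrows of $c$ or redirects them to co-edges already reached, so no dependency cycle can appear. The genuinely delicate half is the switching criterion for $\pi(\ell')$: one must verify that redirecting the trivial axiom link opens no switching cycle, which is precisely the content of the standard thinning-edge rewiring moves for $\mathsf{MLL}$ units. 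By the correspondence between $\bot^{\exists}$ witness edges and lax thinning edges, each of $(i)$–$(iii)$ matches one such move, and correctness is preserved. Once correctness is secured, $\ell\sim\ell'$ follows by realising the move through the basic rewitnessings of fig.~\ref{rewit}.
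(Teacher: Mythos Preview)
Your proposal is correct and takes essentially the same approach as the paper, which gives no argument beyond the single remark that the three facts ``can be easily established by looking at $\pi(\ell)$''. Your elaboration via the correspondence of fig.~\ref{compare} between $\bot^{\exists}$ witness edges and lax thinning edges is precisely the intended reading of that remark, and your case organisation (treating $(iii)$ as basic, reducing $(i)$ to it through the cut, and handling $(ii)$ by stepwise descent to subformulas) is a sound way to make the inspection explicit.
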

From lemma \ref{moves} we deduce:
\begin{proposition}\label{moves2}
If for all $c\in \Gamma^{\exists}$, $A_{c}=\bot^{\exists}$, then $\ell$ is equivalent to a ready $\exists$-linking.
\end{proposition}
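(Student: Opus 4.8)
The plan is to establish Proposition \ref{moves2} by induction on the number of co-edges $c$ whose witness $W(c)$ is \emph{not} yet in ``ready position'', using the three $\bot^{\exists}$-moves of Lemma \ref{moves} to progressively push all such witnesses onto positions that do not occur inside cut-formulas. Since by hypothesis every $c\in\Gamma^{\exists}$ satisfies $A_{c}=\bot^{\exists}$, all three clauses of Lemma \ref{moves} are available for every co-edge simultaneously, and moreover the dependency graph $D_{W}$ is trivial (no co-edge depends on another, since the unknown variable $\B X$ of a $\bot^{\exists}$-co-edge occurs only in its own cut-formula $\B X\otimes\B X^{\bot}$), so acyclicity is preserved under any such rewiring and correctness is never threatened.

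First I would fix a correct $\exists$-linking $\ell=(E,W)$ and say a co-edge $c$ is \emph{bad} if $W(c)$ is an occurrence of a formula inside some cut-formula $B\otimes B^{\bot}$ of $\Delta$; readiness is precisely the absence of bad co-edges. I would then argue that if $c$ is bad, say $W(c)=B$ with $B\otimes B^{\bot}$ a cut-formula, one of two situations obtains. Either there is a position outside all cut-formulas that $c$ can reach by a single move, in which case we are done for $c$; or, using clause ($ii.$), we may first descend $c$ to a variable leaf $X$ of the parse tree of $B$. That leaf $X$ is one endpoint of an edge of $E$ (since $E$ covers all non-existential variables), and in a correct linking over a cut sequent every such axiom edge is eventually connected, through a chain of cuts and axioms, to a leaf lying in the non-cut part $\Gamma$; walking along this chain and applying clauses ($i.$) and ($iii.$) alternately moves $c$ step by step until its witness lands on a formula occurrence in $\Gamma$, which is not a cut-formula. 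Thus any single bad co-edge can be made good by a finite sequence of simple rewitnessings.

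The remaining point is that fixing one bad co-edge does not create new ones: moving $c$'s witness edge only relocates that one arrow, it leaves $E$ and all other witness edges untouched, and the target we chose lies in $\Gamma$, so the set of bad co-edges strictly decreases. Iterating, after finitely many steps we obtain $\ell'\sim\ell$ with no bad co-edges, i.e.\ $W'^{-1}(A)=\emptyset$ for every $A$ occurring in a cut-formula, which is exactly readiness. Finally, Lemma \ref{ready} itself follows: an arbitrary correct $\ell$ need not have all $A_{c}=\bot^{\exists}$, but one can handle the general co-edges by the same bookkeeping, processing co-edges in an order compatible with $D_{W}$ (innermost first) so that when we move the witness of $c$ we may assume the witnesses of the co-edges $c$ depends on are already placed outside the cut region; the only genuinely new ingredient needed there beyond Proposition \ref{moves2} is that a witness edge attached to a \emph{non-}$\bot^{\exists}$ co-edge can likewise be slid off a cut-formula, which one reads off directly from the shape of $\pi(\ell)$ exactly as Lemma \ref{moves} was.

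The main obstacle I anticipate is the termination/bookkeeping argument for the chain-walking step: one must be sure that following the axiom–cut alternation from a leaf inside $B\otimes B^{\bot}$ genuinely terminates at a leaf of $\Gamma$ and does not loop, which relies on the acyclicity of the cut-graph of a correct linking (a consequence of correctness via $\pi(\ell)$ being a genuine $G$-net), together with care that the intermediate applications of clauses ($i.$) and ($iii.$) are legitimate simple rewitnessings — i.e.\ each intermediate target is ``free'' in the sense $W^{-1}(A)=\emptyset$, or else the move is a legitimate swap of two consecutive witness edges. Handling the case where an intermediate target is already occupied, by reordering the moves or by a swap, is the fiddly part; everything else is a routine induction.
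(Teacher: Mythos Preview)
Your argument for Proposition~\ref{moves2} itself is correct and follows essentially the same route as the paper: use the $\bot^{\exists}$-moves of Lemma~\ref{moves} to push a witness edge up to a leaf and then across an axiom link until it exits the cut region. The paper's justification for why this terminates is the one-line observation that some axiom must leave the tree of $B\otimes B^{\bot}$, since otherwise both $B$ and $B^{\bot}$ would be provable; your chain-walking formulation via acyclicity of the cut graph amounts to the same thing, just spelled out. One small slip: your claim that $D_{W}$ is trivial is false (nothing prevents $W(c)$ from pointing at a node $\B Y$ belonging to another co-edge), but you do not actually use this---correctness preservation is already part of the content of Lemma~\ref{moves}.

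Where you diverge from the paper is in the coda about deducing Lemma~\ref{ready} in general. You propose to process non-$\bot^{\exists}$ co-edges in $D_{W}$-order and slide their witnesses off cuts by an analogue of Lemma~\ref{moves} ``read off directly from the shape of $\pi(\ell)$''. The paper does \emph{not} do this: it instead adjoins, for each offending co-edge $c$ with $A_{c}\neq\bot^{\exists}$, a fresh $\bot^{\exists}$ conclusion (harmless by the isomorphism $\Gamma\simeq\Gamma\parr\bot^{\exists}$), performs a type-(2) swap followed by a type-(1) move so that $c$ now points at the new $\bot^{\exists}$ while the new co-edge inherits $c$'s old target, and then applies Proposition~\ref{moves2} to the resulting configuration where every witness that still sits inside a cut belongs to a $\bot^{\exists}$. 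This buys the paper the ability to reuse Lemma~\ref{moves} verbatim rather than having to establish its analogue for arbitrary Yoneda co-edges, which is the step your sketch leaves unproved.
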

\begin{proof}
For any cut formula $B\otimes B^{\bot}$, there is at least an axiom link going outside the tree of $B$ and $B^{\bot}$, otherwise both $B$ and $B^{\bot}$ would be provable. Hence, if $W(c)$ is in the tree of a cut formula $B\otimes B^{\bot}$, by lemma \ref{moves} it can be rewitnessed upwards so to pass through an axiom links moving outside the cut.

\end{proof}

\begin{figure}
\adjustbox{scale=0.8, center}{$
\begin{matrix}
\ell=\qquad
 \begin{tikzpicture}[baseline=-12ex]
\draw (0,0) ellipse (0.9cm and 0.5cm);

\node(p1) at (0,0) {$\pi_{1}$};
\node(p2) at (1.3,-1.2) {$\pi_{2}$};

\draw (1.3,-1.2) ellipse (0.6cm and 0.5cm);

\node(c) at (-0.5,-0.7) {$C$};
\node(a) at (0.5,-0.7) {$A$};

\node(p) at (1,-2.8) {$\parr$};

\node(x) at (2,-2.1) {$\B X$};

\node(t) at (2,-3.5) {$\otimes$};
\node(d) at (3.3,-2.4) {$D[\B X^{\bot}]$};

\node(aa) at (4,-2.8) {$\B Y$};
\node(aaa) at (5,-2.8) {$\B Y^{\bot}$};

\node(tt) at (4.5,-3.5) {$\otimes$};

\node(e) at (2,-4.2) {$\exists$};

\node(ee) at (4.5,-4.2) {$\bot^{\exists}$};

\draw[thick, bend right=15] (c) to (p);
\draw[thick] (t) to (p);
\draw[thick] (p) to (x);

\draw[thick] (t) to (e);
\draw[thick] (d) to (t);
\draw[thick] (aa) to (tt);
\draw[thick] (aaa) to (tt);
\draw[thick] (ee) to (tt);

\draw[very thick, dotted, orange] (x) to [bend left=55] (d);
\draw[very thick, dotted, orange, ->] (2.7,-1.7) to [bend right=55] (a);

\draw[very thick, dotted, violet] (aa) to [bend left=55] (aaa);
\draw[very thick, dotted, violet, ->] (4.5,-2.4) to [bend right=75] (d);

\end{tikzpicture}
& &
\ell'= \begin{tikzpicture}[baseline=-12ex]
\draw (0,0) ellipse (0.9cm and 0.5cm);

\node(p1) at (0,0) {$\pi_{1}$};
\node(p2) at (1.3,-1.2) {$\pi_{2}$};

\draw (1.3,-1.2) ellipse (0.6cm and 0.5cm);

\node(c) at (-0.5,-0.7) {$C$};
\node(a) at (0.5,-0.7) {$A$};

\node(p) at (1,-2.8) {$\parr$};

\node(x) at (2,-2.1) {$\B X$};

\node(t) at (2,-3.5) {$\otimes$};
\node(d) at (3.3,-2.4) {$D[\B X^{\bot}]$};

\node(aa) at (4,-2.8) {$\B Y$};
\node(aaa) at (5,-2.8) {$\B Y^{\bot}$};

\node(tt) at (4.5,-3.5) {$\otimes$};

\node(e) at (2,-4.2) {$\exists$};

\node(ee) at (4.5,-4.2) {$\bot^{\exists}$};

\draw[thick, bend right=15] (c) to (p);
\draw[thick] (t) to (p);
\draw[thick] (p) to (x);

\draw[thick] (t) to (e);
\draw[thick] (d) to (t);
\draw[thick] (aa) to (tt);
\draw[thick] (aaa) to (tt);
\draw[thick] (ee) to (tt);

\draw[very thick, dotted, orange] (x) to [bend left=55] (d);
\draw[very thick, dotted, orange, ->] (2.7,-1.7) to [bend left=75] (ee);

\draw[very thick, dotted, violet] (aa) to [bend left=55] (aaa);
\draw[very thick, dotted, violet, ->] (4.5,-2.4) to [bend right=55] (a);

\end{tikzpicture} 
\end{matrix}$}
\caption{From $\ell$ to $\ell'$ by two rewitnessing moves. }
\label{uff1}
\end{figure}
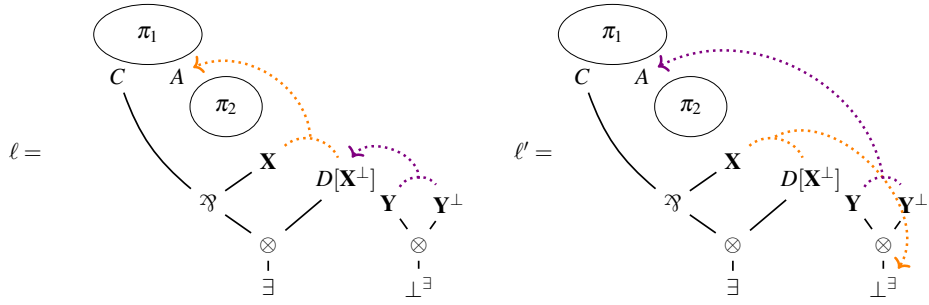

\begin{figure}
\adjustbox{scale=0.8, center}{$
\begin{matrix}
\pi(\ell)= \qquad \begin{tikzpicture}[baseline=-6ex]
\draw (0,0) ellipse (0.9cm and 0.5cm);

\node(p1) at (0,0) {$\pi_{1}$};
\node(p2) at (5.5,-2.1) {$\pi_{2}$};

\node(ax) at (5,-1.4) {$A$};

\draw (5.5,-2.1) ellipse (0.6cm and 0.5cm);

\node(c) at (-0.5,-0.7) {$C$};
\node(a) at (0.5,-0.7) {$A$};
\node(p) at (0,-1.4) {$\parr$};
\node(t) at (1,-2.1) {$\otimes$};
\node(d) at (2,-1.4) {$D[A^{\bot}]$};

\node(aa) at (3,-1.4) {$A$};
\node(aaa) at (4,-1.4) {$A^{\bot}$};
\node(aaaa) at (2,0) {$A^{\bot}$};

\draw[thick, bend left=55] (aaa) to (ax);

\node(tt) at (3.5,-2.1) {$\otimes$};

\node(e) at (1,-2.8) {$\exists$};

\node(ee) at (3.5,-2.8) {$\bot^{\exists}$};

\draw (2,-1.1) to (1.3,-0.3) to (2.7,-0.3) to (2,-1.1);

\draw[thick] (c) to (p);
\draw[thick] (a) to (p);
\draw[thick] (t) to (p);
\draw[thick] (t) to (e);
\draw[thick] (d) to (t);
\draw[thick] (aa) to (tt);
\draw[thick] (aaa) to (tt);
\draw[thick] (ee) to (tt);

\draw[thick, bend left=85] (aaaa) to (aa);

\end{tikzpicture}
&  &
\pi(\ell')=\qquad
\begin{tikzpicture}[baseline=-12ex]

\draw (0,0) ellipse (0.9cm and 0.5cm);

\draw (3.5,-0.5) ellipse (0.6cm and 0.5cm);

\node(p1) at (0,0) {$\pi_{1}$};
\node(p2) at (3.5,-0.5) {$\pi_{2}$};

\node(ax) at (3,0.2) {$A$};

\node(c) at (-0.5,-0.7) {$C$};
\node(a) at (0.5,-0.7) {$A$};

\node(aaaa) at (1.5,-0.7) {$A^{\bot}$};
\node(tt) at (1,-1.4) {$\otimes$};
\node(ee) at (1,-2.1) {$\bot^{\exists}$};

\node(p) at (0.25,-2.8) {$\parr$};
\node(t) at (1,-3.5) {$\otimes$};
\node(e) at (1,-4.2) {$\exists$};

\node(d) at (2,-2.8) {$D[\B 1^{\exists}]$};

\node(aa) at (2,-1.4) {$\B 1^{\exists}$};
\node(aaa) at (4,-2.8) {$\bot^{\exists}$};

\draw[thick, bend left=45] (aaaa) to (ax);

\draw (2,-2.5) to (1.3,-1.7) to (2.7,-1.7) to (2,-2.5);

\draw[thick, bend right=14] (c) to (p);
\draw[thick] (ee) to (p);
\draw[thick] (t) to (p);
\draw[thick] (t) to (e);
\draw[thick] (d) to (t);
\draw[thick] (a) to (tt);
\draw[thick] (aaaa) to (tt);
\draw[thick] (ee) to (tt);

\draw[thick, bend right=65] (aaa) to (aa);

\end{tikzpicture}
\end{matrix}$}
\caption{$\pi(\ell) $ and $\pi(\ell')$ are both correct.}
\label{uff}
\end{figure}
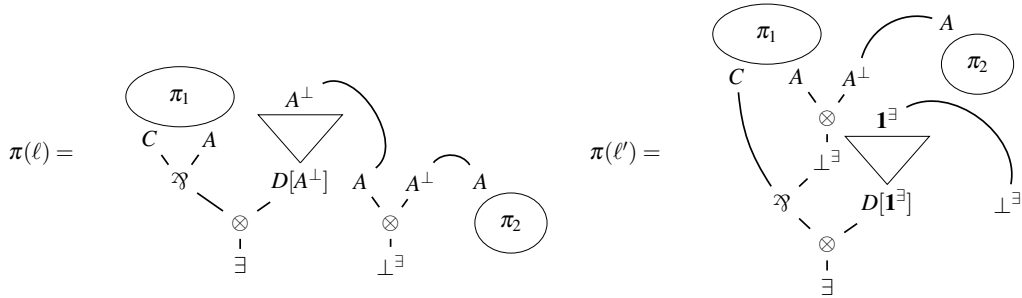

\begin{proof}[Proof of lemma \ref{ready}]
Given $\ell=(E,W)$, we will first construct an $\exists$-linking $\ell^{*}=(E,W^{*})$ such that $\ell\sim \ell^{*}$ and for all formula $A$ occurring in a cut, $(W^{*})^{-1}(A)$ is either empty of contains a formula of the form $\bot^{\exists}$. From this we can conclude then by applying proposition \ref{moves}.

Let $c=(\B X,\B X^{\bot})\in \Gamma^{\exists}$ be such that $A_{c}$ is not of the form $\bot^{\exists}$ and $W(c)=A$ occurs in a cut.
We can suppose that $W^{-1}(\B X^{\bot})$ contains $c'=(\B Y, \B Y')$ such that $A_{c'}=\bot^{\exists}$ is a conclusion of $\ell$ and such that $W^{-1}(\bot^{\exists})=\emptyset$: if it is not the case then we can add the formula $\bot^{\exists}$ to the conclusions of $\ell$ and set $W(c')=\B X^{\bot}$, as this preserves correctness and does not alter equivalence questions because of the isomorphism between the conclusions $\Gamma$ of $\ell$ and $\Gamma\parr \bot^{\exists}$. We let then $W'$ be like $W$ but for $W'(c)=\bot^{\exists}$ and $W'(c')=A$ (as illustrated in figure \ref{uff1}). $W'$ is obtained from $W$ by a rewitnessing move of type (2) (switching $W(c)$ and $W(c')$ so that $c$ is sent to $\B Y$ and $c'$ to $A$) and a rewitnessing move of type $(1)$ (moving $c$ from $\B Y$ to $\bot^{\exists}$). We must then show that $\ell'=(E,W')$ is correct, so that $\ell\sim \ell'$. This follows by remarking that the first rewitnessing move does not change $\pi(\ell)$ and that the second rewitnessing move transforms $\pi(\ell)$ into $\pi(\ell')$ (as illustrated in fig. \ref{uff}), preserving correctness, as it can be seen by inspecting paths in both graphs. By applying this operation to all co-edges $c$ such that $A_{c}\neq \bot^{\exists}$ we obtain the desired $\exists$-linking $\ell^{*}\sim \ell$.

\end{proof}

\end{document}